\documentclass[11pt]{article}
\usepackage[T1]{fontenc}
\usepackage[margin=1in]{geometry} 
\usepackage{libertineRoman}
\usepackage{biolinum}
\usepackage{libertineMono} 
\usepackage[utf8]{inputenc} 
\usepackage[normalem]{ulem} 
\usepackage{changepage} 

\usepackage{tikz}
\tikzset{ 
  facehighlight/.style={fill=gray,opacity=0.25},
    darkedge/.style={draw=black!50, line width=0.5pt}
}
\usepackage{amsmath,amssymb}
\usetikzlibrary{positioning,arrows.meta}
\usetikzlibrary{arrows.meta}
\usepackage{graphicx} 

\usepackage{cite}
\usepackage{pgfplots}
\pgfplotsset{compat=1.18}

\usepackage{algorithm}
\usepackage{algpseudocode}
\usepackage{amsfonts}
\usepackage{amsmath}
\usepackage{amssymb}
\usepackage{amsthm}
\usepackage{authblk}
\usepackage{bbm}
\usepackage{enumerate}
\usepackage{graphicx}
\usepackage{ifthen}
\usepackage{latexsym}
\usepackage{mathtools}
\usepackage{multicol}
\usepackage{mathpazo}

\usepackage{thmtools} 
\usepackage{tabto}
\usepackage[colorlinks=true,linkcolor=blue,anchorcolor=blue,citecolor=red,urlcolor=magenta]{hyperref}
\usepackage[capitalize]{cleveref} 

\crefname{appendix}{Appendix}{Appendices}
\Crefname{appendix}{Appendix}{Appendices}

\usepackage{color, xcolor}
\usepackage{stmaryrd} 
\usepackage{makeidx} 

\crefname{equation}{}{}

\newtheorem{theorem}{Theorem}

\newtheorem{definition}[theorem]{Definition}
\newtheorem{lemma}[theorem]{Lemma} 
\newtheorem{corollary}[theorem]{Corollary}
\newtheorem{remark}[theorem]{Remark}
\newtheorem{proposition}[theorem]{Proposition}

\newcommand{\tinyspace}{\mspace{1mu}}

\newcommand{\tr}{\operatorname{Tr}}
\newcommand{\rank}{\operatorname{rank}}

\newcommand{\im}{\operatorname{im}}  

\newcommand{\ket}[1]{| #1 \rangle}
\newcommand{\braket}[2]{\langle #1 | #2 \rangle}
\newcommand{\ketbra}[2]{| #1 \rangle\langle #2 |}

\newcommand{\norm}[1]{\left\lVert\tinyspace#1\tinyspace\right\rVert}
\newcommand{\abs}[1]{\left\lvert\tinyspace #1 \tinyspace\right\rvert}
\newcommand{\floor}[1]{\left\lfloor #1 \right\rfloor}
\newcommand{\ceil}[1]{\left\lceil #1 \right\rceil}

\renewcommand{\t}{{\scriptscriptstyle\mathsf{T}}}

\def\I{\mathbb{1}} 
 
\def\C{\mathbb{C}}

\def\R{\mathbb{R}}

\def\d{\mathbf{d}}
\def\p{\mathbf{p}}
\def\vv{\mathbf{v}}
\def\x{\mathbf{x}}
\def\y{\mathbf{y}}

\def\spn{\operatorname{span}}

\newcommand{\e}{\mathbf{e}}
\newcommand{\w}{\mathbf{w}}

\newcommand{\one}{\mathbf 1}
\newcommand{\supp}{\mathrm{supp}}
\newcommand{\vareps}{\varepsilon}

\title{
Sharp bounds for perfect quantum state classification beyond antidistinguishability  
}

\newcommand{\authemail}[1]{{\fontsize{7.5}{8.5}\selectfont\ttfamily #1}}
\author{Nathaniel Johnston\thanks{\scriptsize Department of Mathematics \& Computer
Science, Mount Allison University. \authemail{njohnston@mta.ca}}, \; 
Benjamin Lovitz\thanks{\scriptsize Department of Computer Science and Software
Engineering, Concordia University. \authemail{benjamin.lovitz@concordia.ca}}, \; 
Vincent Russo\thanks{\scriptsize Unitary Foundation and IonQ. \authemail{vincentrusso1@gmail.com}}, \; 
Jamie Sikora\thanks{\scriptsize Department of Computer Science, Virginia Tech.
\authemail{sikora@vt.edu}} 
} 

\date{September 16, 2026} 

\begin{document}

\maketitle 

\begin{abstract}
    A multiset of pure quantum states is said to be \emph{$k$-learnable} if there is a measurement strategy that always narrows an unknown sample drawn from the list down to one of at most $k$ candidates. The parameter $k$ interpolates between distinguishability and antidistinguishability, and provides a unified framework for partial state identification.
    We prove two universal, and optimal, Gram-matrix criteria for $k$-learnability: 
    a Frobenius-norm sufficient condition and an entrywise-$\ell_1$ necessary condition. 
    We apply them to derive explicit learnability and copy-complexity guarantees for several well-known sets of states including SIC-POVMs, mutually unbiased bases, and stabilizer states. 
    We further apply our results to zero-error mutation detection problems such as anomaly detection and changepoint detection.  
\end{abstract}

\section{Introduction} \label{sec:intro}

A central problem in quantum information is to extract classical information from quantum states. 
The most basic version asks: given a single copy of an unknown pure state drawn from a known finite multiset $S = \{\ket{\psi_1}, \ket{\psi_2}, \ldots, \ket{\psi_n}\}$, can a measurement be designed that reliably reveals which state was drawn? 
When such a measurement always succeeds in this task, we say that $S$ is \emph{distinguishable}; otherwise some pairs of states inevitably get confused for one another. 
Between the two extremes of perfect distinguishability and complete indistinguishability lies a rich landscape of partial-information tasks: for instance, one may ask not for the identity of the state but only to \emph{rule out} some candidates, leading to the notion of \emph{antidistinguishability}~\cite{caves2002conditions, pusey2012reality, bandyopadhyay2014conclusive, leifer2014ontology, heinosaari2018antidistinguishability, havlicek2020simple, russo2023inner, mishra2024optimal, johnston2023antidistinguishability, johnston2025complexity}. 
The concept was originally introduced by Caves, Fuchs, and Schack under the name \emph{post-Peierls incompatibility}~\cite{caves2002conditions}  and an equivalent formulation in terms of \emph{perfect quantum state exclusion} was given by Bandyopadhyay, Jain, Oppenheim, and Perry~\cite{bandyopadhyay2014conclusive}. 
It has also played a central role in the celebrated PBR theorem~\cite{pusey2012reality}. 

In this paper, we study a common refinement of these notions called \emph{$k$-learnability},  which we introduced in~\cite{johnston2025complexity}. 
Informally, $S$ is $k$-learnable if a measurement can be designed whose outcome always narrows the identity of the unknown state down to one of at most $k$ candidates. 
The case $k=1$ is equivalent to the task of distinguishability and the case $k=n-1$ is equivalent to the task of antidistinguishability. 
This task for intermediate values of $k$ is a discrete interpolation between the two. 
Perhaps surprisingly, the states need not be pairwise orthogonal to accomplish this task perfectly: there exist non-orthogonal pure states that are $k$-learnable for $k \geq 2$. 
Understanding the conditions under which a multiset of pure states is $k$-learnable is the focus of this work. 

\begin{definition}
Let $n \geq 2$ be a positive integer and let $k \in \{ 1, \ldots, n \}$. 
A multiset of pure quantum states $\{ \ket{\psi_1}, \ldots, \ket{\psi_n} \} \subset \mathbb{C}^d$  
is \emph{$k$-learnable} if there exists a POVM
\begin{equation}
\{ M_I : I \subseteq \{1, 2, \ldots, n\},\, |I| = k \},
\end{equation}
such that
\begin{equation} 
\langle \psi_i | M_I | \psi_i \rangle = 0
\qquad \text{whenever } i \notin I.
\end{equation} 
In other words, if $\ket{\psi_i}$ is measured with such a POVM, only the subsets of size $k$ containing $i$ could occur. 
\end{definition} 

\subsection{Main results} 

We now discuss our main results and their extensions, corollaries, and applications.   

\paragraph{Optimal bounds.} 
We study this new classification task by asking two complementary questions. 

\begin{quote}
\textit{What conditions on the states guarantee $k$-learnability? 
What conditions on the states forbid $k$-learnability?} 
\end{quote} 

To this end, we provide two complementary bounds (proofs in the appendix). 

\begin{theorem}\label{thm:main} 
    Let $n \geq 2$ be an integer, $k \in \{1,2,\ldots, n \}$, and let $S = \{\ket{\psi_1}, \ket{\psi_2}, \ldots, \ket{\psi_n} \}$. 
    If 
    \begin{equation}\label{eq:thm_main_frob}
        \sum_{i, j = 1}^{n} \abs{\braket{\psi_i}{\psi_j}}^2 \leq \frac{n^2}{n-k+1}, 
    \end{equation} 
    then $S$ is $k$-learnable. 
    In particular, if
    \begin{equation} 
        \label{eq:thm_main_coh}
        \abs{\braket{\psi_i}{\psi_j}} \leq \sqrt{\frac{k-1}{(n-1)(n-k+1)}} \quad \text{for all} \quad i \neq j 
    \end{equation}
    then $S$ is $k$-learnable.
\end{theorem}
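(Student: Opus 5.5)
The plan is to pass to the Gram matrix $G = (\braket{\psi_i}{\psi_j})_{i,j}$, turn $k$-learnability into a conic membership problem for $G$, and then prove a Frobenius-ball inclusion for that cone by duality. The coherence bound \eqref{eq:thm_main_coh} then follows from \eqref{eq:thm_main_frob} directly:
\[
\sum_{i,j}\abs{\braket{\psi_i}{\psi_j}}^2 \le n + n(n-1)\cdot\frac{k-1}{(n-1)(n-k+1)} = \frac{n^2}{n-k+1}.
\]
So everything rests on \eqref{eq:thm_main_frob}.

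\textbf{Step 1: reduction to a cone.} Write $\Psi = [\ket{\psi_1}\cdots\ket{\psi_n}]$, so $G=\Psi^*\Psi$. For $I$ a $k$-subset, let $P_I$ be the coordinate projection onto $I$.
\begin{itemize}
\item If a POVM $\{M_I\}$ witnesses $k$-learnability, set $A_I=\Psi^* M_I\Psi$. Each $A_I$ is positive semidefinite, and its $(i,i)$ entry vanishes for $i\notin I$; positivity then forces the whole row and column to vanish, so $A_I = P_I A_I P_I$. Summing gives $\sum_I A_I = G$.
\item Conversely, given such a decomposition, each range of $A_I$ lies in the range of $G$. Then $M_I = (\Psi^+)^* A_I \Psi^+$, plus a share of the projector onto $(\operatorname{range}\Psi)^\perp$, gives a valid POVM with the required zero pattern.
\end{itemize}
Hence $S$ is $k$-learnable iff $G$ lies in the cone $\mathcal{F}_k$ of sums of positive semidefinite matrices supported on $k\times k$ principal blocks (the factor-width-$k$ cone).

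\textbf{Step 2: duality.} The dual cone $\mathcal{F}_k^*$ consists of Hermitian $Z$ all of whose $k\times k$ principal submatrices are positive semidefinite. It therefore suffices to show $\tr(GZ)\ge 0$ for every such $Z$ whenever $\norm{G}_F^2\le n^2/(n-k+1)$; since $\mathcal{F}_k$ is closed, this gives membership. Write $G = \I + E$ with $E$ zero on the diagonal, so that
\[
\norm{E}_F^2 \le \frac{n^2}{n-k+1}-n = \frac{n(k-1)}{n-k+1}.
\]
Split $Z = D + Z_{\mathrm{off}}$ into diagonal and off-diagonal parts, so that
\[
\tr(GZ) = \tr D + \tr(E Z_{\mathrm{off}}).
\]
By Cauchy--Schwarz, it is enough to prove the key inequality
\[
\norm{Z_{\mathrm{off}}}_F^2 \le \frac{(n-k+1)}{n(k-1)}\,(\tr Z)^2 \qquad\text{for all } Z\in\mathcal{F}_k^* .
\]
(The case $k=1$ is trivial: there $G=\I$ is forced, and $\I\in\mathcal{F}_1$.)

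\textbf{Step 3: the key inequality (main obstacle).} The first tool is averaging over $k$-subsets. The identity $\sum_I P_I Z P_I = \binom{n-2}{k-2} Z + \binom{n-2}{k-1} D$, together with positivity of each compressed block, gives
\[
Z \succeq -\tfrac{n-k}{k-1}\, D .
\]
This controls the negative eigenvalues of $Z$, but the bound needed is on the Frobenius norm. I would first normalize $\tr Z = n$. I would then try to show the extremizers are permutation-symmetric, of the form $Z = aI + bJ$, perhaps after diagonal rescaling. For those, the block condition reads $a\ge 0$ and $a+kb\ge0$, and the inequality becomes an explicit one-variable check that should be tight at $a + kb = 0$.

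Making the reduction to the symmetric case rigorous is where I expect the difficulty. Symmetrization over permutations is not directly available in the inequality as stated, since $E$ is fixed. However, the key inequality concerns $Z$ alone, and both sides are convex/unitarily natural in a way that may allow averaging. Failing that, I would bound $\norm{Z_{\mathrm{off}}}_F^2$ by summing the $2\times 2$, or better $k\times k$, block constraints $\abs{z_{ij}}^2 \le z_{ii}z_{jj}$ and refining them with the averaged operator inequality above.
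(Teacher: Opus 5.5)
There is a genuine gap: the key inequality in Step~3 is false for every $k\geq 3$, so the argument cannot be completed. Your Steps~1--2 and the derivation of \eqref{eq:thm_main_coh} from \eqref{eq:thm_main_frob} are fine.

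To see the failure, take $Z=\one\one^\t$. It is positive semidefinite, hence in $\mathcal{F}_k^*$, with $\tr Z=n$ and $\norm{Z_{\mathrm{off}}}_{\textup{F}}^2=n(n-1)$. Your inequality would then require $(n-1)(k-1)\leq n-k+1$, i.e.\ $nk\leq 2n$, i.e.\ $k\leq 2$. The same failure occurs inside your symmetric family at $a=0$, $b>0$. It does not occur at $a+kb=0$, where you expected tightness and where the inequality in fact holds.

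The underlying problem is that once you bound $\abs{\tr(EZ_{\mathrm{off}})}\leq\norm{E}_{\textup{F}}\norm{Z_{\mathrm{off}}}_{\textup{F}}$, the argument no longer uses $G\succeq 0$. It would therefore equally show that every Hermitian $G=\I+E$ with zero-diagonal $E$ and $\norm{E}_{\textup{F}}^2\leq n(k-1)/(n-k+1)$ lies in $\mathcal{F}_k$. That is false. Take $G=(1+\gamma)\I-\gamma\one\one^\t$ with $\frac{1}{n-1}<\gamma\leq\sqrt{\frac{k-1}{(n-1)(n-k+1)}}$; this range is nonempty exactly when $k\geq 3$. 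This $G$ meets your bound, and even the coherence bound \eqref{eq:thm_main_coh}, but has the negative eigenvalue $1-(n-1)\gamma$. So no inequality about $Z$ alone, fed through Cauchy--Schwarz against $E$, can work; positivity of $G$ has to enter.

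For $k=2$ your route does go through and gives a pleasant elementary proof of that case:
\[
\norm{Z_{\mathrm{off}}}_{\textup{F}}^2\leq\sum_{i\neq j}z_{ii}z_{jj}=(\tr Z)^2-\sum_i z_{ii}^2\leq\tfrac{n-1}{n}(\tr Z)^2 .
\]

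The paper brings positivity in through the spectrum. Since the hypothesis depends only on $\tr G=n$ and $\norm{G}_{\textup{F}}$, it proves the stronger, unitarily invariant statement that every positive semidefinite $G$ with these properties is $k$-incoherent. By \cite[Theorem~3]{johnston2022absolutely}, it suffices that the eigenvalue vector $\p=\lambda(G/n)$ lie in the dual of the hyperbolicity cone $\Gamma_k^{(n)}=\{\d:e_1(\d),\ldots,e_k(\d)\geq 0\}$. Lemma~\ref{lem:main2} shows exactly that $\d\cdot\p\geq 0$ for all $\d\in\Gamma_k^{(n)}$ and all $\p\geq 0$ with $\sum_i p_i=1$ and $\norm{\p}_2^2\leq 1/r$.

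That lemma is proved by induction on $k$, splitting on the support of the minimizer $\p^\star$:
\begin{itemize}
\item If $\p^\star$ has a zero coordinate, one restricts to its support and uses that deleting a coordinate maps $\Gamma_k^{(n)}$ into $\Gamma_{k-1}^{(n-1)}$.
\item If $\p^\star$ has full support, KKT gives $\d=\tau\one-s\p^\star$. Convexity then reduces everything to showing $\one-r\p\notin\Gamma_k^{(n)}$ for full-support $\p$ with $\norm{\p}_2^2=1/r$ (Lemma~\ref{lem:boundary}). This follows from the identity $ke_k(\x)=-r\sum_i p_ix_i^2e_{k-3}(\x_{\widehat{i}})$.
\end{itemize}
The constraint $\p\geq 0$, i.e.\ $G\succeq 0$, is used essentially in both the support reduction and the sign argument. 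That is precisely what your Cauchy--Schwarz step throws away.
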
  

This theorem shows that if the inner products are not collectively \emph{large}, then we can guarantee $k$-learnability. 
Of significance, we do not require pairwise orthogonality when $k \geq 2$. The left-hand side of~\eqref{eq:thm_main_frob} is called the \emph{frame potential} of $S$~\cite{benedetto2003finite}. The case $S \subset \C^{n-k+1}$ is particularly interesting. In this case, it is known that the right-hand side of~\eqref{eq:thm_main_frob} is the smallest possible frame potential of any set of $n$ states~\cite{benedetto2003finite} (see also the \emph{Welch bound}~\cite{welch1974lower}).  In particular, our theorem shows that any set of $n$ pure states in  $\C^{n-k+1}$ attaining the minimum possible frame potential is $k$-learnable. 

On the other hand, if the states are \emph{too close together}, they cannot be $k$-learnable. 

\begin{theorem}\label{thm:m_excl_threshold}
    Let $n \geq 2$ be an integer, $k \in \{1, 2, \ldots, n\}$, and let $S = \{\ket{\psi_1}, \ket{\psi_2}, \ldots, \ket{\psi_n} \}$. 
    If
    \begin{equation}\label{eq:inner-product-non-mexcl-threshold}
        \sum_{i,j = 1}^{n} \abs{\braket{\psi_i}{\psi_j}} > nk,
    \end{equation}
    then $S$ is not $k$-learnable. 
    In particular, if
    \begin{equation}
        \abs{\braket{\psi_i}{\psi_j}} > \frac{k-1}{n-1} \quad \text{for all} \quad i \neq j
    \end{equation}
    then $S$ is not $k$-learnable. 
\end{theorem}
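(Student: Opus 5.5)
We argue by contrapositive: assuming $S$ is $k$-learnable, we show $\sum_{i,j}\abs{\braket{\psi_i}{\psi_j}} \leq nk$. Fix a POVM $\{M_I\}$ as in the definition. Since $M_I \succeq 0$ and $\langle \psi_i | M_I | \psi_i\rangle = 0$ for $i\notin I$, we get $M_I\ket{\psi_i}=0$ (indeed $\sqrt{M_I}\ket{\psi_i}=0$) whenever $i \notin I$. So for every pair $i,j$,
\begin{equation*}
\braket{\psi_i}{\psi_j} = \sum_{I} \langle \psi_i | M_I | \psi_j\rangle = \sum_{I \ni i,\, j} \langle \psi_i | M_I | \psi_j\rangle ,
\end{equation*}
because every term with $i\notin I$ or $j\notin I$ vanishes.

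The key estimate is Cauchy--Schwarz for the positive semidefinite form $M_I$:
\begin{equation*}
\abs{\langle \psi_i | M_I | \psi_j\rangle} \leq \sqrt{a_{iI}\, a_{jI}} \leq \tfrac{1}{2}\left(a_{iI} + a_{jI}\right), \qquad a_{iI} := \langle \psi_i | M_I | \psi_i\rangle \geq 0 .
\end{equation*}
Summing over $i,j$ and using the support restriction gives
\begin{equation*}
\sum_{i,j}\abs{\braket{\psi_i}{\psi_j}} \leq \sum_I \sum_{i,j\in I} \sqrt{a_{iI} a_{jI}} = \sum_I \Bigl(\sum_{i \in I}\sqrt{a_{iI}}\Bigr)^2 \leq \sum_I k \sum_{i\in I} a_{iI},
\end{equation*}
where the last step is Cauchy--Schwarz over the $k$ elements of $I$. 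Finally, $\sum_I\sum_{i\in I} a_{iI} = \sum_i \sum_I a_{iI} = \sum_i \braket{\psi_i}{\psi_i} = n$, since $\sum_I M_I = \I$ and $a_{iI} = 0$ for $i \notin I$. This yields the bound $nk$, which contradicts~\eqref{eq:inner-product-non-mexcl-threshold}. The step needing care is the collapse to $(\sum_{i\in I}\sqrt{a_{iI}})^2$. Bounding by the arithmetic mean instead gives $\sum_I k\sum_{i\in I}a_{iI}$ directly, so either route works. The only real insight is to use the per-block support restriction before applying Cauchy--Schwarz.

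For the ``in particular'' statement: the diagonal terms contribute $n$, so if $\abs{\braket{\psi_i}{\psi_j}} > \frac{k-1}{n-1}$ for all $i\neq j$, then
\begin{equation*}
\sum_{i,j}\abs{\braket{\psi_i}{\psi_j}} > n + n(n-1)\cdot\frac{k-1}{n-1} = nk ,
\end{equation*}
and the first part applies.
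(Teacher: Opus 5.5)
Your proof is correct. It is a more elementary, self-contained version of the mechanism the paper uses.

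The paper first passes to the Gram matrix via Proposition~\ref{prop:klearnable} ($S$ is $k$-learnable iff $G$ is $k$-incoherent). It then exhibits one explicit witness $Y = k\I - E$, where $E$ records the phases of the entries of $G$. Every $k\times k$ principal submatrix of $Y$ is diagonally dominant, so $Y$ is $k$-locally positive semidefinite. By the duality result cited from \cite{blekherman2022hyperbolic,johnston2022absolutely}, $Y$ therefore lies in the dual cone of the $k$-incoherent matrices, which gives $0 \le \tr(YG) = nk - \sum_{i,j}\abs{G_{i,j}}$.

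You work directly with the POVM instead. The matrices $G^{(I)}_{i,j} := \langle\psi_i|M_I|\psi_j\rangle$ form a $k$-incoherent decomposition of $G$, so you are in effect reproving the easy direction of Proposition~\ref{prop:klearnable} along the way. Your per-block estimate $\sum_{i,j\in I}\abs{G^{(I)}_{i,j}} \le k\tr G^{(I)}$, obtained by Cauchy--Schwarz, plays the role of the diagonal dominance of $Y$ on each $k\times k$ block. The only other difference is that you apply the triangle inequality before bounding each block, rather than fixing a single global phase pattern $E$ in advance.

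What your route buys is that it needs neither the Gram-matrix characterization nor the cone-duality theorem. What the paper's route buys is a single explicit linear certificate $Y$ that fits the dual-cone framework also used to prove Theorem~\ref{thm:main}. That certificate visibly applies to any $k$-incoherent matrix without reference to a measurement, although your argument extends just as easily if the POVM blocks are replaced by an arbitrary $k$-incoherent decomposition.
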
 

Together, the above two bounds give explicit criteria for when pure states must be or cannot be $k$-learnable.  

\textit{A remark about multisets.}  
While we sometimes describe the task of state discrimination as ``determining which state was sent'', a more accurate description is to ``determine the index of the state that was sent.'' 
This is because we allow multiple copies of a single state, or those differing by only a global phase, to be within the multiset. 
For example, for the multiset $\{ \ket{\psi_1} = \ket{0}, \ket{\psi_2} = \ket{0} \}$, determining which state was sent is easy, but determining the index of the state that was sent is hard.  

\textit{A remark about the dimensions of the pure states.} 
We note that sometimes we specify the dimension of the pure states that we wish to $k$-learn and sometimes it is omitted, like in Theorem~\ref{thm:m_excl_threshold}. 
Many of our results rely only on the Gram matrix $G$ of the multiset of states which is defined as 
\begin{equation} 
G_{i,j} = \braket{\psi_i}{\psi_j}. 
\end{equation}  
When a multiset of states is described only by its Gram matrix, this does not fully specify the states (although they are guaranteed to exist).  
We note that the rank of the Gram matrix is equal to the minimum dimension of any collection of vectors that have that Gram matrix, so there is an implicit bound on the dimension required to realize such states. 

\paragraph{Simplex states and $2$-learnability.} 

Since perfect distinguishability is a strict condition (pairwise orthogonality), the next best scenario is being $2$-learnable. 
In other words, one can correctly identify the state with $2$ guesses, without error. 
It turns out that the case of $k=2$ is interesting in light of the two previously mentioned results. 
From Theorem~\ref{thm:main}, we have that $S = \{\ket{\psi_1}, \ket{\psi_2}, \ldots, \ket{\psi_n} \}$ is $2$-learnable when 
\begin{equation} 
| \braket{\psi_i}{\psi_j} |\leq \frac{1}{n-1}
\end{equation} 
for all $i \neq j$ 
and from Theorem~\ref{thm:m_excl_threshold}, we have that $S$ cannot be $2$-learnable when 
\begin{equation}
| \braket{\psi_i}{\psi_j}| > \frac{1}{n-1}
\end{equation} 
for all $i \neq j$. 
Thus, when $|\braket{\psi_i}{\psi_j}| = \frac{1}{n-1}$ for $i \neq j$ we have a multiset of states that is $2$-learnable, but just barely. 

Consider the multiset of states $S = \{\ket{\psi_1}, \ket{\psi_2}, \ldots, \ket{\psi_n} \} \subseteq \C^{n-1}$ which satisfies 
\begin{equation} 
\braket{\psi_i}{\psi_j} = - \frac{1}{n-1} \quad \text{for all} \quad i \neq j 
\end{equation}  
called the \emph{simplex states}.  
Note that we explicitly mentioned the dimension as $n-1$ since the Gram matrix has rank $n-1$. 
Thus, we have $n$ states in $n-1$ dimensions that are not even linearly independent but still \emph{spread out enough} to be $2$-learnable.  

The simplex states saturate all the inequalities in both of Theorems~\ref{thm:main} and \ref{thm:m_excl_threshold} with equality. 
If \emph{any} inner product were to increase in magnitude (by possibly increasing the dimension), then they would cease to be $2$-learnable, but if \emph{any} inner product were to decrease in magnitude then $2$-learnability is still guaranteed. 

Simplex states generalize the \emph{trine states} when $n=3$ and the \emph{tetrahedral states} when $n=4$, both depicted below.  

\begin{figure}[!htpb]
\centering
\begin{tikzpicture}[
    vector/.style={line width=0.9pt},
    frame/.style={gray!45, line width=0.4pt},
    edge/.style={gray, line width=0.5pt, opacity=0.45},
    face/.style={gray, opacity=0.12},
    hidden/.style={gray!40, line width=0.5pt, dash pattern=on 2pt off 1.6pt},
    angarc/.style={gray!65, line width=0.5pt},
    dot/.style={circle, fill=black, inner sep=0pt, minimum size=3.4pt},
    slabel/.style={font=\small, inner sep=2pt},
    axlabel/.style={font=\footnotesize, color=gray!70, inner sep=2pt},
    arclabel/.style={font=\footnotesize, color=gray!70, inner sep=1.5pt, fill=white},
    panel/.style={font=\small\bfseries, anchor=west}]

\begin{scope}
    \draw[frame] (0,0) circle (2.200);
    \draw[frame] (-2.550,0) -- (3.050,0);
    \draw[frame] (0,-2.550) -- (0,2.800); 
    \fill[face] (2.200,0.000) -- (-1.100,1.905) -- (-1.100,-1.905) -- cycle;
    \draw[edge] (2.200,0.000) -- (-1.100,1.905) -- (-1.100,-1.905) -- cycle;
    \node[dot] at (2.200,0.000) {};
    \node[slabel, anchor=south west] at (2.300,0.000) {$\ket{\psi_1}$};
    \node[dot] at (-1.100,1.905) {};
    \node[slabel, anchor=south east] at (-1.150,1.992) {$\ket{\psi_2}$};
    \node[dot] at (-1.100,-1.905) {};
    \node[slabel, anchor=north east] at (-1.150,-1.992) {$\ket{\psi_3}$}; 
\end{scope}

\begin{scope}[xshift=8.3cm,yshift=-0.35cm,scale=1.1] 
    \draw[hidden] (-1.333,-0.085) -- (2.043,-0.545);
    \node[dot] at (-0.000,2.040) {};
    \node[dot] at (-1.333,-0.085) {};
    \node[dot] at (2.043,-0.545) {}; 
    \node[dot] at (-0.709,-1.410) {};
    \node[slabel, anchor=south] at (-0.000,2.140) {$\ket{\psi_1}$};
    \node[slabel, anchor=east] at (-1.433,-0.091) {$\ket{\psi_2}$};
    \node[slabel, anchor=west] at (2.139,-0.571) {$\ket{\psi_3}$};
    \node[slabel, anchor=north] at (-0.754,-1.499) {$\ket{\psi_4}$};

    \fill[face] (-0.000,2.040) -- (2.043,-0.545) -- (-0.709,-1.410) -- cycle;
    \fill[facehighlight] (-0.000,2.040) -- (-1.333,-0.085) -- (-0.709,-1.410) -- cycle;
    \draw[edge] (-0.000,2.040) -- (-1.333,-0.085);
    \draw[edge] (-0.000,2.040) -- (2.043,-0.545);
    \draw[darkedge] (-0.000,2.040) -- (-0.709,-1.410);
    \draw[edge] (-1.333,-0.085) -- (-0.709,-1.410);
    \draw[edge] (2.043,-0.545) -- (-0.709,-1.410); 
\end{scope}
\end{tikzpicture}

\caption{Simplex states for $n=3$ (left) and $n=4$ (right), drawn as unit vectors in $\R^{n-1}$. 
Explicitly, $\ket{\psi_1}=\ket{0}$ and
$\ket{\psi_{2,3}}=-\frac{1}{2}\ket{0}\pm\frac{\sqrt{3}}{2}\ket{1}$ when $n = 3$, and
$\ket{\psi_i}=\frac{1}{\sqrt{3}}\left(\pm\ket{0}\pm\ket{1}\pm\ket{2}\right)$ with an
even number of minus signs when $n = 4$. 
Both multisets are $2$-learnable and saturate
Theorems~\ref{thm:main} and~\ref{thm:m_excl_threshold} at $k=2$.}
\label{fig:simplex-states}
\label{fig:trine-tetrahedral}
\end{figure} 

We now state a result about $2$-learnability which is proved in Appendix~\ref{app:prop4}. 

\begin{proposition}\label{prop:row_regular_2_learning}
Let $n \geq 2$ be an integer and let $S=\big\{\ket{\psi_1}, \ket{\psi_2}, \ldots, \ket{\psi_n}\big\}$. 
For each $i \in \{1, 2,\ldots, n\}$, define
\begin{equation}
R_i
:=
\sum_{j=1}^n
\abs{\braket{\psi_i}{\psi_j}}.
\end{equation}
The following implications hold:
\begin{equation}
R_i \leq 2 \quad \text{for all $i$}\quad \Longrightarrow \quad\text{$S$ is $2$-learnable}\quad \Longrightarrow \quad \sum_{i=1}^n R_i \leq 2n.
\end{equation}
In particular, if $R:=R_i$ does not depend on $i$, then $S$ is $2$-learnable if and only if 
$
R\leq 2.
$
\end{proposition}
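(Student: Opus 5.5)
The second implication follows directly from Theorem~\ref{thm:m_excl_threshold} with $k=2$. Since
\[
\sum_{i,j=1}^n \abs{\braket{\psi_i}{\psi_j}} = \sum_{i=1}^n R_i ,
\]
any $S$ with $\sum_i R_i > 2n$ is not $2$-learnable; taking the contrapositive gives $2$-learnability $\Rightarrow \sum_i R_i \le 2n$. The final ``in particular'' clause then follows immediately. If $R_i = R$ for every $i$, the first implication gives $R\le 2 \Rightarrow$ $2$-learnable. Conversely, $2$-learnable gives $nR \le 2n$, i.e.\ $R \le 2$. So the real work is the first implication.

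\textbf{Reduction to a Gram-matrix decomposition.} Let $\Phi = [\ket{\psi_1},\ldots,\ket{\psi_n}]$, so that $G=\Phi^*\Phi$. Suppose $G$ can be written as
\[
G=\sum_{i<j} B_{ij},
\]
where each $B_{ij}\succeq 0$ is supported on the rows and columns indexed by $\{i,j\}$. Define
\[
M_{\{i,j\}} := \Phi G^{+} B_{ij} G^{+} \Phi^*\succeq 0 .
\]
\begin{itemize}
\item \emph{Completeness on the span.} Summing gives $\sum_{i<j} M_{\{i,j\}} = \Phi G^+ G G^+\Phi^* = \Phi G^+\Phi^*$. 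This is the orthogonal projector onto $\spn\{\ket{\psi_l}\}$, since $\Phi G^+\Phi^*$ is Hermitian and idempotent with the same range as $\Phi$. Adding the complementary projector to any one outcome yields a valid POVM.
\item \emph{The zero condition.} We have $\bra{\psi_l} M_{\{i,j\}}\ket{\psi_l} = (G G^+ B_{ij} G^+ G)_{ll}$. The range of $B_{ij}$ lies in the range of $G$: if $Gx=0$, then $0=x^*Gx=\sum x^*B_{ij}x$, so $B_{ij}x=0$ by positivity. Hence $GG^+B_{ij}G^+G = B_{ij}$, and the quantity equals $(B_{ij})_{ll}=0$ whenever $l\notin\{i,j\}$.
\end{itemize}
So it suffices to show that $G$ has ``factor width $2$'' in this sense.

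\textbf{Constructing the decomposition.} The hypothesis $R_i\le 2$ with $G_{ii}=1$ says exactly that $G$ is diagonally dominant: $\sum_{j\ne i}\abs{G_{ij}}\le 1$. For each pair $i<j$, let $B_{ij}$ be the matrix with $2\times 2$ block
\[
\begin{pmatrix}\abs{G_{ij}} & G_{ij}\\ \overline{G_{ij}} & \abs{G_{ij}}\end{pmatrix}
\]
on $\{i,j\}$ and zeros elsewhere. This block is PSD, since its determinant is $0$ and its trace is nonnegative. The sum of these matrices reproduces every off-diagonal entry of $G$, and its $(i,i)$ entry is $R_i - 1\le 1$. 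The leftover diagonal amounts $1-(R_i-1)=2-R_i\ge 0$ can be added to $B_{ij}$ for any fixed partner $j\ne i$ (this uses $n\ge 2$), which preserves positivity. This gives the required decomposition and hence a $2$-learning POVM.

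The main obstacle I expect is the reduction step: one must check carefully that the support/range argument gives $GG^+B_{ij}G^+G=B_{ij}$, and that the measurement is complete on the whole space $\C^d$ rather than only on the span. The decomposition step itself is elementary once the diagonal-dominance reformulation is seen.
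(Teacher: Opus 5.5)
Your proof is correct and is essentially the paper's argument. The first implication uses the same diagonal-dominance decomposition of $G$ into $2\times 2$ blocks with entries $\abs{G_{ij}}, G_{ij}, \overline{G_{ij}}, \abs{G_{ij}}$ plus the nonnegative diagonal remainders $2-R_i$, and the second implication is Theorem~\ref{thm:m_excl_threshold} with $k=2$. The only difference is that the paper simply cites Proposition~\ref{prop:klearnable}, whereas you re-derive the needed direction of it through the explicit POVM $\Phi G^{+}B_{ij}G^{+}\Phi^*$. That construction is valid, and the complementary projector causes no trouble because every $\ket{\psi_l}$ lies in the span of the states.
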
   

In many examples of pure states that we consider, we have $R_1 = \cdots = R_n$ and thus the above proposition pins down exactly their $2$-learnability. 

\paragraph{Sharpness of the bounds in Theorems~\ref{thm:main} and \ref{thm:m_excl_threshold}  (Sections~\ref{sec:simplex} and~\ref{sec:sharp}).} 

While Theorems~\ref{thm:main} and \ref{thm:m_excl_threshold} are true, one may ask if these bounds are too aggressive, that is, if the right-hand side of the inequalities can be improved.  
We now discuss the optimality of these bounds.

Inequality~\eqref{eq:thm_main_frob} in  Theorem~\ref{thm:main} is saturated by an explicit family of rank-$(n-k+1)$ Fourier-type Gram matrices (see Section~\ref{sec:sharp}), and we exhibit an explicit $\vareps$-deformation of this family that ceases to be $k$-learnable for all $\vareps \in (0,1)$. 
The optimality of Inequality~\eqref{eq:thm_main_coh} is left as an open problem. 

Both bounds in Theorem~\ref{thm:m_excl_threshold} are optimal which can be seen by studying pure states with equal pairwise overlap. 
This parameterized multiset of pure states is discussed in detail in Section~\ref{sec:simplex} as well as why they are optimal for Theorem~\ref{thm:m_excl_threshold}.

\paragraph{Packing bounds (Section~\ref{sec:packing}).}  
Given that this new perfect classification task is closely tied to how close together or far apart the vectors are, this raises the question: 

\begin{quote} 
\textit{Given $S = \{\ket{\psi_1}, \ket{\psi_2}, \ldots, \ket{\psi_n} \} \subset \C^d$ which is $k$-learnable, what are the relationships\\ between the parameters $n$, $k$, and $d$? 
}
\end{quote} 

Such a bound can take the form of a \emph{packing inequality}; mathematically, it describes how many unit vectors forming a $k$-learnable multiset can be packed into a $d$-dimensional space. We show that 
\begin{equation} 
    n \le dk 
\end{equation} 
must hold in general. 
Moreover, this bound can be improved to 
\begin{equation}
    n \le \frac{d(k+1)}{2}
\end{equation} 
when no two states are global phases of one another. We prove that both of these bounds are tight via matching constructions.

\paragraph{Copy complexity (Section~\ref{sec:consequences}).}  

Intuitively, if instead of being given $\ket{\psi_i}$ to classify one is given $c$ copies of it, i.e.,  $\ket{\psi_i}^{\otimes c}$, then this makes the task of classification easier. 
Mathematically, taking additional copies decreases the magnitude of every overlap that lies strictly between zero and one.
In particular, we ask: 
\begin{quote} 
\textit{Given $S = \{\ket{\psi_1}, \ket{\psi_2}, \ldots, \ket{\psi_n} \}$ such that $\{\ket{\psi_1}^{\otimes c}, \ket{\psi_2}^{\otimes c}, \ldots, \ket{\psi_n}^{\otimes c} \}$ is $k$-learnable, what are the relationships between the parameters $n$, $k$, $c$, and $S$?   
}
\end{quote} 

One way to interpret this question comes from the \emph{copy complexity} perspective; what is the minimum $c$ (if it exists) such that a multiset $S$ becomes $k$-learnable?
We use Theorem~\ref{thm:main} and Proposition~\ref{prop:row_regular_2_learning} to bound the copy complexity of several well-known sets of states, notably:

\begin{theorem}[Informal]
    For prime qudit dimension $d$, the copy complexity of $2$-learning the set of $m$-qu$d$it stabilizer states is ${\Theta}(m)$.
\end{theorem}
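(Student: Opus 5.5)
The plan is to reduce everything to the row sums of Proposition~\ref{prop:row_regular_2_learning} applied to the $c$-copy Gram matrix. The Gram matrix of $\{\ket{s}^{\otimes c}\}$ has entries $\braket{s}{t}^c$. The Clifford group acts transitively on $m$-qudit stabilizer states and preserves all overlaps, so the row sum
\begin{equation*}
R(c) := \sum_{t} \abs{\braket{s}{t}}^{c}
\end{equation*}
does not depend on $s$. By Proposition~\ref{prop:row_regular_2_learning}, the $c$-copy set is $2$-learnable if and only if $R(c) \le 2$. The copy complexity is therefore the least $c$ with $R(c)\le 2$, and it remains to estimate $R(c)$.

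For prime $d$, I will use the standard fact that for stabilizer states $\abs{\braket{s}{t}}^2 \in \{0\} \cup \{d^{-r} : r = 0,1,\ldots,m\}$, where $r$ is the number of generators by which the two stabilizer groups differ. Writing $N_r$ for the number of stabilizer states $t$ with $\abs{\braket{s}{t}}^2 = d^{-r}$, we have $N_0 = 1$ and
\begin{equation*}
R(c) = 1 + \sum_{r=1}^{m} N_r\, d^{-rc/2}.
\end{equation*}

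\textbf{Lower bound.} Take $s = \ket{0}^{\otimes m}$. Every state of the form $d^{-m/2}\sum_{x\in\mathbb{F}_d^m} \omega^{q(x)}\ket{x}$, with $q$ a quadratic polynomial over $\mathbb{F}_d$ (for $d=2$, the usual phase-state family), is a stabilizer state with overlap exactly $d^{-m/2}$ with $s$. There are at least $d^{m(m+1)/2}$ such states. Then $R(c)\le 2$ forces $d^{m(m+1)/2}\, d^{-mc/2} \le 1$, that is, $c \ge m+1$. This gives $\Omega(m)$.

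\textbf{Upper bound.} Here I need $N_r \le d^{O(rm)}$ rather than the trivial bound $N_r \le (\text{total number of stabilizer states}) \approx d^{m(m+3)/2}$. The trivial bound only yields $c = O(m^2)$.

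To count states at level $r$, I will use the structure of the stabilizer groups. A state $t$ with $\abs{\braket{s}{t}}^2 = d^{-r}$ has a stabilizer group that shares, including signs/phases, an $(m-r)$-dimensional isotropic subspace with the stabilizer group of $s$. The state $t$ is then determined by two choices:
\begin{itemize}
\item[(a)] the shared subspace, which is an $(m-r)$-dimensional subspace of $\mathbb{F}_d^m$, giving at most $O(d^{r(m-r)})$ choices;
\item[(b)] a stabilizer state on the remaining effectively $r$-qudit symplectic quotient, giving at most $d^{O(r^2)}$ choices. Its generators may also pick up components along $s$'s group, which costs at most another factor $d^{O(rm)}$.
\end{itemize}
Hence $N_r \le d^{\alpha r m}$ for an absolute constant $\alpha$. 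Substituting,
\begin{equation*}
R(c) - 1 \le \sum_{r\ge1} d^{\,r(\alpha m - c/2)} \le 1
\end{equation*}
as soon as $c \ge 2\alpha m + 2$. This gives $O(m)$.

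The main obstacle is this counting step. The bound on $N_r$ must be linear in $m$ in the exponent uniformly in $r$, which requires care with the symplectic bookkeeping. It also requires handling $d=2$ separately, because of the phase/sign conventions for qubit stabilizers. If the direct count proves messy, my fallback is to bound $N_r$ by the number of stabilizer states whose group contains a given $(m-r)$-dimensional subgroup, summed over subgroups, using the known product formula for the number of isotropic subspaces.
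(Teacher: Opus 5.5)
Your proof is correct. Its first step is the same as the paper's: Clifford transitivity makes $R(c)$ independent of $s$, and Proposition~\ref{prop:row_regular_2_learning} reduces $(c,2)$-learnability to $R(c)\le 2$. You then estimate $R(c)$ by a genuinely different route. The paper never counts states at a given overlap level. It starts from the known stabilizer frame potentials, i.e.\ $\sum_t\abs{\braket{s}{t}}^{2c}$ for every integer $c$ (Proposition~\ref{prop:stab_gram}). It combines these with the fact that nonzero squared overlaps are powers $d^{-r}$ and a polynomial-interpolation argument to get $\sum_r N_r x^r=\prod_{j=0}^{m-1}(1+d^{j+2}x)$. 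This gives the exact value $R(c)=\prod_{j=0}^{m-1}(1+d^{j+2-c/2})$ (Proposition~\ref{prop:stab_absolute_moment}) and hence the exact $c_2=2m+\Theta(1)$ of Corollary~\ref{cor:stab_qudit_exact}. Your symplectic counting avoids the frame-potential input and is self-contained, but as written it only yields $m+1\le c_2\le 2\alpha m+2$.

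The counting step you flag as the main obstacle is easier than you fear. $W=L_s\cap L_t$ is one of at most $d^{mr}$ subspaces of $L_s$. $L_t$ is spanned by $W$ and $r$ further vectors of $\mathbb{F}_d^{2m}$, so there are at most $d^{2mr}$ choices. Each Lagrangian carries $d^m$ stabilizer states. Hence $N_r\le d^{4mr}$ for all $r\ge 1$, with no special treatment of $d=2$.

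For the lower bound, the binding level is $r=1$, not $r=m$. Since $N_1=d^2(d^m-1)/(d-1)\ge d^{m+1}$, the condition $R(c)\le 2$ forces $c\ge 2m+2$, which already has the correct leading constant.

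Done exactly, your count gives $N_r=\binom{m}{r}_d\, d^{r(r+1)/2}\, d^{r}$:
\begin{itemize}
\item a Gaussian binomial counts the choices of $W$;
\item symmetric $r\times r$ matrices count the Lagrangians of $W^\perp/W$ transverse to $L_s/W$;
\item exactly $d^r$ of the $d^m$ orthonormal states with Lagrangian $L_t$ overlap $s$, since each nonzero overlap is $d^{-r}$.
\end{itemize}
The $q$-binomial theorem turns this into the paper's product formula, so your route can recover the exact answer without the frame potentials.
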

See Corollary~\ref{cor:stab_qudit_exact} for a more precise statement. By using $m$ additional copies, one can learn an unknown $m$-qudit stabilizer state with failure probability at most $d^{-m}$ (see Remark~\ref{rmk:stabilizer_learning} for more details), resulting in an optimal $O(m)$ copy complexity for learning stabilizer states. This reproduces the information-theoretic $O(m)$ copy complexity obtained in~\cite{aaronson2008identifying,montanaro2017learning,allcock2025reconquering,CCLW26}, with the caveat that our result only applies to prime-dimensional qudits.  
In contrast to these works, our 2-learning measurement may not be efficiently implementable. Despite this drawback, an advantage to our result is that the 2-learning approach additionally yields a distinguished pair of stabilizer states guaranteed to contain the unknown state.

More generally, we upper bound the copy complexity for $k$-learning stabilizer states (Corollary~\ref{cor:stab}). We also give upper bounds on the copy complexity of $k$-learning arbitrary sets of states (Corollary~\ref{cor:copy_complexity}), SIC-POVMs (Corollary~\ref{cor:SIC_POVM}), and mutually unbiased bases (MUBs, Corollary~\ref{cor:MUB}). In the case of $2$-learning, we determine the copy complexity for SIC-POVMs and MUBs exactly (Corollaries ~\ref{cor:SIC_exact_2learning} and~\ref{cor:MUB_exact_2learning}). In both cases, the copy complexity of $2$-learning is at most $4$.

We note that our bounds are in stark contrast to the task of perfect distinguishability; a multiset of states that is not perfectly distinguishable will never become perfectly distinguishable by increasing the number of copies. For $k \geq 2$, however, a multiset that is not $k$-learnable may become $k$-learnable when sufficiently many copies are provided; Corollary~\ref{cor:copy_complexity} shows this happens whenever the states are pairwise distinct up to global phase. 

\paragraph{Application: Zero-error quantum state mutation detection problems (Section~\ref{sec:applications}).}     

There are many applications for which identifying quantum states is of central importance. 
As an application of our bounds, we revisit the task of anomaly
detection~\cite{RebentrostAnomaly, llorens2024quantum} and changepoint
detection~\cite{sentis2016quantum, sentis2017exact, mohan2023generalized}. In anomaly detection, we have states of the
form 
\begin{equation}
    \ket{\psi}^{\otimes {i-1}} \otimes \ket{\phi}\otimes \ket{\psi}^{\otimes n-i}
\end{equation} 
and the goal is to identify the location of the state $\ket{\phi}$ (called the anomaly). 
In changepoint detection, we have states of the form 
\begin{equation} 
\ket{\phi} \ket{\phi} \ket{\phi} \cdots \ket{\phi} \ket{\psi} \cdots \ket{\psi} 
\end{equation} 
and the goal is to identify the first location when the states changed (called the changepoint). 
In both settings, the goal is to locate a deviation from an otherwise regular sequence of states.  

In this paper, we study these problems via the following problem: 
\begin{quote}
\textit{When can we narrow the location of the anomaly or changepoint down to $k$ choices, and without error, as a function of the overlap $\braket{\psi}{\phi}$?}  
\end{quote} 

As an example result, we show that the multiset $\left\{\ket{\psi}^{\otimes {i-1}} \otimes \ket{\phi}\otimes \ket{\psi}^{\otimes n-i} : i \in \{1,2,\ldots,n\}\right\}$ is $k$-learnable if and only if
\begin{equation} 
        \abs{\braket{\phi}{\psi}} \leq \sqrt{\frac{k-1}{n-1}}.  
\end{equation} 
See Theorem~\ref{thm:single_anomaly} for a precise statement. Similar results, including a study of the \emph{multi-anomaly detection} problem~\cite{llorens2024quantum}, appear in Section~\ref{sec:applications}. 

\paragraph{Typical behaviour (Section~\ref{sec:typical}).}    

We now consider how learnable randomly-generated pure states are. 
Since randomly generated states in a large dimension are almost orthogonal, we expect a trade-off: as the dimension $d$ increases or the number of states $n$ decreases, learnability should increase, and as $d$ decreases or $n$ increases learnability should decrease. We start with a negative result in the regime where $d$ is small or $n$ is large:

\begin{proposition}\label{prop:random_gram_not_klearn_informalinformal}
    For fixed $\delta \in (0,1)$, independent Haar-random pure states $\ket{\psi_1}, \ket{\psi_2}, \ldots,\ket{\psi_n} \in \C^d$ are not $k$-learnable with probability at least $1-\delta$ whenever 
    \begin{equation}  
    d \leq C_{\delta} \min\{ n, (n/k)^2 \}
    \end{equation}
    for $k \in \{1, \ldots, n-1 \}$, where $C_{\delta}>0$ is a constant that depends only on $\delta$.
\end{proposition}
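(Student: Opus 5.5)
The plan is to deduce the statement from the necessary condition in Theorem~\ref{thm:m_excl_threshold}: show that with probability at least $1-\delta$,
\begin{equation}
X := \sum_{i\neq j} \abs{\braket{\psi_i}{\psi_j}} > n(k-1).
\end{equation}
Since the diagonal terms contribute exactly $n$, this is the same as $\sum_{i,j}\abs{\braket{\psi_i}{\psi_j}} > nk$, which rules out $k$-learnability. The argument is a first and second moment computation for $X$, followed by Chebyshev's inequality.

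\emph{First moment.} For independent Haar-random unit vectors $\ket{\psi},\ket{\phi}\in\C^d$, the squared overlap $\abs{\braket{\psi}{\phi}}^2$ has the $\mathrm{Beta}(1,d-1)$ distribution. This gives
\begin{equation}
\mu := \mathbb{E}\abs{\braket{\psi}{\phi}} = \frac{\Gamma(d)\Gamma(3/2)}{\Gamma(d+1/2)} \geq \frac{c}{\sqrt d}, \qquad \mathbb{E}\abs{\braket{\psi}{\phi}}^2=\frac1d,
\end{equation}
for an absolute constant $c>0$. The lower bound on $\mu$ follows from Gautschi's inequality. Hence $\mathbb{E}X = n(n-1)\mu \ge c\,n(n-1)/\sqrt d$.

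\emph{Second moment.} Write $X = 2\sum_{i<j} Y_{ij}$ with $Y_{ij}=\abs{\braket{\psi_i}{\psi_j}}$. The key observation is that the $Y_{ij}$ are pairwise uncorrelated.
\begin{itemize}
\item If the pairs $\{i,j\}$ and $\{i',j'\}$ are disjoint, the two variables are independent.
\item If the pairs share an index, say $Y_{12}$ and $Y_{13}$, then conditioning on $\ket{\psi_1}$ makes them independent. Moreover, by unitary invariance their common conditional law does not depend on $\ket{\psi_1}$. So they are independent outright.
\end{itemize}
Therefore
\begin{equation}
\operatorname{Var}X = 4\binom n2 \operatorname{Var}Y_{12} \le 2n^2/d.
\end{equation}
Chebyshev's inequality then gives
\begin{equation}
\Pr\!\left[X \le \tfrac12\mathbb{E}X\right] \le \frac{4\operatorname{Var}X}{(\mathbb{E}X)^2} \le \frac{8n^2/d}{c^2 n^2(n-1)^2/d} = \frac{8}{c^2(n-1)^2}.
\end{equation}
This is at most $\delta$ once $n \ge N_\delta := 1+\sqrt{8/(c^2\delta)}$.

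\emph{Threshold and choice of $C_\delta$.} On the event $X>\tfrac12\mathbb{E}X$ we have $X > c\,n(n-1)/(2\sqrt d)$. This exceeds $n(k-1)$ as soon as $\sqrt d \le c(n-1)/(2(k-1))$, with the case $k=1$ trivial. For $n\ge2$ we have $(n-1)/(k-1)\ge n/(2k)$, so the condition is implied by $d \le (c^2/16)(n/k)^2$. The term $n$ in the minimum handles small $n$. Choose $C_\delta \le \min\{c^2/16,\,1/N_\delta\}$. Then for $n<N_\delta$ the hypothesis $d\le C_\delta n<1$ is vacuous. For $n\ge N_\delta$ the Chebyshev bound applies, and Theorem~\ref{thm:m_excl_threshold} concludes.

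\emph{Main obstacle.} The step I expect to need the most care is the variance bound, specifically justifying that overlaps sharing an index are uncorrelated via unitary invariance. Once this pairwise independence is in hand, everything else is explicit Beta-distribution moments plus bookkeeping of constants.
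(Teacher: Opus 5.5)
Your proof is correct, but its concentration step differs from the paper's. The paper also reduces to Theorem~\ref{thm:m_excl_threshold} and uses the same first-moment bound $\mu=\sqrt{\pi}\,\Gamma(d)/(2\Gamma(d+1/2))\ge\sqrt{\pi}/(2\sqrt d)$. It then treats $U=\frac{1}{n(n-1)}\sum_{i\neq j}\abs{\braket{\psi_i}{\psi_j}}$ as a $U$-statistic with a $[0,1]$-valued kernel and applies Hoeffding's inequality $\Pr(U\le\mu-t)\le\exp(-2\floor{n/2}t^2)$. This yields the explicit condition $d<\frac{\pi}{4}\bigl(\frac{k-1}{n-1}+\sqrt{\ln(1/\delta)/(n-1)}\bigr)^{-2}$. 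Hoeffding uses only the range of the kernel, not its variance $O(1/d)$, so its fluctuation scale $\sqrt{\ln(1/\delta)/n}$ must be beaten by $\mu\sim1/\sqrt d$; this is exactly where the term $n$ in $\min\{n,(n/k)^2\}$ comes from.

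Your route instead uses pairwise independence of the overlaps, which you justify correctly by conditioning on the shared vector and invoking unitary invariance. Together with $\operatorname{Var}Y_{12}\le 1/d$, this gives $\operatorname{Var}X/(\mathbb{E}X)^2=O(1/n^2)$ uniformly in $d$. In your argument the $n$ in the minimum only serves to make the hypothesis vacuous when $n<N_\delta$. So you actually prove something stronger: once $n\ge N_\delta$, the condition $d\le\frac{c^2}{16}(n/k)^2$ alone suffices. For fixed $k$ this is the regime $d=O(n^2)$, the same order as the sufficient condition in Proposition~\ref{prop:random_gram_markov}. The paper's bound is weaker whenever $k\lesssim\sqrt n$.

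The price is that Chebyshev gives polynomial rather than logarithmic dependence on $\delta$: you need $n\gtrsim\delta^{-1/2}$, so $C_\delta\asymp\sqrt\delta$, versus $C_\delta\asymp1/\ln(1/\delta)$ for the paper. The paper's exponential tail also gives a fully explicit threshold valid for every $n$.
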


See Proposition~\ref{prop:random_gram_not_klearn} for a precise statement. 
Conversely, if $n$ is small or $d$ is large, the randomly generated pure states will be $k$-learnable with high probability:

\begin{proposition} 
    For fixed $\delta \in (0,1)$, independent Haar-random pure states $\ket{\psi_1}, \ket{\psi_2}, \ldots, \ket{\psi_n} \in \C^d$ are $k$-learnable with probability at least $1 - \delta$ whenever 
    \begin{equation}
        d \geq C_{\delta} (n(n - k + 1) / k) 
    \end{equation}
    for $k \in \{ 2, \ldots, n\}$, where $C_{\delta}>0$ is a constant that depends only on $\delta$. In particular, $d \leq C_{\delta} (n^2 / k)$ suffices when $k \leq n/2$.
\end{proposition}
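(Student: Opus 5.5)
The plan is to reduce to Theorem~\ref{thm:main}. It suffices to show that, with probability at least $1-\delta$, the frame potential $F := \sum_{i,j=1}^n \abs{\braket{\psi_i}{\psi_j}}^2$ satisfies $F \le n^2/(n-k+1)$. Write $F = n + X$, where $X := \sum_{i\neq j} \abs{\braket{\psi_i}{\psi_j}}^2$. The condition then becomes
\begin{equation*}
X \le \frac{n^2}{n-k+1} - n = \frac{n(k-1)}{n-k+1}.
\end{equation*}
For $k\ge 2$ this threshold is positive and of order $nk/(n-k+1)$. Indeed, $k-1\ge k/2$, so the threshold is at least $nk/(2(n-k+1))$.

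The key computation is the first moment of $X$. For independent Haar-random unit vectors in $\C^d$, $\mathbb{E}\abs{\braket{\psi_i}{\psi_j}}^2 = 1/d$ for $i\ne j$, so $\mathbb{E} X = n(n-1)/d$. Markov's inequality gives
\begin{equation*}
\Pr\left[X > \frac{n(k-1)}{n-k+1}\right] \le \frac{n(n-1)(n-k+1)}{d\, n(k-1)} \le \frac{2n(n-k+1)}{dk}.
\end{equation*}
This is at most $\delta$ once $d \ge (2/\delta)\, n(n-k+1)/k$, so $C_\delta = 2/\delta$ works. On the complementary event, Theorem~\ref{thm:main} yields $k$-learnability. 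The ``in particular'' clause follows because $n-k+1 \le n$, so $d \ge C_\delta n^2/k$ suffices in all cases.

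I do not expect a serious obstacle, since Markov already gives a constant depending only on $\delta$, as the statement allows. The only points to check are two. First, the second moment $\mathbb{E}\abs{\braket{\psi}{\phi}}^2 = 1/d$: condition on $\ket{\phi}$ and use unitary invariance, which gives $\mathbb{E}\ketbra{\psi}{\psi} = \I/d$. Second, the arithmetic $k-1\ge k/2$ for $k\ge2$. If a better dependence on $\delta$ (for instance $\log(1/\delta)$) were desired, I would instead use concentration of the $U$-statistic $X$ via Lévy's lemma or a variance bound. The variance of $X$ is of order $n^2/d^2$ with cross terms of order $n^3/d^3$, so Chebyshev's inequality would suffice. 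The statement as given does not require this.
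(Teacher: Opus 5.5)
Your proof is correct and is essentially the paper's argument: Markov's inequality applied to $\sum_{i\neq j}\abs{\braket{\psi_i}{\psi_j}}^2$, whose mean is $n(n-1)/d$, against the threshold $n(k-1)/(n-k+1)$ from Theorem~\ref{thm:main}. Your constant $C_\delta = 2/\delta$, obtained from $k-1 \geq k/2$, just makes explicit the paper's bound $d \geq (n-1)(n-k+1)/(\delta(k-1))$. You were also right to read the ``in particular'' clause as $d \geq C_\delta n^2/k$; the $\leq$ in the statement is a typo.
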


See Proposition~\ref{prop:random_gram_markov} for a precise statement. 

\paragraph{Methods.}  
Operationally, $k$-learnability is a property of the existence of measurement strategies. 
Algebraically, however, it is a property of the Gram matrix. 
To specify this property, we define what it means for a matrix to be \emph{$k$-incoherent}. 

\begin{definition}[$k$-incoherence~\cite{ringbauer2018certification}] 
    Let $M$ be an $n \times n$ positive semidefinite matrix. We say that $M$ is \emph{$k$-incoherent} if it can be written as a sum of positive semidefinite matrices, each of which is supported only on a $k \times k$ principal submatrix. In other words, $M$ has a decomposition
    \begin{equation} 
        M = \sum_{\ell} M_{\ell}, 
    \end{equation} 
    where each $M_{\ell}$ is positive semidefinite and has $n-k$ rows that are identically $0$ (and similarly for those columns).
\end{definition}

\begin{remark}
    We note that the set of $k$-incoherent matrices is sometimes called the set of matrices with \emph{factor width} at most $k$ \cite{boman2005factor}. Also, the term ``$k$-incoherent'' is sometimes reserved for density matrices (i.e., positive semidefinite matrices with trace $1$); we do not make this restriction.
\end{remark} 

In the prior work \cite[Lemma 9]{johnston2025complexity}, we showed that determining the learnability of a multiset of pure states is exactly tied to the incoherence of its Gram matrix:  

\begin{proposition}\label{prop:klearnable}
    A multiset of pure states $S = \{\ket{\psi_1}, \ket{\psi_2}, \ldots, \ket{\psi_n}\} \subset \C^d$ is \emph{$k$-learnable} if and only if its Gram matrix is $k$-incoherent.
\end{proposition}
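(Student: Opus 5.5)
The plan is to prove both directions directly, by translating between a POVM $\{M_I\}$ and a decomposition of the Gram matrix $G$. We may assume the states span $\C^d$; otherwise we restrict to their span $V$ and add the projector onto $V^\perp$ to any single POVM element, which changes no expectation value $\langle \psi_i | M_I | \psi_i \rangle$. Let $\Psi = \sum_i \ket{\psi_i}\langle i|$ be the $d \times n$ matrix whose columns are the states, so that $G = \Psi^\dagger \Psi$. Throughout, the key fact is that for a positive semidefinite $M$, the condition $\langle \psi | M | \psi\rangle = 0$ is equivalent to $M\ket{\psi} = 0$, and also to $M^{1/2}\ket{\psi}=0$.

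\emph{($\Rightarrow$)} Suppose $\{M_I\}$ is a $k$-learning POVM. Define
\begin{equation}
G_I := \Psi^\dagger M_I \Psi,
\end{equation}
which is positive semidefinite. Since $\sum_I M_I = \I$, we get $\sum_I G_I = \Psi^\dagger \Psi = G$. Moreover, $(G_I)_{i,j} = \langle \psi_i | M_I | \psi_j\rangle$. This vanishes whenever $i \notin I$, because then $M_I \ket{\psi_i} = 0$. The same holds whenever $j \notin I$. Hence each $G_I$ is supported on the $k\times k$ principal submatrix indexed by $I$, so $G$ is $k$-incoherent.

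\emph{($\Leftarrow$)} Suppose $G = \sum_\ell G_\ell$, where each $G_\ell \succeq 0$ is supported on some index set $I_\ell$ of size $k$. (If a term's support is smaller than $k$, pad it to a set of size exactly $k$; this requires $k\le n$, which holds.) Group the terms by support, so that we obtain one $G_I$ for each $I$. We would like $M_I$ with $\Psi^\dagger M_I \Psi = G_I$.

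Because the states span $\C^d$, $\Psi$ has full row rank and a right inverse $\Psi^+ = \Psi^\dagger (\Psi\Psi^\dagger)^{-1}$ with $\Psi \Psi^+ = \I_d$. Set
\begin{equation}
M_I := (\Psi^+)^\dagger G_I \Psi^+ \succeq 0.
\end{equation}
The normalization then follows from
\begin{equation}
\sum_I M_I = (\Psi^+)^\dagger G \Psi^+ = (\Psi\Psi^+)^\dagger(\Psi\Psi^+) = \I_d.
\end{equation}
The step needing care is the zero condition, since $\Psi^+$ mixes indices. The plan is to compute
\begin{equation}
\langle \psi_i | M_I|\psi_i\rangle = \langle i| P G_I P |i\rangle,
\end{equation}
where $P = \Psi^+\Psi$ is the orthogonal projector onto the row space of $\Psi$, which equals $\im G$.

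The main obstacle is showing that $P$ does not spoil the support. For this we use that $0 \preceq G_I \preceq G$. This gives $\ker G \subseteq \ker G_I$, so $G_I = P G_I P$. Therefore $\langle i | P G_I P | i\rangle = (G_I)_{i,i} = 0$ for $i\notin I$. This yields the POVM, and with the reduction to the span this completes the proof.
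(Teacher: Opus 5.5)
Your proof is correct and complete. The paper gives no argument of its own for this proposition; it imports it from \cite[Lemma~9]{johnston2025complexity}. Your construction is therefore a self-contained replacement rather than a variant of an in-paper proof.

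Both directions work. In the forward direction, $M_I\ket{\psi_i}=0$ for $i\notin I$ kills row and column $i$ of $\Psi^\dagger M_I\Psi$. In the reverse direction, the one delicate point is the one you isolate. You handle it correctly: $G_I\preceq G$ holds because the remaining summands are positive semidefinite, so $\ker G\subseteq\ker G_I$ and hence $PG_IP=G_I$.

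Your computation actually gives more than the diagonal condition. It shows $\Psi^\dagger M_I\Psi=PG_IP=G_I$ for every $I$. Conversely, $(\Psi^+)^\dagger(\Psi^\dagger M_I\Psi)\Psi^+=M_I$ whenever the states span the space, since $\Psi\Psi^+=\I$. So your two constructions are mutually inverse. They give an explicit bijection between $k$-learning POVMs on $\spn\{\ket{\psi_1},\ldots,\ket{\psi_n}\}$ and $k$-incoherent decompositions of $G$ grouped by support. Consequently any decomposition of $G$ (for instance the $G_I$ in Lemma~\ref{lem:useful}) translates directly into a concrete learning measurement, via $M_I=(\Psi^+)^\dagger G_I\Psi^+$.

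A cosmetic alternative to reducing to the span is to take $\Psi^+$ to be the Moore--Penrose pseudoinverse and add $\I-\Psi\Psi^+$ to a single POVM element. Your kernel argument then goes through verbatim.
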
  
Many of our results use this algebraic characterization to determine $k$-learnability. 

\section{Sharpness of \texorpdfstring{\cref{thm:m_excl_threshold}}{Theorem 3} via equiangular states} \label{sec:simplex} 

We now discuss a particular parameterized multiset of pure states which all have equal inner product with each other, called \emph{equiangular states}, and are defined implicitly via the $n \times n$ Gram matrix: 
   \begin{equation} \label{eq:1gamma}
        G_n(\gamma) = \begin{bmatrix}
            1 & \gamma & \gamma & \cdots & \gamma \\
            \gamma & 1 & \gamma & \cdots & \gamma \\
            \gamma & \gamma & 1 & \cdots & \gamma \\
            \vdots & \vdots & \vdots & \ddots & \vdots \\
            \gamma & \gamma & \gamma & \cdots & 1
        \end{bmatrix} 
    \end{equation} 
for $\gamma \in [0,1]$. 
Since it can be written as 
\begin{equation} 
G_n(\gamma) = 
\gamma \one\one^\t + (1 - \gamma) \I, 
\end{equation} 
where $\one$ is the all-ones column vector and $\I$ is the $n \times n$ identity matrix, we know that it is always positive semidefinite for this range of $\gamma$. 
Therefore, it is the Gram matrix of some collection of vectors and since the diagonal entries of $G_n(\gamma)$ are all equal to $1$, those vectors can be interpreted as pure states.  

We state a useful lemma about equiangular states.  

\begin{lemma}\label{lem:useful}
    For $n \geq 2$, $k \in \{ 1, 2, \ldots, n \}$, we have that $G_n(\gamma)$ is $k$-incoherent when $0 \leq \gamma \leq \frac{k-1}{n-1}$. 
\end{lemma}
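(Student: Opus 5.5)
We prove $G_n(\gamma)$ is $k$-incoherent for $0\le\gamma\le\frac{k-1}{n-1}$ by writing it as an explicit average of matrices supported on $k\times k$ principal blocks, each of which is itself an equiangular Gram matrix and hence positive semidefinite. The cone of $k$-incoherent matrices is convex and contains every positive semidefinite diagonal matrix, since a diagonal matrix splits into single-entry pieces that can be padded to $k\times k$ supports. It therefore suffices to handle the extreme value $\gamma^*=\frac{k-1}{n-1}$. Indeed, for $\gamma\le\gamma^*$ set $t=\gamma/\gamma^*\in[0,1]$; then
\[
G_n(\gamma)=t\,G_n(\gamma^*)+(1-t)\I ,
\]
which is a nonnegative combination of a $k$-incoherent matrix and the identity. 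The case $k=1$ forces $\gamma=0$, so $G_n(0)=\I$ and there is nothing to prove. Assume $k\ge 2$ from now on.

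For $\gamma^*$ we symmetrize over all $k$-subsets. For each $I\subseteq\{1,\dots,n\}$ with $|I|=k$, let $P_I$ be the $n\times n$ matrix that equals $\one_I\one_I^\t$ and is zero outside the rows and columns indexed by $I$. This is the all-ones matrix on the block $I$, so it is positive semidefinite and supported on a $k\times k$ principal submatrix. Consider
\[
\sum_{|I|=k} P_I .
\]
Each diagonal entry of this sum equals the number of $k$-subsets containing a fixed index, which is $\binom{n-1}{k-1}$. Each off-diagonal entry equals the number of $k$-subsets containing a fixed pair, which is $\binom{n-2}{k-2}$. Their ratio is
\[
\frac{\binom{n-2}{k-2}}{\binom{n-1}{k-1}}=\frac{k-1}{n-1}=\gamma^*,
\]
so we obtain
\[
G_n(\gamma^*)=\binom{n-1}{k-1}^{-1}\sum_{|I|=k}P_I .
\]
This is a sum of positive semidefinite matrices each supported on a $k\times k$ principal submatrix, which is exactly the definition of $k$-incoherence.

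There is no real obstacle here. The only point that needs care is the reduction from general $\gamma$ to the endpoint $\gamma^*$, which uses the diagonal term $(1-t)\I$. For $k\ge 2$ that identity term can also be absorbed directly into the block decomposition, which gives the conclusion without appealing to convexity separately. The resulting decomposition shows that equiangular states with overlap at most $\frac{k-1}{n-1}$ are $k$-learnable by Proposition~\ref{prop:klearnable}. This matches the necessary condition of Theorem~\ref{thm:m_excl_threshold}, so that theorem is sharp.
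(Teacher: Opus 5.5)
Your proof is correct and is essentially the paper's argument. The paper also symmetrizes over all $k$-subsets $J$, using the blocks $\binom{n-2}{k-2}^{-1}\bigl((x-\gamma)\I_J+\gamma\one_J\one_J^\t\bigr)$ with $x=\frac{k-1}{n-1}$; this is exactly your endpoint decomposition with the $(1-t)\I$ term spread across the blocks, which is the absorption you mention in your last paragraph.
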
 

\begin{proof} 
    This proof is similar to one in ~\cite{johnston2023antidistinguishability} which proved it for the case of $k= n - 1$, corresponding to antidistinguishability. 
    The case of $k = n$ is trivial as every $n \times n$ positive semidefinite matrix is $n$-incoherent. 
    The case of $k = 1$ is also trivial as the condition enforces $\gamma = 0$, i.e., a diagonal matrix, which is always $1$-incoherent.
    The rest of the proof proves the case when $k \in \{ 2, \ldots, n-1 \}$ and, by extension, $n \geq 3$.   
    
    Set $x = (k-1)/(n-1)$ for notational clarity and, for each $I \subseteq \{1, 2, \ldots, n\}$ with $\abs{I} = n-k$, define
    \begin{equation} 
        G_I := \frac{1}{\binom{n-2}{n-k}} 
        \left(
        (x - \gamma) ( \I - \sum_{i \in I} \e_i\e_i^\t ) + \gamma\left(\one-\sum_{i \in I}\e_i\right)\left(\one-\sum_{i \in I}\e_i\right)^\t
        \right). 
    \end{equation} 
    It is clear that $G_I$ is positive semidefinite (since $\gamma \leq (k-1)/(n-1) = x$) and is equal to $0$ outside of the $k \times k$ principal submatrix that avoids rows and columns with indices in $I$. Furthermore, direct calculation shows that
    \begin{align}\begin{split}\label{eq:G_gamma_decomp}
        \sum_{\substack{I \subseteq \{1, 2, \ldots, n\} \\ \abs{I} = n-k}} G_I & = \frac{1}{\binom{n-2}{n-k}}\sum_{\substack{I \subseteq \{1, 2, \ldots, n\} \\ \abs{I} = n-k}} \left((x-\gamma) (\I - \sum_{i \in I}\e_i\e_i^\t\Big) + \gamma\left(\one-\sum_{i \in I}\e_i\right)\left(\one-\sum_{i \in I}\e_i\right)^\t\right) \\
        & = \frac{1}{\binom{n-2}{n-k}}\left((x-\gamma)\binom{n-1}{n-k}\I + \gamma\binom{n-1}{n-k}\left(x\one\one^\t + \left(1-x\right)\I\right)\right) \\
        & = \frac{x\binom{n-1}{n-k}}{\binom{n-2}{n-k}}\left((1-\gamma)\I + \gamma\one\one^\t\right) \\
        & = G_n(\gamma),
    \end{split}\end{align}
    so it follows that $G_n(\gamma)$ is $k$-incoherent by definition.   
\end{proof}

By~\cref{prop:klearnable}, we know that the equiangular states with $0 \leq \gamma \leq \frac{k-1}{n-1}$ must be $k$-learnable. 
On the other hand, we have 
\begin{equation} 
\sum_{i,j=1}^n \abs{\braket{\psi_i}{\psi_j}} = \sum_{i,j=1}^n \abs{G_{i,j}} = n + n(n-1) \gamma.  
\end{equation} 
Thus by \cref{thm:m_excl_threshold}, if $\gamma > \frac{k-1}{n-1}$, then it cannot be $k$-learnable. 
Combining, we have the following corollary. 

\begin{corollary}\label{cor:equi_angle}
    Let $0 \leq \gamma \leq 1$, $n \geq 2$,  $k \in \{1, 2, \ldots, n \}$, and let $S = \{\ket{\psi_1},
    \ket{\psi_2}, \ldots, \ket{\psi_n}\}$ be such that $\braket{\psi_i}{\psi_j} = \gamma$ whenever $i \neq j$. 
    In other words, $S$ has Gram matrix given by Equation~\cref{eq:1gamma}. 
    Then $S$ is $k$-learnable if and only if
    \begin{equation}
        \gamma \leq \frac{k-1}{n-1}.
    \end{equation} 
\end{corollary}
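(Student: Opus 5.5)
The plan is to prove the two directions separately, using the Gram-matrix characterization of \cref{prop:klearnable}. This reduces $k$-learnability of $S$ to $k$-incoherence of $G_n(\gamma)$. The Gram matrix is real with nonnegative entries, so $\abs{G_{i,j}} = G_{i,j}$ throughout.

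For sufficiency, suppose $\gamma \leq \frac{k-1}{n-1}$. Since $\gamma \ge 0$ by hypothesis, \cref{lem:useful} applies directly and shows that $G_n(\gamma)$ is $k$-incoherent. By \cref{prop:klearnable}, $S$ is $k$-learnable. The degenerate endpoints $k=1$ (forcing $\gamma=0$) and $k=n$ are already handled inside the lemma, so no extra case analysis is needed here.

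For necessity, I argue by contrapositive. Suppose $\gamma > \frac{k-1}{n-1}$. I compute the entrywise $\ell_1$ sum of the Gram matrix:
\begin{equation*}
\sum_{i,j=1}^n \abs{\braket{\psi_i}{\psi_j}} = n + n(n-1)\gamma .
\end{equation*}
This is strictly greater than $n + n(k-1) = nk$. Hence the hypothesis \eqref{eq:inner-product-non-mexcl-threshold} of \cref{thm:m_excl_threshold} holds, and $S$ is not $k$-learnable. Equivalently, one can invoke the second ``in particular'' clause of that theorem, since every off-diagonal overlap equals $\gamma > \frac{k-1}{n-1}$.

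Neither step is a real obstacle, since all the work was done in \cref{lem:useful} and \cref{thm:m_excl_threshold}. The only points to check are bookkeeping:
\begin{itemize}
\item When $n \ge 2$, the threshold $\frac{k-1}{n-1}$ lies in $[0,1]$, so both regimes make sense within $\gamma\in[0,1]$.
\item The case $k=n$ is consistent: the threshold becomes $1$, so the necessity direction is vacuous.
\end{itemize}
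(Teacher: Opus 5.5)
Your proof is correct and follows the paper's argument exactly. Sufficiency comes from \cref{lem:useful} combined with \cref{prop:klearnable}. Necessity comes from computing the entrywise sum $n + n(n-1)\gamma$, which exceeds $nk$ when $\gamma > \frac{k-1}{n-1}$, and applying \cref{thm:m_excl_threshold}.
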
 

By~\cref{cor:equi_angle}, setting $\gamma = \frac{k-1}{n-1}$ yields a set of $n$ pure states that are $k$-learnable, proving that both bounds in \cref{thm:m_excl_threshold} are sharp. 

\section{Sharpness of the Frobenius norm bound of \texorpdfstring{Inequality~\eqref{eq:thm_main_frob}}{Inequality (3)}} \label{sec:sharp}

We now show that Inequality~\eqref{eq:thm_main_frob} of Theorem~\ref{thm:main} is tight in the sense that the right-hand side of the inequality cannot be increased at all. To this end, note that
\[
    \sum_{i, j = 1}^{n} \abs{\braket{\psi_i}{\psi_j}}^2 = \norm{G}_{\textup{F}}^2,
\]
where $G$ is the Gram matrix of $\ket{\psi_1}$, $\ket{\psi_2}$, $\ldots$, $\ket{\psi_n}$ and $\norm{\cdot}_{\textup{F}}$ denotes the Frobenius norm. We prove tightness by demonstrating how to construct a Gram matrix that is not $k$-incoherent (and thus corresponds to a set of states that is not $k$-learnable) with any Frobenius norm that is slightly larger than $n/\sqrt{n-k+1}$.

\begin{theorem}\label{thm:main_tight}
    Let $1 < k < n$ be integers, let $r := n-k+1$, and let $\omega := e^{\frac{2\pi i}{n}}$. For each $j \in \{1,2,\ldots,r\}$, let
    \begin{align}
        \vv_j = \left(1, \omega^{j-1}, \omega^{2(j-1)},\ldots, \omega^{(n-1)(j-1)}\right) \in \C^n.
    \end{align}
    Consider the Gram matrix
    \begin{align}
        G := \frac{1}{r} \sum_{j=1}^r \vv_j \vv_j^*.
    \end{align}
    Then $G$ has $\norm{G}_{\textup{F}} = n/\sqrt{r}$ and is $k$-incoherent but not $(k-1)$-incoherent. Furthermore, for any $\vareps \in (0,1)$, the Gram matrix
    \begin{align}
        G_{\vareps} := \frac{1}{r} \left((1+\vareps) \vv_1 \vv_1^* + (1-\vareps) \vv_r \vv_r^* + \sum_{i=2}^{r-1} \vv_i \vv_i^* \right)
    \end{align}
    is not $k$-incoherent.
\end{theorem}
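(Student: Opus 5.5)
Throughout I use Proposition~\ref{prop:klearnable} only implicitly: everything is phrased in terms of $k$-incoherence of $G$ and $G_{\vareps}$. The plan has four steps: identify $G$ as a scaled projection, read off the norm and $k$-incoherence, and then prove both negative statements from the structure of the range, namely that it is spanned by $r$ consecutive Fourier vectors.

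\textbf{Step 1 (structure of $G$).} The vectors $\vv_1,\ldots,\vv_r$ are pairwise orthogonal, since they are distinct columns of the $n\times n$ DFT matrix, and each has $\norm{\vv_j}^2=n$. So $P:=\frac1n\sum_{j=1}^r \vv_j\vv_j^*$ is the orthogonal projection onto $V:=\spn\{\vv_1,\ldots,\vv_r\}$, and $G=\frac{n}{r}P$. Each $\vv_j\vv_j^*$ has unit diagonal, so $G$ has unit diagonal and is a genuine Gram matrix of pure states. Its norm is
\[
\norm{G}_{\textup F}^2=\frac{n^2}{r^2}\,\tr P=\frac{n^2}{r}.
\]
Thus \eqref{eq:thm_main_frob} holds with equality, and Theorem~\ref{thm:main} gives $k$-incoherence directly.

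\textbf{Step 2 (key lemma on supports).} Every $\w\in V$ has the form $\w_t=p(\omega^t)$ for $t=0,\ldots,n-1$, where $p$ is a polynomial of degree at most $r-1$. Since the $\omega^t$ are distinct, a nonzero $\w\in V$ has at most $r-1$ zero entries. Hence $|\supp \w|\ge n-r+1=k$.

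Moreover, fix a $k$-subset $I$. A vector of $V$ supported in $I$ must vanish at the $n-k=r-1$ points $\omega^s$, $s\notin I$. This forces $p$ to be a scalar multiple of $p_I(z):=\prod_{s\notin I}(z-\omega^s)$. So the vectors of $V$ supported in $I$ form the line spanned by $\w_I:=(p_I(\omega^t))_t$.

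\textbf{Step 3 (not $(k-1)$-incoherent).} Suppose $G=\sum_\ell M_\ell$ with each $M_\ell\succeq 0$. Then $\operatorname{range} M_\ell\subseteq\operatorname{range}G=V$. If every $M_\ell$ were supported on $k-1$ coordinates, any nonzero vector in its range would lie in $V$ with support of size at most $k-1$, contradicting Step 2. So $G$ is not $(k-1)$-incoherent.

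\textbf{Step 4 ($G_{\vareps}$ is not $k$-incoherent).} This is the main step. Because $\vareps\in(0,1)$, all coefficients in $G_{\vareps}$ are positive, so $\operatorname{range}G_{\vareps}=V$ again.

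Suppose $G_{\vareps}$ were $k$-incoherent. Take the spectral decomposition of each summand. Each eigenvector lies in $V$ and is supported on a $k$-set, so by Step 2 it is a multiple of some $\w_I$. Therefore
\[
G_{\vareps}=\sum_{|I|=k}c_I\,\w_I\w_I^*,\qquad c_I\ge 0.
\]
Now expand $\w_I=\sum_{m=0}^{r-1}a^{(I)}_m\vv_{m+1}$, where $a^{(I)}_m$ are the coefficients of $p_I$. Since $p_I$ is monic of degree $r-1$, we get $|a^{(I)}_{r-1}|=1$. Its constant term is $\prod_{s\notin I}(-\omega^s)$, so $|a^{(I)}_0|=1$. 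By orthogonality,
\[
|\vv_1^*\w_I|^2=n^2|a^{(I)}_0|^2=n^2=|\vv_r^*\w_I|^2,
\]
which gives
\[
\vv_1^*G_{\vareps}\vv_1=n^2\sum_I c_I=\vv_r^*G_{\vareps}\vv_r.
\]
A direct computation from the definition of $G_{\vareps}$ instead gives $\vv_1^*G_{\vareps}\vv_1=(1+\vareps)n^2/r$ and $\vv_r^*G_{\vareps}\vv_r=(1-\vareps)n^2/r$. These differ for $\vareps>0$, which is a contradiction.

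The delicate points are the range-containment argument, which needs $\vareps<1$, and the observation that both extreme coefficients of every $p_I$ have modulus one. The latter is what makes the two extreme Fourier frequencies $\vv_1$ and $\vv_r$ indistinguishable to any $k$-incoherent decomposition.
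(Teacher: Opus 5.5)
Your proof is correct and follows essentially the same route as the paper. Both arguments use Theorem~\ref{thm:main} for $k$-incoherence and root-counting for polynomials of degree at most $r-1$ on $\im(G)$ to rule out supports of size at most $k-1$; both then use the fact that such a polynomial vanishing at $r-1$ prescribed $n$-th roots of unity has constant and leading coefficients of equal modulus, which forces $\vv_1^*G_{\vareps}\vv_1=\vv_r^*G_{\vareps}\vv_r$ for any $k$-incoherent decomposition. Your identification of each summand as a multiple of $\w_I\w_I^*$ is a harmless sharpening of the paper's step, and the range step only needs $\im(G_{\vareps})\subseteq\spn\{\vv_1,\ldots,\vv_r\}$, which holds for every $\vareps$, so it does not actually rely on $\vareps<1$.
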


In particular, since
\begin{equation}
    \lim_{\vareps \rightarrow 0^{+}} G_{\vareps} = G \quad \text{and} \quad \lim_{\vareps \rightarrow 0^{+}} \norm{G_{\vareps}}_{\textup{F}} = \lim_{\vareps \rightarrow 0^{+}} \left(\frac{n}{\sqrt{r}}\sqrt{1 + \frac{2\vareps^2}{r}}\right) = \frac{n}{\sqrt{r}},
\end{equation}
Theorem~\ref{thm:main_tight} shows that Inequality~\eqref{eq:thm_main_frob} is tight: for any Frobenius norm slightly above $n/\sqrt{r}$, we can choose $\vareps > 0$ so as to create a Gram matrix $G_{\vareps}$ with that Frobenius norm that is not $k$-incoherent, so (by Proposition~\ref{prop:klearnable}) the associated set of states is not $k$-learnable.

\begin{proof}[Proof of Theorem~\ref{thm:main_tight}]
    The fact that $G$ and $G_{\vareps}$ are Gram matrices is straightforward to verify. Since the vectors $\vv_j$ are mutually orthogonal, the Gram matrix $G$ is proportional to a rank-$r$ orthogonal projection, and it is easily verified that $\norm{G}_{\textup{F}} = n/\sqrt{r}$. By Theorem~\ref{thm:main}, $G$ is $k$-incoherent.
    
    Let $\y \in \im(G)$ (the image of $G$) be a vector with at most $n - r = k - 1$ nonzero entries, and let $j_1$, $\ldots$, $j_r$ be indices for which $y_{j_1} = \cdots = y_{j_r} = 0$. Since $\vv_1, \ldots, \vv_r$ form a basis of $\im(G)$, there exist scalars $\{\alpha_j\} \subset \C$ such that $\y = \sum_{j=1}^r \alpha_j \vv_j$. It follows that
    \begin{equation}\label{eq:lin_comb_root_unity}
        \sum_{j=1}^r \alpha_j \omega^{(j_{\ell}-1)(j-1)} = 0
    \end{equation}
    for all $\ell \in \{1,2,\ldots,r\}$. If we define the polynomial $p(x) = \sum_{j=1}^r \alpha_j x^{j-1}$ then Equation~\eqref{eq:lin_comb_root_unity} says exactly that
    \begin{equation}
        p(\omega^{j_1-1}) = \cdots = p(\omega^{j_r-1}) = 0.
    \end{equation}
    Since $p$ is a polynomial of degree at most $r-1$ and has $r$ roots, it must be the case that $p(x) = 0$ for all $x \in \C$. It follows that $\alpha_1 = \cdots = \alpha_r = 0$, so $\y = \mathbf{0}$. We have thus shown that there are no nonzero vectors in $\im(G)$ with at most $k-1$ nonzero entries, so $G$ is not $(k-1)$-incoherent.
    
    It remains only to prove that $G_{\vareps}$ is not $k$-incoherent. Similar to before, let $\y \in \im(G_{\vareps})$ have at most $k = n - r + 1$ nonzero entries, and let $j_1$, $\ldots$, $j_{r-1}$ be indices for which $y_{j_1} = \cdots = y_{j_{r-1}} = 0$. Then there exist scalars $\{\alpha_j\} \subset \C$ (in particular, $\alpha_j = \vv_j^*\y/n$) such that $\y = \sum_{j=1}^r \alpha_j \vv_j$, so
    \begin{equation}\label{eq:lin_comb_root_unityb}
        \sum_{j=1}^r \alpha_j \omega^{(j_{\ell}-1)(j-1)} = 0
    \end{equation}
    for all $\ell \in \{1,2,\ldots,r-1\}$.  If we define the polynomial $p(x) = \sum_{j=1}^r \alpha_j x^{j-1}$ then Equation~\eqref{eq:lin_comb_root_unityb} says exactly that
    \begin{equation}
        p(\omega^{j_1-1}) = \cdots = p(\omega^{j_{r-1}-1}) = 0.
    \end{equation}
    Since $p$ is a polynomial of degree at most $r-1$, and we have found $r-1$ roots of it, it must have the form $p(x) = \alpha_r (x-\omega^{j_1-1}) \cdots (x-\omega^{j_{r-1}-1})$. It follows that $\alpha_1=\pm \alpha_r (\omega^{j_1-1}\cdots \omega^{j_{r-1}-1}),$ so $\abs{\alpha_1} = \abs{\alpha_r}$, which implies
    \begin{equation}\label{eq:equal_abs_value_tight}
        \abs{\vv_1^*\y} = \abs{\vv_r^*\y}.
    \end{equation}
    
    Now suppose (for the sake of establishing a contradiction) that $G_{\vareps}$ is $k$-incoherent, so that it has a decomposition of the form $G_{\vareps} = \sum_{i} \y_i \y_i^{*}$, where each $\y_i$ has at most $k$ nonzero entries. Since each $\vv_j$ is an eigenvector of $G_{\vareps}$, it follows that the associated eigenvalue $\lambda_j$ is given by
    \begin{equation}\label{eq:sum_of_ip_evs_tight}
        \lambda_j = \frac{1}{n}\vv_j^*G_{\vareps}\vv_j = \frac{1}{n}\sum_{i} \abs{\vv_j^*\y_i}^2.
    \end{equation}
    Notice that the form of $G_{\vareps}$ given in the statement of the theorem tells us that if we sort the eigenvalues non-increasingly then we have $\lambda_1 = n(1+\vareps)/r$, $\lambda_j = n/r$ for $j \in \{2,3,\ldots,r-1\}$, and $\lambda_r = n(1-\vareps)/r$. Since $\y_i \in \im(G_{\vareps})$ for all $i$, Equation~\eqref{eq:equal_abs_value_tight} tells us that $\abs{\vv_1^*\y_i} = \abs{\vv_r^*\y_i}$ for all $i$. But Equation~\eqref{eq:sum_of_ip_evs_tight} then implies
    \begin{equation}
        \frac{n}{r}(1+\vareps) = \lambda_1 = \lambda_r = \frac{n}{r}(1-\vareps),
    \end{equation}
    which is a contradiction that completes the proof.
\end{proof}

\begin{remark}
    The proof of Theorem~\ref{thm:main_tight} provides a way of constructing a Gram matrix that is not $k$-incoherent and whose spectrum is arbitrarily close to
    \begin{equation}\label{eq:flat_spectrum}
        \frac{n}{n-k+1}, \frac{n}{n-k+1}, \ldots, \frac{n}{n-k+1}, 0, 0, \ldots, 0,
    \end{equation}
    where the eigenvalue $n/(n-k+1)$ occurs $n-k+1$ times. In the terminology of \cite{johnston2022absolutely}, the same argument shows that~\eqref{eq:flat_spectrum} is the only spectrum belonging to an absolutely $k$-incoherent matrix of trace $n$ and rank at most $n-k+1$. Theorem~\ref{thm:main_tight} can thus be regarded as a strengthening of \cite[Theorem~4]{johnston2022absolutely}, which showed the weaker statement that no matrix with rank strictly less than $n-k+1$ is absolutely $k$-incoherent.
\end{remark}

\section{Packing bounds for \texorpdfstring{$k$}{k}-learnability}\label{sec:packing}

When $k = 1$, $k$-learnability of a multiset of states is equivalent to those states being mutually orthogonal. As a result, the number of states $n$ in a $1$-learnable set cannot exceed the dimension $d$. The following theorem generalizes this observation to higher values of $k$, establishing bounds on the number of states in a $k$-learnable multiset in terms of the dimension.

\newpage 

\begin{theorem}\label{thm:packing}
    Let $\big\{\ket{\psi_1}, \ket{\psi_2}, \ldots, \ket{\psi_n}\big\} \subset \C^d$ be a multiset of pure states that is $k$-learnable. 
    \begin{enumerate}
        \item[(1)] It holds that $n \leq dk$.
        \item[(2)] If the vectors are pairwise non-parallel, i.e., $\ket{\psi_i} \notin \spn\{\ket{\psi_j}\}$ for all $i \neq j$, then $2n \leq d(k+1)$.
    \end{enumerate}
\end{theorem}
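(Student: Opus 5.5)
The plan is to work directly with the POVM $\{M_I\}$ from the definition of $k$-learnability, using a double-counting argument. The Gram-matrix characterization of Proposition~\ref{prop:klearnable} will serve as a cross-check on the equality cases.

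\textbf{Part (1).} Since $\langle \psi_i|M_I|\psi_i\rangle = 0$ for $i\notin I$ and $\sum_I M_I = \I_d$, every $i$ satisfies $\sum_{I \ni i}\langle\psi_i|M_I|\psi_i\rangle = 1$. Summing over $i$ and exchanging the order of summation gives
\begin{equation}
n = \sum_{I} \sum_{i\in I} \langle \psi_i|M_I|\psi_i\rangle .
\end{equation}
Each $\ket{\psi_i}$ is a unit vector, so $\langle\psi_i|M_I|\psi_i\rangle \le \tr(M_I)$. Each inner sum has $k$ terms and is therefore at most $k\,\tr(M_I)$. Summing over $I$ yields $n \le k\,\tr(\I_d) = dk$, which proves (1).

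\textbf{Part (2), setup.} First refine the POVM into rank-one elements. Write each $M_I = \sum_m \ketbra{m}{m}$. The zero constraint passes to each piece: $\braket{\psi_i}{m}=0$ for all $i \notin I$. Set $w_i(m) := |\braket{\psi_i}{m}|^2$. The identity above then becomes $n = \sum_m \sum_i w_i(m)$, with $\sum_m \norm{m}^2 = d$. A per-element bound of the form $\sum_i w_i(m) \le \tfrac{k+1}{2}\norm{m}^2$ is false in general, since $k$ nearly parallel states can give a sum close to $k$. The extra factor must therefore come from a global argument.

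\textbf{Part (2), main step.} The idea is that if an element $m$ is almost parallel to some $\ket{\psi_i}$, then the other states in its support contribute little. Concretely, I would try to prove a pairing inequality. Let $i_m$ be the index maximizing $w_i(m)$. For $j\ne i_m$ in the support, the contributions $w_j(m)$ are compensated by the fact that $\ket{\psi_j}$ must also receive weight from elements where $\ket{\psi_{i_m}}$ is absent. Averaging over the support of each $m$ should then give each element an effective count of $1 + \tfrac{k-1}{2} = \tfrac{k+1}{2}$ rather than $k$.

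An equivalent route is through the Gram matrix: write $G = \sum_\ell \y_\ell\y_\ell^*$ with each $\y_\ell$ $k$-sparse and lying in $\im(G)$, which has dimension at most $d$. The goal is then $\tr G = n \le \tfrac{k+1}{2}\rank G$. Non-parallelism forbids any $\y_\ell$ supported on a single index from carrying a full rank-one block aligned with two indices.

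I expect this compensation or averaging step to be the main obstacle. I would first test it against the tight constructions: the trine states ($d=2$, $k=2$, $n=3$) and, for general $k$, $d=2$ configurations with $n = k+1$ states. These should indicate which weighting of $w_i(m)$ makes the inequality linear and summable.
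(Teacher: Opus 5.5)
Your Part (1) is correct, and it takes a different route from the paper. You bound $\langle\psi_i|M_I|\psi_i\rangle\le\tr(M_I)$ and sum over $I$. The paper instead picks, from a $k$-sparse decomposition $G=\sum_j\vv_j\vv_j^*$, $d$ vectors forming a basis of $\im(G)$. It then counts the nonzero entries of the resulting $n\times d$ matrix: every row is nonzero, and every column has at most $k$ nonzero entries. Your trace argument is shorter and purely operational. However, it only sees weights, not supports, and that is exactly why it cannot be pushed to Part (2).

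Part (2) is not proved. The ``compensation/averaging'' step is the entire content of the claim, and you never state an inequality that implements it. Your own observation that $\sum_i w_i(m)$ can be close to $k\norm{m}^2$ shows that no per-element trace-weighted count will work without a further idea. Also, non-parallelism is not a constraint on individual sparse vectors $\y_\ell$, as your last sentence suggests; it is a constraint on the \emph{rows} of a matrix built from them.

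The missing idea is to stop summing over all POVM elements and instead count supports on exactly $d$ of them. In your own setup this goes as follows.
\begin{itemize}
\item The rank-one vectors satisfy $\sum\ketbra{m}{m}=\I_d$, so they span $\C^d$. Pick $d$ of them, $\ket{m_1},\ldots,\ket{m_d}$, forming a basis, and set $C=[\ket{m_1},\ldots,\ket{m_d}]$, which is invertible.
\item Define $A\in\C^{n\times d}$ by $A_{i,j}=\braket{\psi_i}{m_j}$. Column $j$ has at most $k$ nonzero entries, because $\ket{m_j}$ is orthogonal to every $\ket{\psi_i}$ with $i\notin I_j$. Hence $A$ has at most $kd$ nonzero entries.
\item Row $i$ of $A$ equals $\langle\psi_i|C$. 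Since $C$ is invertible, the rows are nonzero and pairwise non-parallel.
\item A row with a single nonzero entry is a multiple of some $\e_j^\t$, and non-parallelism allows at most one such row per $j$. So if $t\le d$ rows have support size one, $A$ has at least $t+2(n-t)\ge 2n-d$ nonzero entries.
\end{itemize}
Comparing the two counts gives $2n-d\le kd$. This is the paper's argument, which phrases it via a basis of $\im(G)$ and the relation $A=YC$. It also recovers Part (1), since every row of $A$ is nonzero.
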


\begin{proof}
    We may assume that $\dim(\spn\{\ket{\psi_1}, \ket{\psi_2}, \ldots, \ket{\psi_n}\}) = d$, as this can only make the desired inequality tighter. Let $G$ be the Gram matrix of this multiset, and let $\vv_1, \vv_2, \ldots, \vv_m \in \C^n$ be vectors with at most $k$ nonzero entries such that
    \begin{equation}
        G = \sum_{j=1}^m \vv_j \vv_j^*.
    \end{equation}
    It follows that $\im(G)$ (the image of $G$) is spanned by $\vv_1, \vv_2, \ldots, \vv_m$. After permuting terms, we may assume that $\{\vv_1, \vv_2, \ldots, \vv_d\}$ forms a basis for $\im(G)$. Let $A = [\vv_1, \vv_2, \ldots, \vv_d] \in \C^{n \times d}$ and $Y = [\ket{\psi_1}, \ket{\psi_2}, \ldots, \ket{\psi_n}]^* \in \C^{n \times d}$. Since $\im(A) = \im(G) = \im(Y)$, there exists an invertible matrix $C \in \C^{d \times d}$ for which $A = YC$.
    
    We first prove the bound $n \leq dk$. Since the vectors $\ket{\psi_j}$ are nonzero, it follows that the rows of $A$ are nonzero. Let $s$ denote the total number of nonzero entries of $A$. Since every row is nonzero, we have $s \geq n$. On the other hand, $A$ has $d$ columns and each column has support size at most $k$, so $s \leq kd$. Combining the two bounds gives $n \leq kd$.
    
    Now we prove the bound $2n \leq d(k+1)$ under the assumption that the vectors are pairwise non-parallel. Note that the rows of $Y$ are pairwise non-parallel, as these are the vectors $\ket{\psi_j}^*$. Since $C$ is invertible, the rows of $A$ are also pairwise non-parallel. Let $t$ be the number of rows of $A$ having exactly one nonzero entry. Any such row is a nonzero scalar multiple of one of the standard basis vectors $\e_1, \e_2, \ldots, \e_d$. Since the rows are pairwise non-parallel, there can be at most one such row for each coordinate, so $t \leq d$. Let $s$ denote the total number of nonzero entries of $A$. Since every row is nonzero, and every row that is not a scalar multiple of an $\e_j$ has support size at least $2$, we have
    \begin{equation}
        s \geq t + 2(n-t) = 2n-t \geq 2n-d.
    \end{equation}
    On the other hand, $A$ has $d$ columns and each column has support size at most $k$, so $s \leq kd$. Combining the two bounds gives $2n-d \leq kd$, hence $2n\leq d(k+1)$, which completes the proof.
\end{proof} 

Next, we show that the bounds of Theorem~\ref{thm:packing} are tight: for any tuple of integers $(k,d,n)$ satisfying the inequalities specified by that theorem, there exists a $k$-learnable multiset of $n$ pure states in $\C^d$.

\begin{theorem}\label{thm:packing_tight}
    Let $k, d \leq n$ be positive integers. 
    \begin{enumerate}
        \item[(1)] If $n \leq dk$, then there exists a multiset of pure states $\big\{\ket{\psi_1}, \ket{\psi_2}, \ldots,\ket{\psi_n}\big\} \subset \C^d$ that is $k$-learnable.
        \item[(2)] If $2n\leq d(k+1)$, then there exists a $k$-learnable multiset of pure states $\big\{\ket{\psi_1}, \ket{\psi_2}, \ldots, \ket{\psi_n}\big\} \subset \C^d$ that is pairwise non-parallel, i.e., $\ket{\psi_i} \notin \spn\{\ket{\psi_j}\}$ for all $i \neq j$.
    \end{enumerate}
\end{theorem}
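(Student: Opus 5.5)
We prove both parts by constructing Gram matrices directly. By Proposition~\ref{prop:klearnable}, it suffices to exhibit a positive semidefinite $n \times n$ matrix $G$ with unit diagonal, rank at most $d$, and a decomposition $G = \sum_{j} \vv_j \vv_j^*$ in which each $\vv_j$ has at most $k$ nonzero entries. Equivalently, we look for an $n \times d$ matrix $A$ with unit-norm rows whose columns each have support of size at most $k$. Then $G = AA^*$ is $k$-incoherent with rank at most $d$. The states are the conjugates of the rows of $A$, viewed in $\C^d$. This reverses the counting argument in the proof of Theorem~\ref{thm:packing}, so the task is purely combinatorial: design the support pattern of $A$ and then choose entries.

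For part (1), assume $n \le dk$ and partition $\{1,\ldots,n\}$ into $d$ blocks $B_1,\ldots,B_d$, each of size at most $k$. Some blocks may be empty; this is allowed since $d \le n$ and we only need dimension at most $d$. For $i \in B_c$, set the $i$-th row of $A$ to $\e_c^\t$. Each column $c$ then has support $B_c$ of size at most $k$, and every row is a unit vector. Hence $G = AA^*$ is block diagonal with all-ones blocks $\one\one^\t$ of size at most $k$, which is visibly $k$-incoherent. The corresponding states are $\ket{\psi_i} = \ket{c}$ for $i \in B_c$. This is a multiset with repetitions, which part (1) permits.

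For part (2), we need the rows of $A$ to be pairwise non-parallel, so we follow the equality case of the counting in Theorem~\ref{thm:packing}. We use $t \le d$ rows of support one, at most one per coordinate, and make all other rows have support exactly two. The support pattern is then a multigraph on the vertex set $\{1,\ldots,d\}$ of columns. Each support-two row is an edge, and each support-one row is a half-edge or marker at a vertex. The column support constraint becomes: for every vertex, its degree plus its marker count is at most $k$.

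Since $2n \le d(k+1)$, we have $2n - d \le dk$. We therefore place a marker at every vertex (or at fewer vertices, if $n < d$ makes that convenient) and distribute $n - d$ edges so that every vertex has degree at most $k-1$. This is possible when $2(n-d) \le d(k-1)$, which is exactly the hypothesis. Concretely, we take a multigraph with a nearly regular degree sequence, allowing parallel edges; the cases $k = 1$ and $n \le d$ are handled separately, using distinct basis vectors.

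Parallel edges produce rows with the same support, so we assign them distinct ratios, for instance $(a, b) = \tfrac{1}{\sqrt{2}}(1, e^{i\theta})$ with distinct phases $\theta$. This keeps the rows non-parallel, and rows with different supports are automatically non-parallel.

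The main obstacle is the graph-realization step: showing that a multigraph on $d$ vertices with $n - d$ edges and maximum degree at most $k - 1$ exists. This holds, for example, by greedily pairing the vertices of largest remaining capacity, or by taking a circulant arrangement. Care is needed with loops, which must be excluded since a loop would give a support-one row, and with small edge cases such as $d = 1$. If $d = 1$, all states are parallel, so $n = 1$ is forced, consistent with $2n \le k + 1$. For $d \ge 2$, avoiding loops is possible because the total capacity suffices and parallel edges are permitted. After this step, unit-norm rows give a unit-diagonal $G$, and Proposition~\ref{prop:klearnable} finishes the proof.
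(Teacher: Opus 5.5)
Your proposal is correct and is essentially the paper's proof. Part (1) stacks copies of $\e_1,\e_2,\ldots$ in blocks of at most $k$, and part (2) uses the $d$ standard basis rows plus one row $\frac{1}{\sqrt{2}}(\e_{i}+e^{i\phi}\e_{j})$ per edge of a loopless multigraph on $d$ vertices with $n-d$ edges and maximum degree at most $k-1$, built greedily with distinct phases on parallel edges, so every column of $A$ has support at most $k$; your extra special cases ($n<d$, $k=1$, $d=1$) are harmless, since under the hypothesis $d\le n$ they all reduce to the edgeless case $n=d$.
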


\begin{proof}
    We begin by proving the first statement. Write $n=qk+s$ for a positive integer $q$ and $0 \leq s < k$.
    Consider the following $n$ unit vectors:
    \begin{equation}
        \underbrace{\e_1,\ldots,\e_1}_{k\text{ times}}, \
        \underbrace{\e_2,\ldots,\e_2}_{k\text{ times}}, \
        \ldots, \
        \underbrace{\e_q,\ldots,\e_q}_{k\text{ times}}, \
        \underbrace{\e_{q+1},\ldots,\e_{q+1}}_{s\text{ times}}.
    \end{equation}
    These vectors span a space of dimension at most $d$, and are clearly $k$-learnable.
    
    Now we prove the second statement. Let $m = n-d$, so $2m = 2n-2d \leq d(k-1)$. Let $H$ be a multigraph on vertex set $\{1,2,\ldots,d\}$ with $m$ edges and maximum vertex degree at most $k-1$. To construct $H$, start with the edgeless graph on $d$ vertices and at each step add an edge between any two vertices whose degrees are at most the degrees of all of the remaining $d-2$ vertices. One can easily check that the difference between the maximum and minimum vertex degree at any step is at most 1. Furthermore, the inequality $2m \leq d(k-1)$ guarantees that if $m$ edges are added in this way, no vertex will have degree greater than $k-1$.
    
    Let $A \in \C^{n \times d}$ be defined as follows. First include the $d$ standard basis vectors $\e_1, \e_2, \ldots, \e_d$ as rows. Next, list the edges of $H$ as $\{i_t,j_t\}$ for $t \in \{1,2,\ldots,m\}$. Choose phases $e^{i \phi_t}$ (distinct if parallel edges occur) and define additional rows of $A$ to be
    \begin{equation}
        \w_t := \frac{1}{\sqrt{2}}\left(\e_{i_t}+e^{i\phi_t} \e_{j_t}\right) \quad \text{for} \quad t \in \{1, 2, \ldots, m\}.
    \end{equation}
    It follows that $A$ has $d+m = n$ rows, and every row has norm $1$, so $G := AA^*$ is a Gram matrix of $n$ unit vectors. The first $d$ rows of $A$ are $\e_1, \e_2, \ldots, \e_d$, so $\rank(A) = d$, which implies $\rank(G) = \rank(AA^*) = d$.
    
    Let $\vv_j$ denote the $j$-th column of $A$. Then
    \begin{equation}
        G = AA^* = \sum_{j=1}^d \vv_j\vv_j^* .
    \end{equation}
    Notice that $\vv_j$ is nonzero in its $j$-th entry (since that entry comes from the row $\e_j$ in $A$) and in the entries corresponding to edges incident to vertex $j$. Hence
    \begin{equation}
        \abs{\supp(\vv_j)} = 1 + \deg_H(j) \leq k.
    \end{equation}
    In particular, this implies that $G$ is $k$-incoherent. Hence, the complex conjugates of the rows of $A$ form a $k$-learnable multiset of pure states, which completes the proof.
\end{proof} 

\section{Copy complexity bounds for \texorpdfstring{$k$}{k}-learnability}\label{sec:consequences}

The number of copies of an unknown quantum state required to perform a learning task is commonly called its \emph{sample complexity} or \emph{copy complexity}. It has been studied for problems including state tomography~\cite{HaahEtAl2017Tomography,montanaro2017learning}, PAC and shadow learning~\cite{Aaronson2007Learnability,Aaronson2018ShadowTomography,HuangKuengPreskill2020}, state certification~\cite{BadescuODonnellWright2019}, and property testing~\cite{ODonnellWright2021QuantumSpectrumTesting}. Here we study an exact, zero-error version of copy complexity: given a finite multiset $S$, we ask for the minimum number of copies after which every measurement outcome identifies a list of at most $k$ candidates containing the unknown state. Since
\[
\braket{\psi_i^{\otimes c}}{\psi_j^{\otimes c}}
=\braket{\psi_i}{\psi_j}^{\,c},
\]
taking additional copies suppresses nontrivial overlaps, and we determine when this makes the resulting states $k$-learnable.

\begin{definition}
    Let $S=\{\ket{\psi_1}, \ket{\psi_2}, \ldots,\ket{\psi_n}\} \subset \C^d$ be a multiset of pure states. We say that $S$ is \emph{$(c,k)$-learnable} if the multiset $S^{(c)}:=\{\ket{\psi_1}^{\otimes c},\ldots,\ket{\psi_n}^{\otimes c}\}$ is $k$-learnable.
    
    We define the \emph{copy complexity of $k$-learning} for $S$, denoted $c_k(S)$, to be the minimum $c$ for which $S$ is $(c,k)$-learnable. If no finite $c$ exists, then we define $c_k(S)=\infty$.
\end{definition}

\subsection{Copy complexity bounds from maximum pairwise overlap}

Our main theorem provides a generic copy complexity bound for $2$-learning an arbitrary multiset in terms of the overlap structure of the multiset of states.

\begin{corollary}\label{cor:copy_complexity}
    Let $S = \{\ket{\psi_1}, \ket{\psi_2}, \ldots, \ket{\psi_n}\} \subset \C^d$ be a multiset of $n \geq 3$ pure states with maximum pairwise overlap $0 < \delta := \max_{i \neq j} \abs{\braket{\psi_i}{\psi_j}}^2 < 1$. Then $S$ is $(c, 2)$-learnable for
    \begin{equation}
        c = \ceil{\frac{2\log(n-1)}{\log(1/\delta)}}.
    \end{equation}
\end{corollary}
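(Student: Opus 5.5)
We will apply the coherence criterion~\eqref{eq:thm_main_coh} of Theorem~\ref{thm:main} with $k=2$ to the multiset $S^{(c)}$. For $k=2$ that criterion says a multiset of $n$ pure states is $2$-learnable as soon as every off-diagonal overlap has modulus at most
\[
\sqrt{\frac{1}{(n-1)(n-1)}} = \frac{1}{n-1}.
\]
The pairwise overlaps of $S^{(c)}$ are $\braket{\psi_i}{\psi_j}^{c}$, so for $i\neq j$ we get
\[
\abs{\braket{\psi_i^{\otimes c}}{\psi_j^{\otimes c}}} = \abs{\braket{\psi_i}{\psi_j}}^{c} \le \delta^{c/2}.
\]
It therefore suffices to choose $c$ with $\delta^{c/2}\le 1/(n-1)$.

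Because $0<\delta<1$, we have $\log(1/\delta)>0$. Taking logarithms, the condition $\delta^{c/2}\le 1/(n-1)$ is equivalent to
\[
\frac{c}{2}\log(1/\delta)\ge \log(n-1),
\qquad\text{that is,}\qquad
c\ge \frac{2\log(n-1)}{\log(1/\delta)}.
\]
The stated value $c=\ceil{\cdot}$ satisfies this inequality. The hypothesis $n\ge3$ makes $\log(n-1)>0$, so this $c$ is a positive integer and $S^{(c)}$ is a genuine tensor-power multiset. With this $c$, Theorem~\ref{thm:main} shows that $S^{(c)}$ is $2$-learnable, which is exactly the statement that $S$ is $(c,2)$-learnable.

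There is no real obstacle here; the points needing care are checking that the coherence bound in Theorem~\ref{thm:main} simplifies to $1/(n-1)$ at $k=2$, and remembering that $\delta$ is the \emph{squared} overlap, which is where the factor of $2$ in the formula comes from. An alternative route is Proposition~\ref{prop:row_regular_2_learning}: the row sums of the $c$-copy Gram matrix satisfy $R_i\le 1+(n-1)\delta^{c/2}\le 2$, which gives the same threshold.
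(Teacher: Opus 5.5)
Your proof is correct and is essentially the paper's argument: the paper applies the Frobenius-norm form of Theorem~\ref{thm:main} at $k=2$, bounding $\norm{G^{(c)}}_{\textup{F}}^2 \le n + n(n-1)\delta^c \le n^2/(n-1)$. You apply its entrywise corollary~\eqref{eq:thm_main_coh}, which is exactly that bound specialized to a uniform cap on the off-diagonal entries, so both routes reduce to $\delta^{c}\le (n-1)^{-2}$. Your alternative via Proposition~\ref{prop:row_regular_2_learning} also works and is more elementary, since that implication rests only on diagonal dominance rather than on the full proof of Theorem~\ref{thm:main}.
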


\begin{proof}
    The squared Frobenius norm of the Gram matrix $G^{(c)}$ of $S^{(c)} =
    \{\ket{\psi_i}^{\otimes c}\}$ satisfies
    \begin{equation}
        \norm{{G^{(c)}}}_{\textup{F}}^2 = \sum_{i,j=1}^n \abs{\braket{\psi_i}{\psi_j}}^{2c}
        = n + \sum_{i \neq j} \abs{\braket{\psi_i}{\psi_j}}^{2c}
        \leq n + n(n-1)\delta^c.
    \end{equation}
    By Theorem~\ref{thm:main} with $k=2$, $S$ is $(c,2)$-learnable
    whenever $\norm{{G^{(c)}}}_{\textup{F}}^2 \leq n^2/(n-1)$. This holds when
    \begin{equation}
        n(n-1)\delta^c \leq \frac{n}{n-1},
    \end{equation}
    i.e., $\delta^c \leq (n-1)^{-2}$. Taking logarithms gives $c \geq
    2\log(n-1)/\log(1/\delta)$.
\end{proof}

\subsection{SIC-POVMs and mutually unbiased bases}

In this section we determine copy complexity bounds for $k$-learnability of SIC-POVMs and mutually unbiased bases. 
We use a straightforward result on $(c,k)$-learnability of complex projective $c$-designs, which we soon state.

We say that a multiset of states $S = \big\{\ket{\psi_1}, \ket{\psi_2}, \ldots, \ket{\psi_n}\big\} \subset \C^d$ forms a \emph{(complex projective) $c$-design} if
\begin{equation}
    \sum_{i=1}^n \ketbra{\psi_i}{\psi_i}^{\otimes c} = \frac{n}{\binom{d+c-1}{c}} \Pi_{d,c},
\end{equation}
where $\Pi_{d,c}$ is the projection to $S^c(\C^d) \subseteq (\C^d)^{\otimes c}$. Here, $S^c(\C^d)$  is the \emph{symmetric subspace} of vectors ${\bf v} \in (\C^d)^{\otimes c}$ for which $v_{i_1,\dots,i_c}=v_{i_{\sigma(1)},\dots,i_{\sigma(c)}}$ for all permutations $\sigma \in \mathfrak{S}_c$.

\begin{corollary}\label{cor:learning_designs}
    Let $S = \big\{\ket{\psi_1}, \ket{\psi_2}, \ldots, \ket{\psi_n}\big\} \subset \C^d$ be a multiset of pure states that forms a $c$-design. Then $S$ is $(c,k)$-learnable for $k = n-\binom{d+c-1}{c}+1$.
\end{corollary}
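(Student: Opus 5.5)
The plan is to apply the Frobenius-norm criterion of Theorem~\ref{thm:main} to the multiset $S^{(c)}=\{\ket{\psi_i}^{\otimes c}\}$. Its Gram matrix $G^{(c)}$ has entries $\braket{\psi_i}{\psi_j}^c$. Write $D:=\binom{d+c-1}{c}$ and $k:=n-D+1$, so that $n-k+1=D$. It then suffices to prove
\begin{equation}
\sum_{i,j=1}^n \abs{\braket{\psi_i}{\psi_j}}^{2c} \leq \frac{n^2}{D}.
\end{equation}

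The key step is to identify the left-hand side with a trace. Set $F:=\sum_{i=1}^n \ketbra{\psi_i}{\psi_i}^{\otimes c}$. Then
\begin{equation}
\tr(F^2)=\sum_{i,j=1}^n \tr\bigl(\ketbra{\psi_i}{\psi_i}^{\otimes c}\ketbra{\psi_j}{\psi_j}^{\otimes c}\bigr)=\sum_{i,j=1}^n \abs{\braket{\psi_i}{\psi_j}}^{2c}=\norm{G^{(c)}}_{\textup{F}}^2 .
\end{equation}
This is the usual fact that the frame operator $YY^*$ and the Gram matrix $Y^*Y$ have the same nonzero spectrum. Now use the design hypothesis $F=\frac{n}{D}\Pi_{d,c}$. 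Since $\Pi_{d,c}$ is a projection of rank $\dim S^c(\C^d)=D$, we get
\begin{equation}
\tr(F^2)=\frac{n^2}{D^2}\tr(\Pi_{d,c})=\frac{n^2}{D}.
\end{equation}
So inequality~\eqref{eq:thm_main_frob} holds with equality for $S^{(c)}$, and Theorem~\ref{thm:main} shows that $S^{(c)}$ is $k$-learnable. That is exactly $(c,k)$-learnability of $S$.

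The only remaining point is bookkeeping on the range of $k$, since Theorem~\ref{thm:main} requires $k\in\{1,\ldots,n\}$. The upper bound $k\leq n$ holds because $D\geq 1$. For the lower bound, taking traces in the design identity is consistent for every $n$, so it gives nothing directly. Instead, use that the vectors $\ket{\psi_i}^{\otimes c}$ must span $S^c(\C^d)$, because $F$ is proportional to $\Pi_{d,c}$. Hence $n\geq D$, which gives $k\geq 1$. I expect no real obstacle here; the proof is a direct substitution into Theorem~\ref{thm:main}.
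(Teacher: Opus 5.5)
Your proposal is correct and follows essentially the same route as the paper: both obtain $\norm{G^{(c)}}_{\textup{F}}^2 = n^2/\binom{d+c-1}{c}$ from the design identity and then invoke Theorem~\ref{thm:main}. The paper gets this from the shared nonzero spectrum of $A^*A$ and $AA^*$, you get it from $\tr(F^2)$; your extra check that $n \geq \binom{d+c-1}{c}$, so that $k \geq 1$, is a small point the paper leaves implicit.
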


\begin{proof}
    Let $A = [\ket{\psi_1}^{\otimes c}, \ldots, \ket{\psi_n}^{\otimes c}]$ be the $d^c \times n$ matrix with columns $\ket{\psi_i}^{\otimes c}$. By the definition of $c$-design, it holds that
    \begin{equation}
        AA^* = \sum_{i=1}^n \ketbra{\psi_i}{\psi_i}^{\otimes c} = \frac{n}{\binom{d+c-1}{c}} \Pi_{d,c}.
    \end{equation}
    Note that $G=A^*A$ has the same nonzero eigenvalues as $AA^*$, which are $n \binom{d+c-1}{c}^{-1}$ with multiplicity $\binom{d+c-1}{c}$. It follows that 
    \begin{equation}
        \norm{G}_{\textup{F}} = \frac{n}{\sqrt{\binom{d+c-1}{c}}}.
    \end{equation}
    The corollary now follows immediately from Theorem~\ref{thm:main}.
\end{proof}

\subsubsection{SIC-POVMs}

A \emph{symmetric informationally complete POVM} (SIC-POVM) is a set $S = \{\ket{\psi_1}, \ket{\psi_2}, \ldots,\ket{\psi_{d^2}}\} \subset \C^d$ for which $\abs{\braket{\psi_i}{\psi_j}}^2 = 1/(d+1)$ for all $i \neq j$. We first use Proposition~\ref{prop:row_regular_2_learning} to determine the copy complexity of $2$-learning a SIC-POVM.

\begin{corollary}\label{cor:SIC_exact_2learning}
    Let $S = \big\{\ket{\psi_1}, \ket{\psi_2}, \ldots,\ket{\psi_{d^2}}\big\} \subset \C^d$ be a SIC-POVM. Then
    \begin{equation}
        c_2(S) =
        \begin{cases}
            2, & d = 2,\\
            3, & d = 3,\\
            4, & d \geq 4.
        \end{cases}
    \end{equation}
\end{corollary}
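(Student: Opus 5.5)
The plan is to apply Proposition~\ref{prop:row_regular_2_learning} to the copied multiset $S^{(c)}$. Since $\abs{\braket{\psi_i}{\psi_j}}^2 = 1/(d+1)$ for all $i\neq j$, the overlaps of $S^{(c)}$ are $\abs{\braket{\psi_i}{\psi_j}}^c = (d+1)^{-c/2}$. Each row of the Gram matrix of $S^{(c)}$ therefore has the same absolute row sum,
\begin{equation}
R(c) = 1 + (d^2-1)(d+1)^{-c/2}.
\end{equation}
By the ``in particular'' clause of Proposition~\ref{prop:row_regular_2_learning}, $S^{(c)}$ is $2$-learnable if and only if $R(c)\le 2$. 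This is equivalent to
\begin{equation}
(d^2-1) \le (d+1)^{c/2}, \quad\text{i.e.,}\quad (d-1)^2 (d+1)^2 \le (d+1)^{c}.
\end{equation}
Since $R(c)$ is decreasing in $c$, the quantity $c_2(S)$ is the least integer $c$ satisfying this inequality. Note that $c=1$ always fails for $d\ge 2$, because $d^2-1 > \sqrt{d+1}$.

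The remaining work is case analysis in $d$. We need the least $c$ with $(d-1)^2 \le (d+1)^{c-2}$.
\begin{itemize}
\item For $d=2$, the left side is $1$, so $c=2$ works. Since $c=1$ fails, $c_2(S)=2$.
\item For $d=3$, the condition reads $4\le 4^{c-2}$, so $c=3$ works. The value $c=2$ would require $4\le 1$, which is false, so $c_2(S)=3$.
\item For $d\ge 4$, take $c=3$: it would require $(d-1)^2\le d+1$, i.e., $d^2-3d\le 0$, i.e., $d\le 3$. So $c=3$ fails. Take $c=4$: it requires $(d-1)^2\le (d+1)^2$, which always holds. Hence $c_2(S)=4$.
\end{itemize}

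The only point needing care is confirming that the row-regular ``if and only if'' in Proposition~\ref{prop:row_regular_2_learning} applies. The absolute row sums are independent of $i$ precisely because the multiset is equiangular in modulus, and taking tensor powers preserves this, since $\abs{\braket{\psi_i}{\psi_j}^c}$ is again constant off the diagonal. Apart from that, the argument is elementary inequalities, so I do not expect a genuine obstacle. The main thing to double-check is the boundary case $d=3$, $c=3$, where equality $R=2$ holds and the proposition still yields $2$-learnability.
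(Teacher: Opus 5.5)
Your proposal is correct and follows the paper's proof. Both compute the constant absolute row sum $R = 1 + (d^2-1)(d+1)^{-c/2}$ of the Gram matrix of $S^{(c)}$ and apply the row-regular ``if and only if'' clause of Proposition~\ref{prop:row_regular_2_learning}, which gives $c_2(S)=\ceil{2\log(d^2-1)/\log(d+1)}$. Your case analysis via $(d-1)^2 \le (d+1)^{c-2}$, including the equality cases $d=2,c=2$ and $d=3,c=3$, is the simplification the paper leaves to the reader.
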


\begin{proof}
    For a SIC-POVM, $\abs{\braket{\psi_i}{\psi_j}}^2 = 1/(d+1)$ whenever $i \neq j$. Hence the Gram matrix of $S^{(c)} = \big\{\ket{\psi_1}^{\otimes c}, \ldots, \ket{\psi_{d^2}}^{\otimes c}\big\}$ satisfies
    \begin{equation}
        R := \sum_{j=1}^{d^2} \abs{\braket{\psi_i}{\psi_j}}^c = 1 + \frac{d^2-1}{\sqrt{(d+1)^c}}
    \end{equation}
    for all $i \in \{1, 2, \ldots, d^2\}$. By Proposition~\ref{prop:row_regular_2_learning}, $S$ is $(c,2)$-learnable if and only if $R\leq2$, or equivalently,
    \begin{equation}
        c \geq \frac{2\log(d^2-1)}{\log(d+1)}.
    \end{equation}
    Therefore
    \begin{equation}
        c_2(S) = \ceil{\frac{2\log(d^2-1)}{\log(d+1)}}.
    \end{equation}
    One can easily verify that this simplifies to the expression in the corollary statement.
\end{proof}

The above corollary shows that $4$ copies always suffice to $2$-learn a SIC-POVM. We now use Theorem~\ref{thm:main} to provide a bound on the $k$-learnability of a SIC-POVM if we are given fewer than $4$ copies.

\begin{corollary}\label{cor:SIC_POVM}
    Let $S \subset \C^d$  be a SIC-POVM. Then $S$ is
    \begin{align}
        &\text{$k$-learnable}\quad \text{ for} \quad  k=d^2-d+1,\\
        &\text{$(2,k)$-learnable}\quad\text{ for} \quad k=\binom{d}{2}+1,\\
        &\text{$(3,k)$-learnable}\quad\text{ for} \quad  k=\ceil{\frac{d^2+3}{d+3}}.
     \end{align}
\end{corollary}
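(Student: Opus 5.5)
The plan is to reduce all three claims to the Frobenius-norm criterion of Theorem~\ref{thm:main}. A SIC-POVM has $n=d^2$ and $\abs{\braket{\psi_i}{\psi_j}}^2 = 1/(d+1)$ for $i\neq j$. Hence the Gram matrix $G^{(c)}$ of $S^{(c)}$ satisfies
\begin{equation}
\norm{G^{(c)}}_{\textup{F}}^2 = d^2 + \frac{d^2(d^2-1)}{(d+1)^c} = d^2\left(1 + \frac{d-1}{(d+1)^{c-1}}\right).
\end{equation}
Theorem~\ref{thm:main} gives $(c,k)$-learnability as soon as $\norm{G^{(c)}}_{\textup{F}}^2 \le d^4/(d^2-k+1)$. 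This is equivalent to
\begin{equation}
d^2-k+1 \le \frac{d^2}{1+(d-1)(d+1)^{1-c}}.
\end{equation}
For each $c\in\{1,2,3\}$ I will solve this for the smallest admissible integer $k$.

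For $c=1$, the right-hand side is $d^2/d = d$, which gives $k \ge d^2-d+1$. The same conclusion also follows directly from Corollary~\ref{cor:learning_designs}, since a SIC-POVM is a $1$-design with $\binom{d}{1}=d$. For $c=2$, a SIC-POVM is a $2$-design, so Corollary~\ref{cor:learning_designs} gives $k = d^2-\binom{d+1}{2}+1 = \binom{d}{2}+1$. As a check, the computation above gives
\begin{equation}
1+\frac{d-1}{d+1}=\frac{2d}{d+1}, \qquad d^2-k+1\le \frac{d(d+1)}{2},
\end{equation}
which is the same value. I will present $c=1,2$ via the design corollary and note the agreement.

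For $c=3$, a SIC-POVM is generally not a $3$-design, so I use the explicit norm computation. The factor becomes
\begin{equation}
1+\frac{d-1}{(d+1)^2}=\frac{d^2+3d}{(d+1)^2} = \frac{d(d+3)}{(d+1)^2},
\end{equation}
so the condition reads
\begin{equation}
d^2-k+1 \le \frac{d(d+1)^2}{d+3}.
\end{equation}
Write $d(d+1)^2 = (d+3)(d^2-d+4) - 12$. Then the condition becomes
\begin{equation}
k \ge d^2+1 - (d^2-d+4) + \frac{12}{d+3} = d-3+\frac{12}{d+3}.
\end{equation}
The stated value is $\ceil{(d^2+3)/(d+3)}$. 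Since $(d^2+3) = (d+3)(d-3) + 12$, this equals $\ceil{d-3+12/(d+3)}$, which is exactly the smallest integer satisfying the condition.

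One technicality remains: Theorem~\ref{thm:main} needs $k\in\{1,\dots,n\}$. Each value is at least $1$ and at most $d^2$. For $c=3$ and small $d$, a formula value exceeding $n$ cannot occur, since $d-3+12/(d+3)\le d^2$. I expect the main obstacle to be only this $c=3$ arithmetic, in particular getting the polynomial division and the ceiling right. The rest is a direct substitution into Theorem~\ref{thm:main} or Corollary~\ref{cor:learning_designs}.
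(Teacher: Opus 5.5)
Your proposal is correct and follows essentially the same route as the paper: Corollary~\ref{cor:learning_designs} (SIC-POVMs being $1$- and $2$-designs) for $c=1,2$, and direct substitution of $\norm{G^{(c)}}_{\textup{F}}^2 = d^2 + (d^4-d^2)/(d+1)^c$ into Theorem~\ref{thm:main} for $c=3$. Your polynomial division $d(d+1)^2 = (d+3)(d^2-d+4)-12$ is an explicit version of the paper's rearrangement $k = 1+\ceil{d^2(d^2-1)/((d+1)^c+d^2-1)}$, which at $c=3$ reduces to $1+\ceil{d(d-1)/(d+3)} = \ceil{(d^2+3)/(d+3)}$.
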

\begin{proof}
    It is known that SIC-POVMs form $1$-designs and $2$-designs. Hence, $S$ is $k$-learnable for $k=d^2-d+1$, and $(2,k)$-learnable for $k=\binom{d}{2}+1$ by Corollary~\ref{cor:learning_designs}. In general, for a SIC-POVM $S$, the Gram matrix $G^{(c)}$ of $S^{(c)}$ satisfies
    \begin{equation}
        \norm{{G^{(c)}}}_{\textup{F}}^2 = d^2 + \frac{d^4-d^2}{(d+1)^c}.
    \end{equation}
    By Theorem~\ref{thm:main}, $S$ is $(c,k)$-learnable if
    \begin{equation}
        d^2 + \frac{d^4-d^2}{(d+1)^c} \leq \frac{d^4}{d^2-k+1}.
    \end{equation}
    Rearranging, we have that $S$ is $(c,k)$-learnable for
    \begin{equation}
        k = 1 + \ceil{\frac{d^2(d^2-1)}{(d+1)^c+d^2-1}}.
    \end{equation}
    For $c=3$, this gives
    \begin{equation}
        k = \ceil{\frac{d^2+3}{d+3}}.
    \end{equation}
    This completes the proof.
\end{proof}

\subsubsection{Mutually unbiased bases}

A collection of \emph{mutually unbiased bases} (MUBs) is a set
\begin{equation}
    S = \{\ket{\psi_{i,j}} : i \in \{1,2,\ldots,d\}, j \in \{1,2,\ldots,s\}\} \subset \C^d
\end{equation}
for some integer $s \geq 2$ such that $\{\ket{\psi_{i,j}} : i \in \{1,2,\ldots,d\}\}$ is an orthonormal basis of $\C^d$ for each $j$, and $\abs{\braket{\psi_{i,j}}{\psi_{i',j'}}}^2 = 1/d$ for all $i,i'$ and $j \neq j'$. It is known that $s\leq d+1$ must hold for any collection of MUBs.

We first use Proposition~\ref{prop:row_regular_2_learning} to determine the exact copy complexity of $2$-learning a collection of MUBs.

\begin{corollary}\label{cor:MUB_exact_2learning}
    Let $S = \left\{\ket{\psi_{i,j}} : i \in \{1,2,\ldots,d\}, j \in \{1,2,\ldots,s\}\right\} \subset \C^d$ be a collection of $s \geq 2$ mutually unbiased bases. Then
    \begin{equation}
        c_2(S) = \ceil{2+2\log_d(s-1)}.
    \end{equation}
    In particular, $c_2(S) \leq 4$, with equality for a complete collection of $d+1$ mutually unbiased bases.
\end{corollary}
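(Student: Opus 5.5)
The plan mirrors the proof of Corollary~\ref{cor:SIC_exact_2learning}: reduce to Proposition~\ref{prop:row_regular_2_learning} by showing that the Gram matrix of the $c$-copy states is row-regular in absolute value.

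\textbf{Row sums.} Fix a state $\ket{\psi_{i,j}}$. Its overlaps with the other states fall into three groups:
\begin{itemize}
\item itself, with overlap $1$;
\item the other $d-1$ vectors of basis $j$, with overlap $0$;
\item the $d(s-1)$ vectors in the other $s-1$ bases, each with $\abs{\braket{\psi_{i,j}}{\psi_{i',j'}}} = d^{-1/2}$.
\end{itemize}
Hence the Gram matrix of $S^{(c)}$ has constant absolute row sum
\begin{equation}
R = 1 + d(s-1)\, d^{-c/2} = 1 + (s-1)\, d^{1-c/2},
\end{equation}
which does not depend on the row.

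\textbf{Exact threshold.} By Proposition~\ref{prop:row_regular_2_learning}, $S$ is $(c,2)$-learnable if and only if $R \le 2$. This is the condition $(s-1)d^{1-c/2} \le 1$. Taking $\log_d$, it becomes $\log_d(s-1) + 1 - c/2 \le 0$, that is,
\begin{equation}
c \ge 2 + 2\log_d(s-1).
\end{equation}
So the minimal integer $c$ is $\ceil{2+2\log_d(s-1)}$. Note that when $s=2$ this gives $c_2(S)=2$, and the zero overlaps within a basis cause no trouble.

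\textbf{Final claim.} Since $s \le d+1$, we have $\log_d(s-1) \le 1$, so $c_2(S) \le 4$.
\begin{itemize}
\item For $s = d+1$ the expression equals exactly $4$.
\item For $s < d+1$ we have $\log_d(s-1) < 1$, so the value is at most $4$.
\end{itemize}
The only mild care needed is this ceiling bookkeeping, together with the degenerate case $d=1$. There $\log_d$ is undefined, but MUBs with $s \ge 2$ require $d \ge 2$, so we restrict to $d \ge 2$.
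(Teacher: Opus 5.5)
Your proposal is correct and follows the paper's proof essentially step for step. The shared steps are the constant absolute row sum $R = 1+(s-1)d^{1-c/2}$ of the Gram matrix of $S^{(c)}$, Proposition~\ref{prop:row_regular_2_learning} to turn $R \leq 2$ into the exact threshold $c \geq 2+2\log_d(s-1)$, and $s \leq d+1$ for the final bound. One small correction: your aside that $s \geq 2$ forces $d \geq 2$ is not literally true (in $\C^1$ any two unit vectors are trivially unbiased), but excluding $d=1$ is still right, since the formula involves $\log_d$ and the paper implicitly assumes $d \geq 2$.
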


\begin{proof}
    Fix $\ket{\psi_{i,j}} \in S$. The states in the same basis are orthogonal, while for every $j' \neq j$ and every $i' \in \{1,2,\ldots,d\}$, we have
    \begin{equation}
        \abs{\braket{\psi_{i,j}}{\psi_{i',j'}}} = \frac{1}{\sqrt{d}}.
    \end{equation}
    It follows that the sum of the absolute values of any row in the Gram matrix of
    \begin{equation}
        S^{(c)} = \left\{\ket{\psi_{i,j}}^{\otimes c} : i \in \{1,2,\ldots,d\}, j \in \{1,2,\ldots,s\} \right\}
    \end{equation}
    is given by $R = 1+(s-1)d^{1-c/2}$. By Proposition~\ref{prop:row_regular_2_learning}, $S$ is $(c,2)$-learnable if and only if $R \leq2 $, or equivalently if and only if
    \begin{equation}
        c \geq 2+2\log_d(s-1).
    \end{equation}
    This proves the general expression for $c_2(S)$.

    Finally, since $s \leq d+1$ for any collection of mutually unbiased bases, $\log_d(s-1) \leq 1$, so $c_2(S) \leq 4$. If $s = d+1$, then $c_2(S) = \ceil{2+2\log_d d} = 4$, so the bound is attained for a maximal collection of MUBs.
\end{proof}

The previous corollary showed that $4$ copies of a collection of MUBs always suffice to $2$-learn them. The following corollary tells us what (weaker) type of learning is possible if we have fewer than $4$ copies.

\begin{corollary}\label{cor:MUB}
    Let $S = \left\{\ket{\psi_{i,j}} : i \in \{1,2,\ldots,d\}, j \in \{1,2,\ldots,s\}\right\} \subset \C^d$ be a collection of $s \geq 2$ mutually unbiased bases. Then $S$ is
    \begin{align}
        \text{$k$-learnable}\quad \text{ for} \quad & k=sd-d+1,\\
        \text{$(2,k)$-learnable}\quad\text{ for} \quad & k=1+{\ceil{\frac{s(s-1)d}{d+s-1}},}\\
        \text{$(3,k)$-learnable}\quad\text{ for} \quad & k={1+ \ceil{\frac{s(s-1)d}{d^2+s-1}}.}
            \end{align}
\end{corollary}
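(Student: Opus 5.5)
The plan is to compute the squared Frobenius norm of the Gram matrix $G^{(c)}$ of $S^{(c)}$ for $c=1,2,3$ and feed it into Theorem~\ref{thm:main}, exactly as in the proof of Corollary~\ref{cor:SIC_POVM}. There are $n=sd$ states. For a fixed state $\ket{\psi_{i,j}}$, the row of $G^{(c)}$ contains the diagonal entry $1$ and $d-1$ zeros from the same basis. It also contains $(s-1)d$ entries of modulus $d^{-c/2}$ from the other bases. Hence
\begin{equation}
\norm{G^{(c)}}_{\textup{F}}^2 = sd + sd\,(s-1)d\cdot d^{-c} = sd\left(1 + (s-1)d^{1-c}\right).
\end{equation}

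By Theorem~\ref{thm:main}, $S$ is $(c,k)$-learnable whenever
\begin{equation}
sd\left(1+(s-1)d^{1-c}\right) \le \frac{s^2d^2}{sd-k+1}.
\end{equation}
This rearranges to
\begin{equation}
k-1 \ge sd - \frac{sd}{1+(s-1)d^{1-c}} = \frac{sd(s-1)d^{1-c}}{1+(s-1)d^{1-c}} = \frac{s(s-1)d}{d^{c-1}+s-1}.
\end{equation}
So the smallest admissible choice is
\begin{equation}
k = 1+\ceil{\frac{s(s-1)d}{d^{c-1}+s-1}}.
\end{equation}
The right-hand side of the Frobenius inequality is increasing in $k$, so every larger $k$ also works. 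It remains to check that this value is at most $n$, so that the definition applies. The fraction is strictly less than $sd$, since $s(s-1)d < sd\,(d^{c-1}+s-1)$ because $d^{c-1}\ge 1$. Hence $\lceil\cdot\rceil \le sd-1$ and $k\le sd$.

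Substituting $c=2$ gives $1+\ceil{\frac{s(s-1)d}{d+s-1}}$, and $c=3$ gives $1+\ceil{\frac{s(s-1)d}{d^2+s-1}}$; these are the second and third claims.

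For $c=1$ the formula gives $1+\ceil{\frac{s(s-1)d}{1+s-1}} = 1+(s-1)d = sd-d+1$, since the fraction is the integer $(s-1)d$. This matches the first claim. Alternatively, the first claim follows from Corollary~\ref{cor:learning_designs}: a union of orthonormal bases is a $1$-design, because $\sum_{i,j}\ketbra{\psi_{i,j}}{\psi_{i,j}} = s\I$, which gives $k=n-d+1$.

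No step here is a real obstacle; the argument is bookkeeping. The only point requiring care is the algebraic rearrangement: simplifying $sd - sd/(1+x)$ with $x=(s-1)d^{1-c}$ and multiplying through by $d^{c-1}$ to obtain the stated denominators.
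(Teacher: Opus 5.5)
Your proposal is correct and follows the paper's proof: compute $\norm{G^{(c)}}_{\textup{F}}^2 = sd + s(s-1)d^{2-c}$, plug it into Theorem~\ref{thm:main}, and solve for $k$ to get $k = 1+\ceil{s(s-1)d/(d^{c-1}+s-1)}$ before specializing to $c=1,2,3$. Your check that $k \le sd$ is a small addition the paper leaves implicit, and your alternative $1$-design argument for $c=1$ via Corollary~\ref{cor:learning_designs} is also valid.
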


\begin{proof}
    Let $G^{(c)}$ be the Gram matrix of $S^{(c)}$. By direct calculation,
    \begin{align}
        \norm{{G^{(c)}}}_{\textup{F}}^2 = sd + \frac{s(s-1)d^2}{d^c}.
    \end{align}
    By Theorem~\ref{thm:main}, $S$ is $(c,k)$-learnable if 
    \begin{equation}
        \norm{{G^{(c)}}}_{\textup{F}}^2 \leq \frac{(sd)^2}{sd-k+1}.
    \end{equation}
    Rearranging, it is sufficient to set
    \begin{equation}
        k = 1+\ \ceil{\frac{s(s-1)d}{d^{c-1}+s-1}}.
    \end{equation}
    This reproduces the above expressions, and completes the proof.
\end{proof}

\subsection{Stabilizer states}

In the following, for a prime number $d$ we let $\mathrm{Stab}_{d,m} \subset (\C^d)^{\otimes m}$ be the set of $m$-qudit stabilizer states, and let $s_{d,m} := \abs{\mathrm{Stab}_{d,m}}= d^m \prod_{j=1}^m (d^j+1)$ (see \cite[Corollary 21]{gross2006hudson} or \cite[Proposition 2]{kueng2015qubit}). For an integer $c\ge 1$, let $G^{(c)}_{d,m}$ be the Gram matrix of the set
\begin{equation}
\{\ket{\psi}^{\otimes c} : \ket{\psi}\in \mathrm{Stab}_{d,m}\}.
\end{equation}

\begin{proposition}\label{prop:stab_gram}
    For every prime number $d$ and all positive integers $m$ and $c$ it holds that
    \begin{equation}
        \norm{G^{(c)}_{d,m}}_{\textup{F}}^2 = s_{d,m} \prod_{j=0}^{m-1}\left(1+d^{\,j-c+2}\right).
    \end{equation}
\end{proposition}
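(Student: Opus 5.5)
Our plan is to compute $\norm{G^{(c)}_{d,m}}_{\textup{F}}^2 = \sum_{\psi,\phi \in \mathrm{Stab}_{d,m}} \abs{\braket{\psi}{\phi}}^{2c}$ by using the fact that the overlap distribution of stabilizer states is the same from every base point. By Clifford covariance, the Clifford group acts transitively on $\mathrm{Stab}_{d,m}$ and preserves overlaps. So the inner sum over $\phi$ does not depend on $\psi$, and we may fix $\psi = \ket{0}^{\otimes m}$. This gives
\begin{equation}
\norm{G^{(c)}_{d,m}}_{\textup{F}}^2 = s_{d,m} \sum_{\phi \in \mathrm{Stab}_{d,m}} \abs{\braket{0^m}{\phi}}^{2c},
\end{equation}
and it remains to show that the last sum equals $\prod_{j=0}^{m-1}(1+d^{\,j-c+2})$.

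\textbf{Overlap values.} For stabilizer states we use the standard fact that $\abs{\braket{\psi}{\phi}}^2$ is either $0$ or $d^{-t}$ for some $t\in\{0,\ldots,m\}$. Here $m-t$ equals $\dim_{\mathbb{F}_d}$ of the intersection of the two Lagrangian stabilizer subspaces, modulo the phase condition. Let $N_t$ be the number of $\phi$ with $\abs{\braket{0^m}{\phi}}^2 = d^{-t}$. Then the sum becomes
\begin{equation}
F_m(c) := \sum_{t=0}^m N_t\, d^{-tc}.
\end{equation}

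\textbf{Counting $N_t$ recursively.} We would get the product form by a recursion in $m$ rather than by computing each $N_t$ in closed form. The target identity suggests $F_m(c) = F_{m-1}(c)\,(1+d^{\,m-1-c+2})$, with $F_0=1$. To obtain it, we would fix the last qudit and classify stabilizer states $\phi$ by how their stabilizer group interacts with the $Z$-type operators on that qudit.

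As a sanity check, take $c=0$. Each factor is $1+d^{j+2}$, and indeed $\prod_{j}(1+d^{j+2})$ should equal $s_{d,m}$ (the full count with weight $1$), since $d^m\prod_{j=1}^m(d^j+1)$ should match. In fact it does not match exactly, so the recursion must carry the zero-overlap states correctly; this is a warning to track those states carefully.

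\textbf{Alternative via Lagrangian intersections.} A cleaner route may be to use known counts. The number of stabilizer states with overlap $d^{-t}$ with a fixed one is $d^{t}\cdot\#\{\text{Lagrangians } L : \dim(L\cap L_0)=m-t\}$. For $L$ with $\dim(L\cap L_0) = m-t$, exactly $d^t$ of the $d^m$ phase choices are nonorthogonal. The number of such Lagrangians is $\binom{m}{t}_d\, d^{t(t+1)/2}$ (Gaussian binomial). Hence
\begin{equation}
F_m(c) = \sum_{t=0}^m \binom{m}{t}_d\, d^{t(t+1)/2}\, d^{t} d^{-tc} = \sum_t \binom{m}{t}_d\, d^{t(t-1)/2}\, (d^{\,2-c})^t .
\end{equation}
The $q$-binomial theorem $\prod_{j=0}^{m-1}(1+q^j x) = \sum_t \binom{m}{t}_q q^{t(t-1)/2}x^t$, applied with $q=d$ and $x=d^{2-c}$, then gives exactly $\prod_{j=0}^{m-1}(1+d^{\,j-c+2})$.

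\textbf{Main obstacle.} The hard step is justifying the Lagrangian-intersection count $\binom{m}{t}_d d^{t(t+1)/2}$ and the $d^t$ nonorthogonal phase count, particularly for $d=2$, where the phase and sign conventions need care. We would cite the standard overlap classification for stabilizer states (e.g.\ \cite{gross2006hudson}) and verify the count through the symmetric-matrix parametrization: complements to $L_0$ correspond to symmetric forms, graded by rank.
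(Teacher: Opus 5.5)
Your Lagrangian-intersection argument is correct, and it takes a genuinely different route from the paper. The paper does no counting. It quotes the Kueng--Gross frame-potential recursion $\mathcal{F}_c(d,m+1)/\mathcal{F}_c(d,m) = (d^{m-c+2}+1)/\bigl(d(d^{m+1}+1)\bigr)$ and the base case $\mathcal{F}_c(d,1)$, multiplies by $s_{d,m}^2$, and telescopes. You instead compute the full overlap distribution from a fixed stabilizer state, $N_t = d^t\binom{m}{t}_d d^{t(t+1)/2}$, and sum it with the Cauchy $q$-binomial theorem.

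The two facts you flag as the main obstacle are routine, including for $d=2$:
\begin{enumerate}
\item Lagrangians meeting $L_0$ in a fixed $W$ of codimension $t$ in $L_0$ correspond to Lagrangians in $W^\perp/W \cong \mathbb{F}_d^{2t}$ that are transverse to $L_0/W$. These are graphs $\{(x,Ax)\}$, which are isotropic iff $A=A^{\mathsf{T}}$. Only symmetry is needed, not the alternating property, so characteristic $2$ causes no trouble, and the count is $d^{t(t+1)/2}$.
\item The $d^m$ stabilizer states sharing a given $L$ form an orthonormal basis, and each nonzero overlap with $\psi$ equals $d^{-t}$. So exactly $d^t$ of them are nonorthogonal to $\psi$.
\end{enumerate}

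What your route buys is self-containment and extra information. It gives the multiplicities $a_t = \binom{m}{t}_d\, d^{t(t+3)/2}$ explicitly, hence $\sum_t a_t x^t = \prod_{j=0}^{m-1}(1+d^{j+2}x)$ directly. That yields Proposition~\ref{prop:stab_absolute_moment} for every real exponent, without the interpolation step the paper uses to recover the $a_r$ from the integer-$c$ frame potentials. The paper's route is much shorter, but it outsources all the substance to the cited theorem.

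You can drop the unexecuted recursion paragraph and the $c=0$ check. The mismatch you noticed there is expected, not a warning sign. At $c=0$ the sum $\sum_t N_t d^{-tc}$ counts only the $\prod_{j=0}^{m-1}(1+d^{j+2})$ states nonorthogonal to $\psi$, whereas $s_{d,m}$ also counts the orthogonal ones. The statement concerns only $c\geq 1$, where orthogonal states contribute $0$.
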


The proof of Proposition~\ref{prop:stab_gram} is in Appendix~\ref{app:stab_gram}.

\subsubsection{\texorpdfstring{$2$}{2}-learning stabilizer states}

We now apply Proposition~\ref{prop:row_regular_2_learning} to stabilizer
states. 
Fix a prime $d$, and let
\begin{equation}
    \mathrm{Stab}_{d,m} \subset (\C^d)^{\otimes m}
\end{equation}
denote the set of $m$-qudit stabilizer states.  The Clifford group acts
transitively on $\mathrm{Stab}_{d,m}$, so for a fixed
$\ket{\psi}\in\mathrm{Stab}_{d,m}$ the quantity
\begin{equation}
\Phi_c(d,m)
:=
\sum_{\phi\in\mathrm{Stab}_{d,m}}
\abs{\braket{\psi}{\phi}}^c
\end{equation}
is independent of $\ket{\psi}$. We first determine this quantity exactly.

\begin{proposition}\label{prop:stab_absolute_moment}
For every prime $d$, every positive integer $m$, and every positive integer
$c$,
\begin{equation}
\Phi_c(d,m)
=
\prod_{j=0}^{m-1}
\left(1+d^{j+2-c/2}\right).
\end{equation}
\end{proposition}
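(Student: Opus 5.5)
We compute $\Phi_c(d,m)$ from the overlap distribution of stabilizer states with a fixed one. We may take the reference state to be $\ket{\psi}=\ket{0}^{\otimes m}$, since the Clifford group acts transitively on $\mathrm{Stab}_{d,m}$. For prime $d$, every stabilizer state $\ket{\phi}$ satisfies $\abs{\braket{\psi}{\phi}}^2 \in \{0\}\cup\{d^{-r} : r=0,1,\ldots,m\}$. Here $d^{-r}$ occurs precisely when the stabilizer groups of $\psi$ and $\phi$ intersect in a subgroup of size $d^{m-r}$, equivalently when the associated Lagrangian subspaces $L_\psi,L_\phi\subset\mathbb{F}_d^{2m}$ satisfy $\dim(L_\psi\cap L_\phi)=m-r$ and the phases are compatible. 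So
\begin{equation}
\Phi_c(d,m)=\sum_{r=0}^{m} N_r\, d^{-rc/2},
\end{equation}
where $N_r$ is the number of stabilizer states with $\abs{\braket{\psi}{\phi}}^2=d^{-r}$.

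The first step is to count $N_r$. Fix a Lagrangian $L$ with $\dim(L\cap L_\psi)=m-r$. The $d^m$ stabilizer states with stabilizer group over $L$ form an orthonormal basis, and $\ket{\psi}$ has uniform overlap $d^{-r}$ with exactly $d^{r}$ of them, because the restriction of the character to the common subgroup must agree. Hence $N_r = d^r\,\ell_r$, where $\ell_r$ is the number of Lagrangians meeting a fixed Lagrangian in dimension $m-r$. The standard count is
\begin{equation}
\ell_r=\binom{m}{r}_{d}\, d^{r(r+1)/2}.
\end{equation}
To justify it, choose the $(m-r)$-dimensional intersection $W$, which gives the Gaussian binomial factor. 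Then count Lagrangians in the $2r$-dimensional symplectic space $W^\perp/W$ that are transverse to $L_\psi/W$. These correspond to symmetric $r\times r$ matrices, giving $d^{r(r+1)/2}$.

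This yields $\Phi_c=\sum_r \binom{m}{r}_d\, d^{r(r+1)/2}\, d^{r(1-c/2)}$. The second step is to apply the $q$-binomial theorem $\prod_{j=0}^{m-1}(1+q^j t)=\sum_r \binom{m}{r}_q q^{r(r-1)/2} t^r$ with $q=d$. Writing $d^{r(r+1)/2}=d^{r(r-1)/2}d^{r}$, the sum becomes $\sum_r \binom{m}{r}_d\, d^{r(r-1)/2}\,(d^{2-c/2})^r$, so $t=d^{2-c/2}$. This gives $\prod_{j=0}^{m-1}(1+d^{j+2-c/2})$, as claimed.

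There are two sanity checks. At $c=0$ the formula gives $\prod_{j=0}^{m-1}(1+d^{j+2})$, which should equal $s_{d,m}$, and the product formula for $s_{d,m}$ confirms this after reindexing. At $c=2$ it gives $\prod_{j=0}^{m-1}(1+d^{j+1})=s_{d,m}/d^m$, consistent with the 1-design identity $\sum_\phi\abs{\braket{\psi}{\phi}}^2=s_{d,m}/d^m$.

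The main obstacle is the first step: justifying the overlap structure and the counts $N_r$ rigorously for general prime $d$, including $d=2$, where the phase conventions for Paulis differ. I would handle this by working with the characterization of stabilizer states as $(L,\chi)$ pairs, where $\chi$ is a character on the abelian stabilizer group. I would then use $\abs{\braket{\psi}{\phi}}^2=\tr(P_\psi P_\phi)$ with $P=\frac{1}{d^m}\sum_{g\in S}g$. This trace equals $\frac{\abs{S_\psi\cap S_\phi}}{d^m}$ if the characters agree on the intersection, and $0$ otherwise.
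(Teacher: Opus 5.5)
Your proof is correct, but it takes a genuinely different route from the paper.

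\textbf{The paper's route.} The paper never counts anything. It takes the Frobenius-norm formula of Proposition~\ref{prop:stab_gram}, which is imported from the Kueng--Gross frame-potential recursion. Via Clifford transitivity this becomes $\sum_r a_r d^{-rt} = \prod_{j=0}^{m-1}(1+d^{j+2-t})$ for every positive integer $t$. Both sides are polynomials of degree at most $m$ in $x = d^{-t}$ that agree at infinitely many points, so $\sum_r a_r x^r = \prod_{j=0}^{m-1}(1+d^{j+2}x)$ holds identically. The paper then evaluates this identity at the half-integer point $x = d^{-c/2}$. Its only structural input is the cited fact that nonzero overlaps have the form $d^{-r}$.

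\textbf{Your route.} You compute the overlap distribution directly: $N_r = d^r\binom{m}{r}_d d^{r(r+1)/2} = \binom{m}{r}_d d^{r(r+3)/2}$. This comes from the Lagrangian/character description together with the count of Lagrangians transverse to a fixed one in $W^\perp/W$. You then sum with the $q$-binomial theorem. Your $N_r$ coincides with the coefficients $a_r$ implicit in the paper's polynomial identity.

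\textbf{Comparison.} The paper's argument is shorter given its earlier proposition and external citations, but it is a black box. Yours is self-contained and has several advantages:
\begin{itemize}
\item The trace formula $\tr(P_\psi P_\phi)$ actually proves the overlap structure that the paper only cites.
\item The character-restriction count (or the orthonormal-basis count) treats $d=2$ uniformly, so no special phase conventions are needed.
\item You do not need Clifford transitivity, since $\ell_r$ does not depend on which Lagrangian is fixed.
\item Specialized to even $c$, your argument re-derives Proposition~\ref{prop:stab_gram} without the Kueng--Gross recursion.
\end{itemize}

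\textbf{One correction.} Your $c=0$ sanity check is wrong. States with zero overlap are excluded from $\sum_r N_r$, so the $c=0$ value of the formula counts only the stabilizer states not orthogonal to $\ket{\psi}$. Indeed $\prod_{j=0}^{m-1}(1+d^{j+2}) \neq s_{d,m}$: for $d=2$, $m=1$ it gives $5$ rather than $6$, because $\ket{1}$ is missing. Drop or restate that check. It does not affect the proof itself.
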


\begin{proof}
For a fixed stabilizer state $\ket{\psi}$, let $a_r$ denote the number of
stabilizer states $\ket{\phi}$ satisfying
\begin{equation}
\abs{\braket{\psi}{\phi}}^2=d^{-r},
\qquad
r\in\{0,1,\ldots,m\}.
\end{equation}
The possible nonzero overlaps between stabilizer states have precisely this
form~\cite[Equation (41)]{obst2024wigner} (see also~\cite[Theorem 11]{garcia2017geometry}).  Hence, for every positive integer $t$,
\begin{equation}
\sum_{r=0}^m a_r d^{-rt}
=
\sum_{\phi\in\mathrm{Stab}_{d,m}}
\abs{\braket{\psi}{\phi}}^{2t}.
\end{equation}
By Proposition~\ref{prop:stab_gram} and Clifford transitivity,
\begin{equation}
\sum_{\phi\in\mathrm{Stab}_{d,m}}
\abs{\braket{\psi}{\phi}}^{2t}
=
\prod_{j=0}^{m-1}
\left(1+d^{j-t+2}\right).
\end{equation}
Therefore
\begin{equation}
\sum_{r=0}^m a_r x^r
=
\prod_{j=0}^{m-1}(1+d^{j+2}x)
\end{equation}
for the infinitely many values $x=d^{-t}$ ($t \geq 1$ is a positive integer).  Since both sides are
polynomials of degree at most $m$, this is a polynomial identity.  Setting
$x=d^{-c/2}$ gives
\begin{equation}
\Phi_c(d,m)
=
\sum_{r=0}^m a_r d^{-rc/2}
=
\prod_{j=0}^{m-1}
\left(1+d^{j+2-c/2}\right),
\end{equation}
as claimed.
\end{proof}

Combining Propositions~\ref{prop:row_regular_2_learning} and \ref{prop:stab_absolute_moment} gives an exact criterion for
$2$-learning stabilizer states.

\begin{theorem}\label{thm:stab_exact_2_learning}
Let $d$ be prime.  Then $\mathrm{Stab}_{d,m}$ is $(c,2)$-learnable if and
only if
\begin{equation}
\prod_{j=0}^{m-1}
\left(1+d^{j+2-c/2}\right)
\leq 2.
\end{equation}
\end{theorem}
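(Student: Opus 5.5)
The plan is to apply Proposition~\ref{prop:row_regular_2_learning} to the multiset $S^{(c)} = \{\ket{\psi}^{\otimes c} : \ket{\psi} \in \mathrm{Stab}_{d,m}\}$. By definition, $\mathrm{Stab}_{d,m}$ is $(c,2)$-learnable exactly when $S^{(c)}$ is $2$-learnable. The Gram matrix of $S^{(c)}$ has entries $\braket{\psi}{\phi}^c$, so the row sums of absolute values are
\begin{equation}
R_\psi = \sum_{\phi \in \mathrm{Stab}_{d,m}} \abs{\braket{\psi}{\phi}}^c = \Phi_c(d,m).
\end{equation}

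The first step is to record that $R_\psi$ does not depend on the row $\psi$. This follows from the transitivity of the Clifford group action noted before Proposition~\ref{prop:stab_absolute_moment}. A Clifford unitary $U$ permutes $\mathrm{Stab}_{d,m}$ and preserves the moduli of inner products, so the multiset of absolute values in row $\psi$ coincides with that in row $U\ket{\psi}$. Because the row sums are constant, the ``in particular'' clause of Proposition~\ref{prop:row_regular_2_learning} applies: $S^{(c)}$ is $2$-learnable if and only if $\Phi_c(d,m) \le 2$.

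The second step is to substitute the closed form from Proposition~\ref{prop:stab_absolute_moment},
\begin{equation}
\Phi_c(d,m) = \prod_{j=0}^{m-1}\left(1+d^{\,j+2-c/2}\right),
\end{equation}
which gives exactly the stated criterion.

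One bookkeeping point needs checking. Distinct stabilizer states are never global phases of one another, so $S^{(c)}$ is a genuine multiset indexed by $\mathrm{Stab}_{d,m}$. The diagonal term contributes $1$ to each row sum, and this term is already included in $\Phi_c$ (it corresponds to the $r=0$ case with $a_0 = 1$). I expect no real obstacle here: the only nontrivial input is constancy of the row sums, and the paper has already established it.
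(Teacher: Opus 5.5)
Your proposal is correct and is essentially the paper's own argument. Clifford transitivity makes the row sums of $|G^{(c)}|$ constant and equal to $\Phi_c(d,m)$, so the ``in particular'' clause of Proposition~\ref{prop:row_regular_2_learning} applies, and Proposition~\ref{prop:stab_absolute_moment} supplies the closed form.
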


This criterion allows us to determine exactly the copy complexity of 2-learning stabilizer states.

\begin{corollary}\label{cor:stab_qudit_exact}
Let $d$ be prime. The copy complexity of $2$-learning the set of
$m$-qudit stabilizer states is
\begin{equation}
c_2\left(\mathrm{Stab}_{d,m}\right)
=
\begin{cases}
4, & m=1,\\
8, & d=2,\ m=2,\\
2m+5, & d=2,\ m\geq3,\\
7, & d=3,\ m=2,\\
2m+4, & d=3,\ m\geq3,\\
2m+3, & d\geq5,\ m\geq2.
\end{cases}
\end{equation}
In particular, $c_2\left(\mathrm{Stab}_{d,m}\right)=2m+{\Theta}(1)$.
\end{corollary}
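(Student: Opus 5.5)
By Theorem~\ref{thm:stab_exact_2_learning}, $c_2(\mathrm{Stab}_{d,m})$ is the least positive integer $c$ with
\begin{equation}
P_m(c) := \prod_{j=0}^{m-1}\left(1+d^{j+2-c/2}\right) \leq 2 .
\end{equation}
Each factor decreases in $c$, so $P_m(c)$ is decreasing in $c$. It therefore suffices, in each case, to exhibit a value $c^\ast$ with $P_m(c^\ast)\le 2$ and to check that $P_m(c^\ast-1)>2$.

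\textbf{Case $m=1$.} Here $P_1(c)=1+d^{2-c/2}$, and this is at most $2$ if and only if $c\ge 4$. Hence $c_2=4$, which also agrees with Corollary~\ref{cor:MUB_exact_2learning}.

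\textbf{General $m$.} Substitute $c=2m+2+t$ and write $x:=d^{-t/2}$. The exponents become $j+2-c/2=j-m+1-t/2$, and reindexing with $i=m-1-j$ gives
\begin{equation}
P_m(c)=\prod_{i=0}^{m-1}\left(1+x\,d^{-i}\right).
\end{equation}
For $m\ge2$ the product contains the factors $(1+x)(1+x/d)$. At $t=0$ we have $x=1$, so $P_m\ge 2(1+1/d)>2$; thus $c\ge 2m+3$ is always necessary.

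\textbf{Upper bounds.} For sufficiency I will use $\log(1+y)\le y$, which gives
\begin{equation}
\log P_m \le \frac{x\,d}{d-1}.
\end{equation}
So it suffices that $x\,d/(d-1)\le\log 2$.

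For $d\ge5$ and $t=1$ we get $x=d^{-1/2}\le 1/\sqrt5$, and $x\,d/(d-1)\le 0.559<0.693$. Hence $c_2=2m+3$ for all $m\ge2$.

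For $d=3$ and $t=2$ we get $x=1/3$, and $x\,d/(d-1)=1/2<\log2$, which gives $c\le 2m+4$. To show that $t=1$ fails I need a lower bound. With $x=1/\sqrt3$, three factors already give $(1+x)(1+x/3)(1+x/9)\approx 1.577\cdot1.192\cdot1.064>2$, so $c_2=2m+4$ once $m\ge3$. For $m=2$ only two factors are available and their product is $\approx1.88\le2$, which gives $c_2=7=2m+3$.

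For $d=2$ the analogous computation at $t=3$ gives $x=2^{-3/2}\approx0.354$, so $2x\approx0.707$. This is slightly above $\log 2$, so the crude bound fails and I will bound the infinite product directly instead. It is $\prod_{i\ge0}(1+0.354\cdot2^{-i})\approx 1.354\cdot1.177\cdot1.088\cdot1.044\cdot1.022\cdots\approx 1.94\le2$, so $c\le 2m+5$. At $t=2$ we have $x=1/2$, and $(1.5)(1.25)(1.125)\approx2.11>2$, so $c_2=2m+5$ for $m\ge3$. For $m=2$ at $t=2$ the product is $1.5\cdot1.25=1.875\le2$, while at $t=1$ it is $1.707\cdot1.354>2$; hence $c_2=8$.

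\textbf{Main obstacle.} The hardest part is making the numerical bounds on the infinite tail products rigorous, especially for $d=2$. There the margin is thin, and it needs a careful tail estimate such as
\begin{equation}
\prod_{i\ge K}(1+x2^{-i})\le \exp\!\left(x\,2^{1-K}\right)
\end{equation}
combined with the first few factors computed exactly. Finally, since $c_2$ lies in $\{2m+3,2m+4,2m+5\}$ for $m\ge2$, the statement $c_2=2m+\Theta(1)$ follows.
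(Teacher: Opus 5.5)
Your proposal is correct and follows essentially the same route as the paper. Both arguments reduce via Theorem~\ref{thm:stab_exact_2_learning} to the product criterion, rule out $c=2m+2$ for $m\ge 2$ using the factors $2(1+1/d)$, and then split into $d\ge5$, $d=3$, $d=2$, using $\ln(1+y)\le y$ for sufficiency and a few explicit leading factors for necessity. For $d=2$, your tail estimate with $K=1$ already closes the case, since $(1+2^{-3/2})\exp(2^{-3/2})\approx 1.93<2$, and this is exactly how the paper handles it. The infinite product is actually about $1.89$ rather than $1.94$, which is harmless.
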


The proof of Corollary~\ref{cor:stab_qudit_exact} can be found in Appendix~\ref{app:stab_qudit_exact}.

\begin{remark}\label{rmk:stabilizer_learning}
Any $2$-learning measurement for a set $S$ can be converted into a bounded-error
identification procedure. 
First use $c_2(S)$ copies to obtain a certified pair
$\{\ket{\alpha},\ket{\beta}\}$ containing the unknown state. 
Then use $t$
additional copies and project onto $\ket{\alpha}^{\otimes t}$.  If the
unknown state is $\ket{\alpha}$, the test never errs; if it is
$\ket{\beta}$, the error probability is
\begin{equation}
\abs{\braket{\alpha}{\beta}}^{2t}.
\end{equation}
For two distinct qudit stabilizer states,
\begin{equation}
\abs{\braket{\alpha}{\beta}}^2\leq\frac1d.
\end{equation}
Thus $t=m$ additional copies reduce the error probability to at most
$d^{-m}$, giving an information-theoretic identification procedure using
$c_2(S)+m=3m+O(1)$ copies in total. This reproduces the information-theoretic $O(m)$ upper bound proven in~\cite{montanaro2017learning,aaronson2008identifying,allcock2025reconquering,CCLW26}, with the caveat that our result applies only for prime-dimensional qudits. A drawback to our result is that the 2-learning measurement may not be efficiently implementable. An advantage is that the 2-learning approach additionally yields a distinguished pair $\{\ket{\alpha},\ket{\beta}\}$ of stabilizer states guaranteed to contain the unknown state. Note that $\Omega(m)$ copies are necessary: There are $d^{\Theta(m^2)}$ stabilizer states on $m$ $d$-dimensional qudits, so identifying an unknown stabilizer state  to constant error probability requires $\Omega(m)$ copies by Holevo's theorem.
\end{remark}

\subsubsection{\texorpdfstring{$k$}{k}-learning stabilizer states}

Beyond the case $k=2$, we can use the above results to prove a general relationship between $c$ and $k$ for stabilizer states.

\begin{corollary}\label{cor:stab} 
    Let $d$ be prime and let $m$ and $c$ be positive integers. The set of stabilizer states $\mathrm{Stab}_{d,m} \subset (\C^d)^{\otimes m}$ is $(c,k)$-learnable for
    \begin{equation}\label{eq:stab_states_copy_complex}
        k = \ceil{s_{d,m}\left(1-\frac{1}{\prod_{j=0}^{m-1}(1+d^{j-c+2})}\right)
        }+1.
    \end{equation}
\end{corollary}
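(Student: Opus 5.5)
The plan is to apply the Frobenius-norm criterion of Theorem~\ref{thm:main} directly to the multiset $S^{(c)}=\{\ket{\psi}^{\otimes c} : \ket{\psi}\in\mathrm{Stab}_{d,m}\}$. This multiset has $n = s_{d,m}$ elements, and its Gram matrix is $G^{(c)}_{d,m}$. Write
\begin{equation}
P := \prod_{j=0}^{m-1}\left(1+d^{\,j-c+2}\right).
\end{equation}
Proposition~\ref{prop:stab_gram} gives $\norm{G^{(c)}_{d,m}}_{\textup{F}}^2 = nP$. By Theorem~\ref{thm:main}, $S^{(c)}$ is $k$-learnable as soon as
\begin{equation}
nP \leq \frac{n^2}{n-k+1}, \qquad \text{i.e.,} \qquad n-k+1 \leq \frac{n}{P}, \qquad \text{i.e.,} \qquad k \geq n\left(1-\frac{1}{P}\right)+1.
\end{equation}
Since $k$ must be an integer, the smallest admissible choice is $k=\ceil{n(1-1/P)}+1$. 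This is exactly the value in \eqref{eq:stab_states_copy_complex}, so the corollary follows.

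The remaining work is a sanity check that this $k$ lies in the range $\{1,\ldots,n\}$ required by Theorem~\ref{thm:main}.

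\textbf{Lower end.} Since $P\geq 1$, we have $1-1/P\geq 0$, and therefore $k\geq 1$.

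\textbf{Upper end.} Since $P$ is finite, $n(1-1/P)<n$. This gives $\ceil{n(1-1/P)}\leq n$, so $k\leq n+1$.

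If $k$ would equal $n+1$, one could simply take $k=n$ instead, since every multiset is trivially $n$-learnable. To avoid this, I would show $\ceil{n(1-1/P)}\leq n-1$, which is equivalent to $n/P \geq 1$. This holds because $n=s_{d,m}$ is the number of states and $nP=\norm{G^{(c)}_{d,m}}_{\textup{F}}^2 \le n^2$. Indeed, every entry of the Gram matrix has modulus at most $1$, so $\norm{G}_{\textup{F}}^2\le n^2$, and hence $P\le n$.

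I do not expect any real obstacle here. The only point needing care is this range check and the handling of the ceiling; all the substantive content sits in Proposition~\ref{prop:stab_gram} and Theorem~\ref{thm:main}.
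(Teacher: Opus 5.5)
Your proposal is correct and follows the paper's proof exactly: you substitute $\norm{G^{(c)}_{d,m}}_{\textup{F}}^2 = s_{d,m}P$ from Proposition~\ref{prop:stab_gram} into the Frobenius criterion of Theorem~\ref{thm:main} and solve for the smallest integer $k$. Your extra check that $1 \le k \le s_{d,m}$ is also valid and fills in a detail the paper leaves implicit; it uses $P \le s_{d,m}$, which holds because every Gram entry has modulus at most $1$.
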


\begin{proof}
    By Theorem~\ref{thm:main} and Proposition~\ref{prop:stab_gram}, the set of stabilizer states is $(c,k)$-learnable as long as
    \begin{equation}
        s_{d,m} \prod_{j=0}^{m-1}\left(1+d^{j-c+2}\right) \leq \frac{s_{d,m}^2}{s_{d,m}-k+1}.
    \end{equation}
    Solving for $k$ produces the quantity in Equation~\eqref{eq:stab_states_copy_complex}. 
\end{proof}

\section{Zero-error quantum state mutation detection problems} \label{sec:applications}

We now apply Theorems~\ref{thm:main} and~\ref{thm:m_excl_threshold} to families of product states that arise in anomaly and changepoint detection. These families of pure states have a significant amount of symmetry, which allows us to obtain explicit zero-error $k$-learning criteria.

\subsection{Single anomaly detection}\label{sec:single_anomaly}

Let $\ket{\phi}$ and $\ket{\psi}$ be pure states, and suppose that exactly one of $n$ systems is prepared in the anomalous state $\ket{\phi}$, while all remaining systems are prepared in the reference state $\ket{\psi}$. If we do not know which of the $n$ systems is prepared in the anomalous state, then we know that the global state is of the form
\begin{equation}
    \ket{\psi}^{\otimes(i-1)} \otimes \ket{\phi}\otimes \ket{\psi}^{\otimes(n-i)}
\end{equation}
for some $i \in \{1,2,\ldots,n\}$. Determining the index $i$ exactly can only be done if $\braket{\phi}{\psi} = 0$. However, we can narrow $i$ down to one of at most $k$ possible values as long as $\abs{\braket{\phi}{\psi}}$ is small enough:

\begin{theorem}\label{thm:single_anomaly}
    Let $n \geq 2$ be an integer and let $k \in \{1,2,\ldots,n\}$. The set
    \begin{equation}
        \left\{\ket{\psi}^{\otimes(i-1)} \otimes \ket{\phi}\otimes \ket{\psi}^{\otimes(n-i)} : i \in \{1,2,\ldots,n\}\right\}
    \end{equation}
    is $k$-learnable if and only if
    \begin{equation}
        \abs{\braket{\phi}{\psi}} \leq \sqrt{\frac{k-1}{n-1}}.
    \end{equation}
\end{theorem}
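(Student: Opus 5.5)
The plan is to compute the Gram matrix of the anomaly family and reduce to the equiangular case of \cref{cor:equi_angle}. Write $\ket{\Psi_i} := \ket{\psi}^{\otimes(i-1)} \otimes \ket{\phi}\otimes \ket{\psi}^{\otimes(n-i)}$ and set $z := \braket{\psi}{\phi}$. For $i \neq j$, the inner product $\braket{\Psi_i}{\Psi_j}$ factors over the $n$ tensor positions. Position $i$ contributes $\braket{\phi}{\psi} = \bar z$, position $j$ contributes $\braket{\psi}{\phi} = z$, and every other position contributes $1$. Hence
\begin{equation}
\braket{\Psi_i}{\Psi_j} = \bar z z = \abs{\braket{\phi}{\psi}}^2 \qquad (i \neq j).
\end{equation}
This value is real, nonnegative, and independent of the pair $(i,j)$. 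The Gram matrix is therefore exactly $G_n(\gamma)$ from Equation~\eqref{eq:1gamma} with $\gamma = \abs{\braket{\phi}{\psi}}^2 \in [0,1]$.

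I then apply \cref{cor:equi_angle}, which says that a multiset with this Gram matrix is $k$-learnable if and only if $\gamma \le \frac{k-1}{n-1}$. That corollary combines \cref{lem:useful} for sufficiency with \cref{thm:m_excl_threshold} for necessity. Substituting $\gamma = \abs{\braket{\phi}{\psi}}^2$ and taking square roots, which is valid since both sides are nonnegative, gives exactly the condition $\abs{\braket{\phi}{\psi}} \le \sqrt{(k-1)/(n-1)}$.

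The only real point to check is the phase bookkeeping in the tensor factorization. One must confirm that the conjugate factor $\bar z$ from position $i$ and the factor $z$ from position $j$ appear together, so that each off-diagonal entry is $\abs{z}^2$ rather than something like $z^2$ with a nontrivial phase. Otherwise the Gram matrix would not literally be $G_n(\gamma)$ with real $\gamma$, and \cref{cor:equi_angle} would not apply directly.

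Two boundary cases are covered without extra work. When $k=1$ the condition reads $\braket{\phi}{\psi}=0$, i.e.\ the states are orthogonal. When $k=n$ the condition is automatic. Both are handled by the corollary, which is stated for all $k \in \{1,\ldots,n\}$.
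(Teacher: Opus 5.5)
Your proposal is correct and matches the paper's proof. Both compute that every off-diagonal Gram entry equals $\abs{\braket{\phi}{\psi}}^2$, identify the Gram matrix as $G_n(\gamma)$ with $\gamma = \abs{\braket{\phi}{\psi}}^2$, and conclude directly from \cref{cor:equi_angle}.
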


\begin{proof}
    For $i \neq j$, the two states $\ket{\psi}^{\otimes(i-1)} \otimes \ket{\phi}\otimes \ket{\psi}^{\otimes(n-i)}$ and $\ket{\psi}^{\otimes(j-1)} \otimes \ket{\phi}\otimes \ket{\psi}^{\otimes(n-j)}$ differ only in tensor factors $i$ and $j$. Consequently, their inner product with each other is
    \begin{equation}
        \braket{\phi}{\psi} \cdot \braket{\psi}{\phi} = \abs{\braket{\phi}{\psi}}^2.
    \end{equation}
    Since this is true for all $i \neq j$, the Gram matrix of this multiset of states is
    \begin{equation}
        G = (1-\gamma)\I + \gamma \one\one^\t,
    \end{equation}
    where $\gamma = \abs{\braket{\phi}{\psi}}^2$ and $\one$ is the all-ones column vector. The result now follows immediately from \cref{cor:equi_angle}.
\end{proof}

In particular, Theorem~\ref{thm:single_anomaly} immediately implies that the smallest value of $k$ for which
\begin{equation}
    \left\{\ket{\psi}^{\otimes(i-1)} \otimes \ket{\phi}\otimes \ket{\psi}^{\otimes(n-i)} : i \in \{1,2,\ldots,n\}\right\}
\end{equation}
is $k$-learnable is $k = 1 + \ceil{(n-1)\abs{\braket{\phi}{\psi}}^2}$.

\subsection{Multi-anomaly detection}\label{sec:multi_anomaly} 

We next suppose that exactly $t \geq 1$ of the $n$ systems are anomalous. Let $\ket{\psi_0}$ and $\ket{\psi_1}$ be pure states, and consider the global states of the form
\begin{equation}
    \ket{\psi_{i_1}} \otimes \cdots \otimes \ket{\psi_{i_n}},
\end{equation}
where $i_j \in \{0,1\}$ for each $j \in \{1,2,\ldots,n\}$ and $\sum_{j=1}^n i_j = t$. We note that there are $\binom{n}{t}$ global states of this form.

\begin{theorem}\label{thm:multi-anomaly}
	Let $n \geq 2$ and $1 \leq t \leq n-1$ be integers, and define $\gamma := \abs{\braket{\psi_0}{\psi_1}}^2$. Then for each integer $1 \leq k \leq \binom{n}{t}$, the multiset of states
    \begin{equation}\label{eq:multi-anomaly_st}
        S_t := \left\{\ket{\psi_{i_1}} \otimes \ldots \otimes \ket{\psi_{i_n}} : i_1,\ldots,i_n \in \{0,1\} \ \ \text{and} \ \ \sum_{j=1}^n i_j = t\right\}
    \end{equation}
	satisfies the following:
	\begin{enumerate}
		\item[(1)] If $\sum_{r=0}^{n}\binom{t}{r}\binom{n-t}{r}\gamma^r > k$ then $S_t$ is not $k$-learnable.
		\item[(2)] If $\left(\binom{n}{t}-k+1\right)\sum_{r=0}^{n}\binom{t}{r}\binom{n-t}{r}\gamma^{2r} \leq \binom{n}{t}$ then $S_t$ is $k$-learnable.
	\end{enumerate}
\end{theorem}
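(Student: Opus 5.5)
Both parts follow from the two Gram-matrix criteria, Theorems~\ref{thm:m_excl_threshold} and~\ref{thm:main}, once the Gram matrix of $S_t$ is understood. Index the states of $S_t$ by $t$-subsets $A \subseteq \{1,\ldots,n\}$, namely the positions $j$ with $i_j = 1$. For two subsets $A,B$ the inner product factorizes over tensor factors. A factor contributes $1$ where the two strings agree, and contributes $\braket{\psi_0}{\psi_1}$ or $\braket{\psi_1}{\psi_0}$ where they disagree. The disagreeing positions are $A\setminus B$ and $B \setminus A$, and both have size $r := |A\setminus B|$, since $|A| = |B| = t$. Each position in $A\setminus B$ contributes one factor and each position in $B\setminus A$ contributes the conjugate factor. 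Hence
\begin{equation}
\braket{\Psi_A}{\Psi_B} = \braket{\psi_1}{\psi_0}^{r}\,\braket{\psi_0}{\psi_1}^{r} = \gamma^{r},
\end{equation}
so the Gram matrix is real, nonnegative, and equals $\sum_r \gamma^r$ times the adjacency matrix of distance-$r$ in the Johnson scheme $J(n,t)$.

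Next I compute the row sums. For fixed $A$, the number of $B$ with $|A\setminus B| = r$ is $\binom{t}{r}\binom{n-t}{r}$: choose which $r$ elements of $A$ to drop and which $r$ elements outside $A$ to add. The sum over $r$ from $0$ to $n$ is harmless, because the binomials vanish once $r>\min(t,n-t)$. So every row has absolute sum $R = \sum_{r}\binom{t}{r}\binom{n-t}{r}\gamma^r$, and the squared absolute entries in every row sum to $R_2 = \sum_r \binom{t}{r}\binom{n-t}{r}\gamma^{2r}$.

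Summing over the $N := \binom{n}{t}$ rows gives $\sum_{A,B} |G_{A,B}| = N R$ and $\norm{G}_{\textup{F}}^2 = N R_2$.
\begin{itemize}
\item[(1)] If $R > k$, then $NR > Nk$, so Theorem~\ref{thm:m_excl_threshold}, applied with $N$ states in place of $n$, shows that $S_t$ is not $k$-learnable.
\item[(2)] The hypothesis $(N-k+1)R_2 \le N$ is equivalent to $N R_2 \le N^2/(N-k+1)$. This is exactly inequality~\eqref{eq:thm_main_frob} for $N$ states, so Theorem~\ref{thm:main} gives $k$-learnability.
\end{itemize}

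The only step requiring care is the inner-product computation. One must check that the phases cancel, giving exactly $\gamma^r$ rather than merely something of modulus $\gamma^r$; this cancellation relies on the pairing of one factor with its conjugate noted above. One must also check the edge cases of the range of $r$ and the constraint $k \le N$. Everything else is counting.
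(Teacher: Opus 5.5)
Your proposal is correct and follows the paper's proof essentially step for step. Like the paper, you show $G_{T,U}=\gamma^{|T\setminus U|}$ (the phases cancel exactly), count $\binom{t}{r}\binom{n-t}{r}$ subsets at each distance $r$ to obtain the entrywise $\ell_1$ norm and the squared Frobenius norm, and then apply Theorems~\ref{thm:m_excl_threshold} and~\ref{thm:main} with $\binom{n}{t}$ states in place of $n$.
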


Before proving this theorem, we note that if $r > t$ then we take the convention that $\binom{t}{r} = 0$. Also, in the $t = 1$ case we have
\begin{equation}
    \sum_{r=0}^{n}\binom{t}{r}\binom{n-t}{r}\gamma^r = 1 + (n-1)\gamma = 1 + (n-1)\abs{\braket{\psi_0}{\psi_1}}^2,
\end{equation}
so Theorem~\ref{thm:multi-anomaly}(1) says that if $1 + (n-1)\abs{\braket{\psi_0}{\psi_1}}^2 > k$ then $S_1$ is not $k$-learnable. This is equivalent to the $\abs{\braket{\phi}{\psi}} > \sqrt{(k-1)/(n-1)}$ condition from Theorem~\ref{thm:single_anomaly}. However, Theorem~\ref{thm:multi-anomaly}(2) in the $t = 1$ case is weaker than the $k$-learnability condition from Theorem~\ref{thm:single_anomaly}.

\begin{proof}[Proof of Theorem~\ref{thm:multi-anomaly}]
    Let $G$ be the Gram matrix of $S_t$, indexed by the $t$-element subsets of $\{1,2,\ldots,n\}$. Let $T$ and $U$ be $t$-element subsets of $\{1,2,\ldots,n\}$ and set
    \begin{equation}
        r := \abs{T \setminus U} = \abs{U \setminus T} = t - \abs{T \cap U}.
    \end{equation}
    Every tensor factor indexed by $T \setminus U$ contributes $\braket{\psi_0}{\psi_1}$ to the corresponding entry of $G$, and every tensor factor indexed by $U\setminus T$ similarly contributes $\braket{\psi_1}{\psi_0}$. It follows that
    \begin{equation}\label{eq:multi-anomaly-gram-entry}
        G_{T,U} = \left(\braket{\psi_0}{\psi_1}\braket{\psi_1}{\psi_0}\right)^r = \gamma^r.
    \end{equation}
    For each of the $\binom{n}{t}$ fixed $t$-element sets $T$, the number of $t$-element sets $U$ satisfying $\abs{T \setminus U} = r$ is $\binom{t}{r}\binom{n-t}{r}$, since we can choose $r$ elements to remove from $T$ and $r$ elements to add back in from $\{1,2,\ldots,n\} \setminus T$. Since all entries of $G$ are nonnegative, it follows that
    \begin{align}\label{eq:multi-anomaly-l1-norm}
        \sum_{i,j} |G_{i,j}| 
        & = \binom{n}{t}\sum_{r=0}^{n}\binom{t}{r}\binom{n-t}{r}\gamma^r, \quad \text{and} \\\label{eq:multi-anomaly-frobenius-norm}
        \norm{G}_{\textup{F}}^2 & = \binom{n}{t}\sum_{r=0}^{n}\binom{t}{r}\binom{n-t}{r}\gamma^{2r}.
    \end{align}
    If
    \begin{equation}
         \sum_{r=0}^{n}\binom{t}{r}\binom{n-t}{r}\gamma^r > k,
    \end{equation}
    then $\norm{\operatorname{vec}(G)}_1 > \binom{n}{t}k$, so Theorem~\ref{thm:m_excl_threshold} tells us that $S_t$ is not $k$-learnable (recall that the number of states here is $\binom{n}{t}$), proving part~(1) of this theorem. On the other hand, if
    \begin{equation}
        \left(\binom{n}{t}-k+1\right)\sum_{r=0}^{n}\binom{t}{r}\binom{n-t}{r}\gamma^{2r} \leq \binom{n}{t}
    \end{equation}
    then
    \begin{equation}
        \norm{G}_{\textup{F}}^2 \leq \frac{\binom{n}{t}^2}{\binom{n}{t}-k+1},
    \end{equation}
    so Theorem~\ref{thm:main} tells us that $S_t$ is $k$-learnable, proving part~(2) of this theorem.
\end{proof}

In particular, Theorem~\ref{thm:multi-anomaly} tells us that the smallest $k$ for which the multiset $S_t$ from Equation~\eqref{eq:multi-anomaly_st} is $k$-learnable satisfies
\begin{equation}
	\ceil{\sum_{r=0}^{n}\binom{t}{r}\binom{n-t}{r}\gamma^r} \leq k \leq \binom{n}{t} + 1 - \floor{\frac{\binom{n}{t}}{\sum_{r=0}^{n}\binom{t}{r}\binom{n-t}{r}\gamma^{2r}}}.
\end{equation}

\subsection{Changepoint detection}\label{sec:changepoint}

Once again, let $\ket{\phi}$ and $\ket{\psi}$ be pure states and let $n$ be a positive integer. Consider the states of the form $\ket{\phi}^{\otimes(i-1)} \otimes \ket{\psi}^{\otimes(n-i)}$ for some $i \in \{1,2,\ldots,n\}$ (i.e., the global states that locally equal $\ket{\phi}$ until they change to $\ket{\psi}$ at some point). There are $n$ such states, each of which is $(n-1)$-partite (if $i = 1$ then the state equals $\ket{\psi}$ on each party, and if $i = n$ then it equals $\ket{\phi}$ on each party).

\begin{theorem}\label{thm:changepoint}
	Let $n \geq 2$ be an integer and let $\gamma := \abs{\braket{\phi}{\psi}} < 1$. Then for each integer $1 \leq k \leq n$, the multiset of states
    \begin{equation}
        S := \left\{ \ket{\phi}^{\otimes(i-1)} \otimes \ket{\psi}^{\otimes(n-i)} : i \in \{1, 2, \ldots, n\} \right\}
    \end{equation}
    satisfies the following:
	\begin{enumerate}
		\item[(1)] If $2\gamma((n-1)-n\gamma+\gamma^n)/(1-\gamma)^2 > n(k-1)$ then $S$ is not $k$-learnable.
		\item[(2)] If $2\gamma^2\left((n-1)-n\gamma^2+\gamma^{2n}\right)/(1-\gamma^2)^2 \leq n(k-1)/(n-k+1)$ then $S$ is $k$-learnable.
	\end{enumerate}
\end{theorem}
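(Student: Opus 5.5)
The plan is the same as for Theorem~\ref{thm:multi-anomaly}: compute the Gram matrix $G$ of $S$ explicitly, then feed its entrywise $\ell_1$ norm into Theorem~\ref{thm:m_excl_threshold} and its Frobenius norm into Theorem~\ref{thm:main}.

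First I will compute the entries of $G$. Index the states by $i\in\{1,\ldots,n\}$ and write $\ket{s_i} := \ket{\phi}^{\otimes(i-1)}\otimes\ket{\psi}^{\otimes(n-i)}$. For $i<j$, the states $\ket{s_i}$ and $\ket{s_j}$ agree on the first $i-1$ factors (both $\ket{\phi}$) and on the last $n-j$ factors (both $\ket{\psi}$). On the $j-i$ factors in positions $i,\ldots,j-1$, the state $\ket{s_i}$ has $\ket{\psi}$ and $\ket{s_j}$ has $\ket{\phi}$. Hence
\begin{equation}
G_{i,j}=\braket{s_i}{s_j}=\braket{\psi}{\phi}^{\,j-i},
\qquad\text{so}\qquad \abs{G_{i,j}}=\gamma^{\abs{i-j}} .
\end{equation}
So $\abs{G}$ is a Kac--Murdock--Szeg\H{o} Toeplitz matrix.

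Next I will sum over diagonals: there are $n-m$ pairs $(i,j)$ with $j-i=m$. This gives
\begin{equation}
\sum_{i,j}\abs{G_{i,j}} = n + 2\sum_{m=1}^{n-1}(n-m)\gamma^m,
\qquad
\norm{G}_{\textup{F}}^2 = n + 2\sum_{m=1}^{n-1}(n-m)\gamma^{2m}.
\end{equation}
The key computational step is the closed form
\begin{equation}
\sum_{m=1}^{n-1}(n-m)x^m=\frac{x\bigl((n-1)-nx+x^n\bigr)}{(1-x)^2}, \qquad x\neq 1 .
\end{equation}
I will verify this by writing $\sum_{m}(n-m)x^m = n\sum_m x^m - \sum_m m x^m$ and applying the standard geometric-series formulas. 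It can be double-checked by multiplying through by $(1-x)^2$ and comparing coefficients, or by checking $n=2$, where both sides equal $x$. The hypothesis $\gamma<1$ guarantees that $1-\gamma$ and $1-\gamma^2$ are nonzero.

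With these formulas, part (1) is immediate. Theorem~\ref{thm:m_excl_threshold} says $S$ is not $k$-learnable when $n+2\gamma((n-1)-n\gamma+\gamma^n)/(1-\gamma)^2 > nk$, and subtracting $n$ from both sides gives exactly the stated condition.

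For part (2), Theorem~\ref{thm:main} says $S$ is $k$-learnable when
\begin{equation}
n+\frac{2\gamma^2\bigl((n-1)-n\gamma^2+\gamma^{2n}\bigr)}{(1-\gamma^2)^2}\le \frac{n^2}{n-k+1}.
\end{equation}
Since
\begin{equation}
\frac{n^2}{n-k+1}-n=\frac{n(k-1)}{n-k+1},
\end{equation}
this is exactly the stated hypothesis.

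There is no real obstacle. The only points needing care are the phase bookkeeping in $G_{i,j}$ (only absolute values matter, so the phase $\braket{\psi}{\phi}^{j-i}$ is harmless) and getting the closed-form sum right.
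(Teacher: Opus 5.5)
Your proposal is correct and matches the paper's proof essentially step for step. Both arguments compute $\abs{G_{i,j}}=\gamma^{\abs{i-j}}$, sum over diagonals using the arithmetico--geometric closed form, and then apply Theorem~\ref{thm:m_excl_threshold} for part (1) and Theorem~\ref{thm:main} for part (2); you also spell out the identity $n^2/(n-k+1)-n=n(k-1)/(n-k+1)$, which the paper leaves implicit.
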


\begin{proof}
	Let $G$ be the Gram matrix of $S$. If $i < j$ then the $i$-th and $j$-th members of $S$ differ precisely in the $j-i$ different tensor factors $i,i+1,\ldots,j-1$, so
	\begin{equation}
		\abs{G_{i,j}} = \abs{\braket{\phi}{\psi}}^{j-i} = \gamma^{j-i}.
	\end{equation}
	If $i > j$ then we similarly have $\abs{G_{i,j}} = \gamma^{i-j}$, so $\abs{G_{i,j}} = \gamma^{\abs{j-i}}$ for all $i$ and $j$.
    
    For each $r \in \{1, 2, \ldots, n-1\}$, there are $2(n-r)$ ordered pairs $(i,j) \in \{1, 2, \ldots, n\}^2$ with $\abs{j-i} = r$. It follows that
	\begin{align} 
        \sum_{i,j} |G_{i,j}| 
        & = n + 2\sum_{r=1}^{n-1}(n-r)\gamma^r = n + \frac{2\gamma\left((n-1)-n\gamma+\gamma^n\right)}{(1-\gamma)^2},\\
		\norm{G}_{\textup{F}}^2 & = n + 2\sum_{r=1}^{n-1}(n-r)\gamma^{2r} = n + \frac{2\gamma^2\left((n-1)-n\gamma^2+\gamma^{2n}\right)}{(1-\gamma^2)^2},
	\end{align}
    where the closed-form expressions for the sum follow from the arithmetico--geometric formula (valid whenever $x \neq 1$):
	\begin{equation}
		\sum_{r=1}^{n-1}(n-r)x^r = \frac{x\left((n-1)-nx+x^n\right)}{(1-x)^2}.
	\end{equation}

    Parts~(1) and~(2) of the theorem now follow from Theorems~\ref{thm:m_excl_threshold} and~\ref{thm:main}, respectively.
\end{proof}

\section{Typical behaviour of \texorpdfstring{$k$}{k}-learnability}\label{sec:typical} 

We now explore when one should expect multisets of randomly generated pure states to be $k$-learnable or not. Roughly speaking, there is a trade-off between the ambient dimension $d$ and the number of states $n$: the more states are packed into smaller dimensions, the harder it is to $k$-learn those states.

\begin{proposition}\label{prop:random_gram_not_klearn}
    Let $n \geq 2$, $k \in \{1, 2, \ldots, n-1\}$, $0 < \delta < 1$, and let $S = \{ \ket{\psi_1}, \ket{\psi_2}, \ldots,\ket{\psi_n} \} \subset \C^d$ be independent Haar-random pure states. If
    \begin{equation}
        d < \frac{\pi}{4}\left( \frac{k-1}{n-1} + \sqrt{\frac{\ln(1/\delta)}{n-1}} \right)^{-2}
    \end{equation}
    then these states are not $k$-learnable with probability at least $1-\delta$. In particular, there is a constant $C_{\delta}> 0$ such that if $d \leq C_{\delta} \min\{ n, (n/k)^2 \}$ then these states are not $k$-learnable with probability at least $1-\delta$.
\end{proposition}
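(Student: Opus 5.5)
We will use Theorem~\ref{thm:m_excl_threshold}: it suffices to show that, with probability at least $1-\delta$,
\begin{equation}
\sum_{i\neq j}\abs{\braket{\psi_i}{\psi_j}} > n(k-1).
\end{equation}
Write $X := \frac{1}{n(n-1)}\sum_{i\neq j}\abs{\braket{\psi_i}{\psi_j}}$, the average off-diagonal overlap magnitude. The goal becomes $X > \frac{k-1}{n-1}$.

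\textbf{Expectation.} For independent Haar-random $\ket{\psi_i},\ket{\psi_j}\in\C^d$, the quantity $\abs{\braket{\psi_i}{\psi_j}}^2$ has the Beta$(1,d-1)$ distribution. Hence
\begin{equation}
\mathbb{E}\abs{\braket{\psi_i}{\psi_j}} = \frac{\Gamma(d)\Gamma(3/2)}{\Gamma(d+1/2)}.
\end{equation}
By Gautschi's inequality this is at least $\frac{\sqrt{\pi}}{2}\cdot\frac{1}{\sqrt{d}}$, i.e.\ $\mathbb{E}X \geq \sqrt{\pi/(4d)}$. The hypothesis on $d$ is exactly
\begin{equation}
\sqrt{\pi/(4d)} > \frac{k-1}{n-1} + \sqrt{\frac{\ln(1/\delta)}{n-1}}.
\end{equation}
So it remains to show that $X \geq \mathbb{E}X - \sqrt{\ln(1/\delta)/(n-1)}$ with probability at least $1-\delta$.

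\textbf{Concentration.} This is the main obstacle. $X$ is a U-statistic in the independent vectors $\ket{\psi_1},\ldots,\ket{\psi_n}$, with kernel $h(\psi,\phi)=\abs{\braket{\psi}{\phi}}\in[0,1]$. We would apply Hoeffding's inequality for U-statistics of order $2$. Via the averaging over $\lfloor n/2\rfloor$ disjoint pairs, it gives
\begin{equation}
\Pr[X \le \mathbb{E}X - t] \le \exp\bigl(-2\lfloor n/2\rfloor t^2\bigr).
\end{equation}
Since $2\lfloor n/2\rfloor \ge n-1$, choosing $t=\sqrt{\ln(1/\delta)/(n-1)}$ makes the bound at most $\delta$, which matches the stated constant.

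If this route were not clean enough, the alternative is McDiarmid's bounded-differences inequality. Changing one vector alters $X$ by at most $2/n$, so the bound is $\exp(-2t^2/(n\cdot 4/n^2)) = \exp(-nt^2/2)$. This gives the same form with a slightly worse constant.

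\textbf{``In particular'' clause.} Use $(a+b)^2\le 2a^2+2b^2$. The hypothesis holds whenever
\begin{equation}
d < \frac{\pi}{8}\min\Bigl\{\Bigl(\tfrac{n-1}{k-1}\Bigr)^2,\ \tfrac{n-1}{\ln(1/\delta)}\Bigr\}
\end{equation}
up to a factor of $2$ in the minimum. For $n\ge2$ this is implied by $d\le C_\delta\min\{n,(n/k)^2\}$ for a suitable $C_\delta$, using $(n-1)/(k-1)\ge n/k$ and $n-1\ge n/2$. The $k=1$ case is automatic, since then the first term vanishes.
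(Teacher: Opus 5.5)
Your proposal is correct and follows essentially the same route as the paper: reduce to Theorem~\ref{thm:m_excl_threshold}, lower-bound $\mathbb{E}\abs{\braket{\psi_i}{\psi_j}}$ by $\sqrt{\pi}/(2\sqrt{d})$ via Gautschi's inequality, and apply Hoeffding's inequality for order-$2$ U-statistics, using $2\floor{n/2}\ge n-1$. The differences are cosmetic. You take $t=\sqrt{\ln(1/\delta)/(n-1)}$ where the paper takes $t=\mu-\frac{k-1}{n-1}$, and you justify the expectation via the Beta$(1,d-1)$ law and work out the ``in particular'' clause explicitly, which the paper leaves implicit (the constant there is $\pi/16$, since $(a+b)^2\le 4\max\{a^2,b^2\}$).
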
 

\begin{proof} 
    Let $G$ be the Gram matrix of the vectors $\ket{\psi_1}, \ket{\psi_2}, \ldots, \ket{\psi_n}$. 
    Then
    \begin{equation}
        \sum_{i,j} |\braket{\psi_i}{\psi_j}| 
        = n+\sum_{i\neq j}\abs{\braket{\psi_i}{\psi_j}}.
    \end{equation}
    Let
    \begin{equation}
        U := \frac{1}{n(n-1)}\sum_{i\neq j}\abs{\braket{\psi_i}{\psi_j}},
    \end{equation}
    so that $\sum_{i,j} |\braket{\psi_i}{\psi_j}| > nk$ is equivalent to $U > (k-1)/(n-1)$.
    
    Recall that Hoeffding's inequality (see Inequality~(5.7) in \cite{hoeffding1963probability}, for example) says that for every $t > 0$ we have
    \begin{align}\label{ineq:hoeffding}
        \Pr(U \leq \mu - t) \leq \exp(-2\floor{n/2} t^2),
    \end{align}
    where $\mu = \mathbb{E}[U]$ (in the terminology and notation of \cite{hoeffding1963probability}, $U$ is a ``$U$-statistic'' associated with the function $g(\ket{\psi_i},\ket{\psi_j}) = \abs{\braket{\psi_i}{\psi_j}}$).

    Fix $i\neq j$. By unitary invariance of the Haar measure, all off-diagonal entries of $G$ have the same expectation, so
    \begin{equation}
        \mu = \mathbb{E}[U] = \mathbb E\!\left[\abs{\braket{\psi_i}{\psi_j}}\right] = \frac{\sqrt{\pi}\Gamma(d)}{2\Gamma(d+1/2)} \geq \frac{\sqrt{\pi}}{2\sqrt{d}},
    \end{equation}
    where this value comes from \cite{zyczkowski2000truncations} and the
    inequality follows from Gautschi's inequality. We note that, by our
    hypothesis on $d$, we have
    \begin{equation}
        \mu \geq \frac{\sqrt{\pi}}{2\sqrt{d}} > \frac{k-1}{n-1} + \sqrt{\frac{\ln(1/\delta)}{n-1}} \geq \frac{k-1}{n-1},
    \end{equation}
    so we can choose
    \begin{equation}
        t = \mu - \frac{k-1}{n-1} > 0
    \end{equation}
    in Hoeffding's inequality.
    
    With these values of $\mu$ and $t$, Inequality~\eqref{ineq:hoeffding} tells us that
    \begin{equation}
        \begin{aligned}
            \Pr\left(U \leq \frac{k-1}{n-1}\right) & \leq \exp\left(-2\floor{\frac{n}{2}} \left(\mu - \frac{k-1}{n-1}\right)^2\right), \\
            & \leq \exp\left(-(n-1) \left(\frac{\sqrt{\pi}}{2\sqrt{d}} - \frac{k-1}{n-1}\right)^2\right) \\
            & \leq \delta,
        \end{aligned}
    \end{equation}
    with the final inequality following from our hypothesis on $d$. It follows that
    \begin{equation}
        \Pr\left(\sum_{i,j} |\braket{\psi_i}{\psi_j}| > nk\right) = \Pr\left(U > \frac{k-1}{n-1}\right) \geq 1-\delta
    \end{equation}
    which (thanks to Theorem~\ref{thm:m_excl_threshold}) completes the proof.
\end{proof} 

The previous proposition showed that if $d$ is small (compared to $n$) then we do not expect randomly generated states to be $k$-learnable. The following proposition shows that, by contrast, if $d$ is large then we \emph{do} expect randomly generated states to be $k$-learnable.

\begin{proposition}\label{prop:random_gram_markov}
    Let $n \geq 2$, $k \in \{2,\ldots,n\}$, $0 < \delta < 1$, and let $S = \{ \ket{\psi_1}, \ket{\psi_2}, \ldots, \ket{\psi_n} \} \subset \C^d$ be independent Haar-random pure states. 
    If
    \begin{equation}
        d \geq \frac{(n-1)(n-k+1)}{\delta(k-1)}
    \end{equation}
    then these states are $k$-learnable with probability at least $1-\delta$. In particular, there is a constant $C_{\delta}>0$ such that if $d\geq C_{\delta} (n(n - k + 1) / k)$ then these states are $k$-learnable with probability at least $1-\delta$.
\end{proposition}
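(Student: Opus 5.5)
The plan is to apply the Frobenius-norm criterion of Theorem~\ref{thm:main} and control the random frame potential with Markov's inequality. Let $G$ be the Gram matrix of the Haar-random states. Then
\begin{equation*}
\norm{G}_{\textup{F}}^2 = n + \sum_{i\neq j}\abs{\braket{\psi_i}{\psi_j}}^2 .
\end{equation*}
By Theorem~\ref{thm:main}, $S$ is $k$-learnable whenever $\norm{G}_{\textup{F}}^2 \leq n^2/(n-k+1)$. Subtracting $n$, this is equivalent to
\begin{equation*}
X := \sum_{i\neq j}\abs{\braket{\psi_i}{\psi_j}}^2 \leq \frac{n^2}{n-k+1} - n = \frac{n(k-1)}{n-k+1}.
\end{equation*}
So it suffices to show that $\Pr\bigl(X > n(k-1)/(n-k+1)\bigr) \leq \delta$.

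Next compute $\mathbb{E}[X]$. For independent Haar-random unit vectors in $\C^d$, condition on $\ket{\psi_i}$. By unitary invariance,
\begin{equation*}
\mathbb{E}\abs{\braket{\psi_i}{\psi_j}}^2 = \tr\bigl(\ketbra{\psi_i}{\psi_i}\,\mathbb{E}[\ketbra{\psi_j}{\psi_j}]\bigr) = \tr\bigl(\ketbra{\psi_i}{\psi_i}\,\I/d\bigr) = \frac{1}{d}.
\end{equation*}
Summing over the $n(n-1)$ ordered pairs gives $\mathbb{E}[X] = n(n-1)/d$.

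Since $X \geq 0$, Markov's inequality gives
\begin{equation*}
\Pr\Bigl(X > \tfrac{n(k-1)}{n-k+1}\Bigr) \leq \frac{n(n-1)/d}{n(k-1)/(n-k+1)} = \frac{(n-1)(n-k+1)}{d(k-1)}.
\end{equation*}
This is at most $\delta$ exactly when $d \geq (n-1)(n-k+1)/(\delta(k-1))$, which is the hypothesis. The requirement $k\ge 2$ makes the threshold positive.

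For the ``in particular'' statement, take $k\ge 2$. Then $(n-1)/(k-1) \leq 2n/k$, since $k(n-1)\le 2n(k-1)$ holds for $k\ge2$. Hence $C_\delta = 2/\delta$ works. No step is a real obstacle; the only care needed is the rearrangement of the Frobenius threshold and the moment $\mathbb{E}\abs{\braket{\psi_i}{\psi_j}}^2 = 1/d$.
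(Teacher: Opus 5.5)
Your proposal is correct and follows essentially the same route as the paper's proof: reduce via Theorem~\ref{thm:main} to bounding $\sum_{i\neq j}\abs{\braket{\psi_i}{\psi_j}}^2$ by $n(k-1)/(n-k+1)$, compute its expectation as $n(n-1)/d$ by unitary invariance, and apply Markov's inequality. The differences are minor. You obtain the moment $1/d$ from $\mathbb{E}[\ketbra{\psi_j}{\psi_j}] = \I/d$, where the paper uses a coordinate-symmetry argument. You also give the explicit constant $C_\delta = 2/\delta$, which the paper leaves implicit.
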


\begin{proof} 
    Let $G$ be the Gram matrix of the vectors $\ket{\psi_1}, \ket{\psi_2}, \ldots, \ket{\psi_n}$. Then
    \begin{equation}
        \norm{G}_{\textup{F}}^2 = n + \sum_{i\neq j}\abs{\braket{\psi_i}{\psi_j}}^2.
    \end{equation}
    Let
    \begin{equation}
        Y := \sum_{i\neq j}\abs{\braket{\psi_i}{\psi_j}}^2.
    \end{equation}
    By Theorem~\ref{thm:main} it suffices to prove that $Y \leq n(k-1)/(n-k+1)$ with probability at least $1-\delta$.
    
    Fix $i \neq j$. Since $\ket{\psi_i}$ and $\ket{\psi_j}$ are independent and Haar-random, we can write
    \begin{equation}
        \mathbb E\!\left[\abs{\braket{\psi_i}{\psi_j}}^2\right] = \int \int \abs{\braket{\x}{\y}}^2 d\mu(\y)d\mu(\x),
    \end{equation}
    where $\mu$ denotes Haar measure on the unit sphere in $\C^d$. Fix a unit vector $\x \in \C^d$. There exists a unitary matrix $U$ such that $U\x = \e_1$ (the first standard basis vector). Using unitary invariance of the Haar measure,
    \begin{equation}
        \int \abs{\braket{\x}{\y}}^2 d\mu(\y) = \int \abs{\braket{U\x}{U\y}}^2 d\mu(\y) = \int \abs{\braket{\e_1}{\y}}^2 d\mu(\y).
    \end{equation}
    We thus see that the inner integral is independent of $\x$, so
    \begin{equation}
        \mathbb E\!\left[\abs{\braket{\psi_i}{\psi_j}}^2\right] = \int \abs{\braket{\e_1}{\y}}^2 d\mu(\y).
    \end{equation}
    Write $\y = (\alpha_1, \alpha_2, \ldots, \alpha_d)$. Then $\sum_{\ell=1}^d \abs{\alpha_\ell}^2 = 1$, so (by unitary invariance of the Haar measure) all coordinates have the same expectation, which gives
    \begin{equation}
        \mathbb E\!\left[\abs{\braket{\psi_i}{\psi_j}}^2\right]=\mathbb E\!\left[\abs{\alpha_1}^2\right]=\frac{1}{d}.
    \end{equation}
    It follows that $\mathbb E[Y] = n(n-1)/d$. By Markov's inequality,
    \begin{equation}
        \Pr\!\left(Y > \frac{n(k-1)}{n-k+1}\right) \leq \frac{\mathbb E[Y]}{n(k-1)/(n-k+1)} = \frac{(n-1)(n-k+1)}{d(k-1)}.
    \end{equation}
    The stated bound on $d$ ensures that this quantity is at most $\delta$, yielding the result.
\end{proof}

\section*{Acknowledgements}

J.S. thanks Richard Cleve and Ankith Mohan for helpful discussions. The authors used Fable-5 and GPT-5.6 Sol to prove Lemmas~\ref{lem:boundary} and~\ref{lem:main2}, which are central ingredients for the main result. Large language models were also used for proof ideas and editorial polishing throughout the paper. The authors take full responsibility for the correctness, exposition, and attribution in the final manuscript. 
N.J.\ acknowledges support from NSERC Discovery Grant number RGPIN-2022-04098. 
B.L.\ acknowledges support from NSERC Discovery Grant number RGPIN-2026-05413. 
J.S.\ acknowledges support from the National Science Foundation Award Number 2542721. 

\bibliographystyle{alpha}
\bibliography{references}

\appendix 

\section{Deferred proofs} 

\subsection{Proof of Proposition~\ref{prop:row_regular_2_learning}}  
\label{app:prop4} 

\begin{proof}
Let $G$ be the Gram matrix of $S$. Suppose first that $R_i\leq 2$ for all $i$. Since $G_{i,i}=1$ for every $i$,
\begin{equation}
\sum_{j\neq i}\abs{G_{i,j}}
=
R_i-1
\leq 1.
\end{equation}
Thus $G$ is diagonally dominant.  We can write
\begin{equation}
G
=
\sum_{i=1}^n
\left(
1-\sum_{j\neq i}\abs{G_{i,j}}
\right)\e_i \e_i^*
+
\sum_{1\leq i<j\leq n} B^{i,j},
\end{equation}
where $B^{i,j}$ is supported only on rows and columns $i,j$ and has principal
$2\times 2$ block
\begin{equation}
\begin{pmatrix}
\abs{G_{i,j}} & G_{i,j}\\
\overline{G_{i,j}} & \abs{G_{i,j}}
\end{pmatrix}.
\end{equation}
Note that each $B^{i,j}$ is positive semidefinite. Hence $G$ is a sum of positive
semidefinite matrices supported on at most two coordinates, so it is
$2$-incoherent.  By Proposition~\ref{prop:klearnable}, $S$ is
$2$-learnable.

Now suppose that $S$ is $2$-learnable.  By
Theorem~\ref{thm:m_excl_threshold},
\begin{equation}
\sum_{i,j=1}^n \abs{G_{i,j}}
\leq 2n,
\end{equation}
or equivalently, $\sum_{i=1}^n R_i \leq 2n$. 
\end{proof}

\subsection{Proof of Theorem~\ref{thm:main}} \label{app:main_proof}

Define $r := n-k+1$. For $\d \in \R^n$, let $e_j(\d)$ denote the $j$-th elementary symmetric polynomial of the coordinates of $\d$:
\begin{equation}
    e_j(\d) := \sum_{1 \leq i_1 < \cdots < i_j \leq n} \left(\prod_{\ell=1}^j d_{i_{\ell}}\right),
\end{equation}
with the convention that $e_0(\d) = 1$. Define the cone
\begin{equation}
    \Gamma_k^{(n)} := \left\{ \d \in \R^n : e_1(\d) \geq 0,\ldots, e_k(\d) \geq 0 \right\},
\end{equation}
and the compact convex set
\begin{equation}
    \Delta_r^{(n)} := \left\{ \p \in \R^n_{\ge 0} : \sum_{j=1}^n p_j = 1, \norm{\p}_2 \le \frac{1}{\sqrt{r}} \right\}.
\end{equation}
It is known \cite{garding1959inequality} that $\Gamma_k^{(n)}$ is a closed convex cone and that $\R^n_{\geq 0} \subseteq \Gamma_k^{(n)}$. We use $\left(\Gamma_k^{(n)}\right)^\circ$ to denote the dual cone of $\Gamma_k^{(n)}$:
\begin{equation}
    \left(\Gamma_k^{(n)}\right)^\circ := \left\{ \y \in \R^n : \x \cdot \y \geq 0 \ \text{for all} \ \x \in \Gamma_k^{(n)} \right\}.
\end{equation}

Before proving the result, we need a pair of helper lemmas.

\begin{lemma}\label{lem:boundary}
    Let $n \geq 4$ be an integer and let $k \in \{3,4,\ldots,n-1\}$. Define $r := n-k+1$ and suppose $\p \in \R^n$ satisfies $\norm{\p}_2^2 = 1/r$, $\sum_{i=1}^n p_i = 1$, and $p_i > 0$ for all $i \in \{1,2,\ldots,n\}$. Then $\one-r\p \notin \Gamma_k^{(n)}$.
\end{lemma}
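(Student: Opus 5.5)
The plan is to argue by contradiction. Set $\d := \one - r\p$ and suppose $\d \in \Gamma_k^{(n)}$.

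\textbf{Step 1: the first two power sums.} From $\sum_i p_i = 1$ and $\norm{\p}_2^2 = 1/r$ we get
\begin{equation}
    \sum_i d_i = n - r = k-1, \qquad \sum_i d_i^2 = n - 2r + r^2\norm{\p}_2^2 = n-r = k-1 .
\end{equation}
Hence $e_1(\d) = k-1$ and $e_2(\d) = \tfrac12\big((k-1)^2-(k-1)\big) = \binom{k-1}{2}$. Let $\x$ be the $0/1$ vector with $k-1$ ones and $r$ zeros. Then $e_1(\x)=e_1(\d)$ and $e_2(\x)=e_2(\d)$. Moreover $e_k(\x)=0$, so $\x$ lies on $\partial\Gamma_k^{(n)}$, and it is exactly the $\vareps \to 0$ flat-spectrum extremal configuration of Theorem~\ref{thm:main_tight}. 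The positivity $p_i>0$ translates to $d_i<1$ for every $i$. So the lemma says: a vector with the same first two power sums as this boundary point, and all coordinates strictly below $1$, must leave $\Gamma_k^{(n)}$.

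\textbf{Step 2: show $e_k(\d)<0$ or $e_{k-1}(\d)<0$ by an extremal argument.} I would consider the problem of maximising $e_k$ (or $e_{k-1}$) over the set $\{\d : e_1(\d)=k-1,\ e_2(\d)=\binom{k-1}{2},\ d_i\le 1\ \forall i\}$. By Lagrange/KKT, at a critical point the coordinates not at the bound $1$ satisfy a relation of the form $\partial_i e_k = \alpha + \beta d_i$. Since $\partial_i e_k = e_{k-1}(\d_{\hat i})$ is affine in $d_i$ (with $d_i$ fixed on the left), this forces such coordinates to take at most two distinct values. The resulting candidates have $a$ coordinates equal to $1$ and the remaining coordinates taking at most two values $u,v$, and the two power-sum constraints cut these down to a one-parameter family. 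On this family I would evaluate $e_k$ explicitly via the generating function $\prod_i(1+td_i)$. The goal is to show the maximum is $0$, attained only when $a=k-1$ and the rest equal $0$, which is the point $\x$. Since $\d$ has no coordinate equal to $1$, strictness would give $e_k(\d)<0$.

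\textbf{Step 3: fallback route.} An alternative is to use the restriction property of Gårding cones: if $\d\in\Gamma_k^{(n)}$ then $\d_{\hat i}\in\Gamma_{k-1}^{(n-1)}$ for every $i$. One would then induct down on $k$, using that the base case $k=2$ is precisely where the two power-sum constraints sit on the boundary of $\Gamma_2$. The hypothesis $k\ge3$ is what makes $e_2(\d)=\binom{k-1}{2}>0$ automatic, so the obstruction must genuinely come from $e_3,\ldots,e_k$. This restriction should be usable to pass to subvectors; I checked that removing the $k-2$ largest $d_i$ and applying only the $\Gamma_2$ condition is not enough by itself, so it has to be combined with Step 2.

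\textbf{Main obstacle.} The hard part is Step 2: showing that, under fixed $e_1,e_2$ and the constraint $d_i\le1$, the maximum of $e_k$ is $0$ and is attained only at the $0/1$ configuration $\x$. This requires a careful case analysis of the two-valued critical configurations. I expect the sign of $e_k$ on the one-parameter family to come down to a polynomial inequality in $(u,v,a)$, and proving it uniformly in $n$ and $k$ is the crux. If that analysis is intractable, I would instead try a Newton–Maclaurin-type inequality valid on $\Gamma_k^{(n)}$ that has an equality case, aiming to show that any $\d\in\Gamma_k^{(n)}$ matching the power sums of $\x$ must coincide with $\x$ up to permutation, which contradicts $d_i<1$.
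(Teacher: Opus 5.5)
There is a genuine gap. Your Step 2 cannot be completed, because the extremal claim it rests on is false once $k$ is moderately large. Take $k\ge 21$, $n=k+2$ (so $r=3$), and
\[
\d=\bigl(\underbrace{1,\ldots,1}_{k-2},\,-\tfrac16,\,-\tfrac16,\,\tfrac56,\,\tfrac12\bigr).
\]
Then $e_1(\d)=\sum_i d_i^2=k-1$, and every $d_i\le 1$. Let $\w$ be the last four coordinates. They satisfy $e_1(\w)=\sum_i w_i^2=1$, so $e_2(\w)=0$. Moreover $e_3(\w)=-\tfrac{11}{108}$ and $e_4(\w)=\tfrac{5}{432}$. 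Hence
\[
e_k(\d)=\binom{k-2}{2}e_4(\w)+(k-2)\,e_3(\w)=\frac{(k-2)(5k-103)}{864}>0 .
\]
This point is $\one-3\p$ with $\p=(0,\ldots,0,\tfrac{7}{18},\tfrac{7}{18},\tfrac{1}{18},\tfrac16)$, where $\sum_i p_i=1$ and $\norm{\p}_2^2=\tfrac13$. Now lower the ones to $1-\vareps$ and re-solve the two constraints for the last two coordinates, which is possible since $\tfrac56\neq\tfrac12$. For small $\vareps$ this keeps $e_k>0$ and gives a full-support $\p$ satisfying every hypothesis of the lemma.

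So $e_k(\d)<0$ does not follow from the values of $e_1,e_2$ together with $d_i<1$. The lemma still holds at such points, but only because some $e_j$ with $j<k$ is negative. Switching to $e_{k-1}$ does not help either, since $e_{k-1}(\x)=1$ at your boundary point.

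Independently, your KKT reduction is unjustified. Stationarity reads $e_{k-1}(\d_{\widehat i})=\alpha+\beta d_i$. But $e_{k-1}(\d_{\widehat i})=\sum_{j=0}^{k-1}(-d_i)^j e_{k-1-j}(\d)$ is a polynomial of degree $k-1$ in $d_i$, with coefficients common to all $i$. That only bounds the number of distinct free values by $k-1$, not by $2$.

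The missing idea is a way for the hypothesis $\d\in\Gamma_k^{(n)}$ itself to produce the sign, and this is what the paper does. The two constraints give $d_i^2-d_i=-rp_id_i$ and $\sum_i p_id_i=0$. Insert these into $ke_k(\d)=\sum_i d_i e_{k-1}(\d_{\widehat i})$, expanded twice with $e_j(\d_{\widehat i})=e_j(\d)-d_ie_{j-1}(\d_{\widehat i})$ (using $e_1(\d)=k-1$ at the first step). This yields
\[
ke_k(\d)=-r\sum_{i=1}^n p_i\,d_i^2\,e_{k-3}(\d_{\widehat i}).
\]
The coordinate-deletion property you mention in Step 3 gives $\d_{\widehat i}\in\Gamma_{k-1}^{(n-1)}$. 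Hence $e_{k-3}(\d_{\widehat i})\ge 0$ and $e_k(\d)\le 0$. Membership in $\Gamma_k^{(n)}$ then forces $e_k(\d)=0$, and because every $p_i>0$, also $e_{k-3}(\d_{\widehat i})=0$ whenever $d_i\ne 0$.

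Full support also means at most $r-1$ coordinates of $\d$ vanish, since these are exactly the $p_i=1/r$. So the vector $\y$ of nonzero coordinates has $m\ge k$ entries, with $e_{k-3}(\y_{\widehat i})=0$ for all $i$. With $\ell=k-3$, this gives $e_\ell(\y)=y_ie_{\ell-1}(\y_{\widehat i})$ for each $i$. Summing over $i$ gives $(m-\ell)e_\ell(\y)=0$, so $e_\ell(\y)=0$, and since $y_i\neq0$, $e_{\ell-1}(\y_{\widehat i})=0$ for all $i$. Descending in $\ell$ reaches $e_0=0$, a contradiction.

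Your Step 3 has the right tool. But without an identity of this kind, converting coordinate deletion into a sign condition on $e_k$, the proposal does not prove the lemma.
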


\begin{proof}
    Define $\x := \one-r\p$ and suppose (for the purpose of establishing a contradiction) that $\x \in \Gamma_k^{(n)}$. Notice that $e_1(\x) = \sum_{i=1}^n x_i = n-r = k-1$ and
    \begin{equation}
        \sum_{i=1}^n x_i^2 = \sum_{i=1}^n (1-r p_i)^2 = \sum_{i=1}^n (1-2rp_i+r^2 p_i^2) = n-2r+r = n-r = k-1.
    \end{equation}
    Our first goal is to prove that
    \begin{equation}
        k e_k(\x) = -r\sum_{i=1}^n p_i x_i^2e_{k-3}(\x_{\widehat{i}}),
    \end{equation}
    where $\x_{\widehat{i}} \in \R^{n-1}$ is the vector $\x$ with the $i$-th entry removed. To this end, recall the standard identities
    \begin{equation}
        ke_k(\x) = \sum_{i=1}^n x_i e_{k-1}(\x_{\widehat{i}}) \quad \text{and} \quad e_{k-1}(\x_{\widehat{i}}) = e_{k-1}(\x) - x_i e_{k-2}(\x_{\widehat{i}})
    \end{equation}
    for all $i \in \{1, 2, \ldots, n\}$. From here, we have
    \begin{align}
        ke_k(\x) & = \sum_{i=1}^n x_i e_{k-1}(\x_{\widehat{i}})\\
        & = \sum_{i=1}^n x_i\left(e_{k-1}(\x) - x_i e_{k-2}(\x_{\widehat{i}})\right)\\
        & = (k-1) e_{k-1}(\x) - \sum_{i=1}^n x_i^2 e_{k-2}(\x_{\widehat{i}})\\
        & = -\sum_{i=1}^n (x_i^2-x_i) e_{k-2}(\x_{\widehat{i}}).
    \end{align}
    Note that $x_i^2 - x_i = -rp_i x_i$, so the above string of equations becomes
    \begin{equation}
        ke_k(\x) = r\sum_{i=1}^n p_i x_i e_{k-2}(\x_{\widehat{i}}).
    \end{equation}
    Using the recursion $e_{k-2}(\x_{\widehat{i}}) = e_{k-2}(\x) - x_i e_{k-3}(\x_{\widehat{i}})$, we obtain
    \begin{equation}\label{eq:recurse_kmin3}
        ke_k(\x) = r\sum_{i=1}^n p_ix_i\left(e_{k-2}(\x) - x_i e_{k-3}(\x_{\widehat{i}})\right).
    \end{equation}
    Now observe that
    \begin{equation}
        \sum_{i=1}^n p_i x_i = \sum_{i=1}^n p_i(1 - rp_i) = \sum_{i=1}^n p_i - r\sum_{i=1}^n p_i^2 = 1 - \frac{r}{r} = 0.
    \end{equation}
    Substituting into Equation~\eqref{eq:recurse_kmin3} (finally) gives
    \begin{equation}\label{eq:kek_final}
        ke_k(\x) = -r\sum_{i=1}^n p_ix_i^2 e_{k-3}(\x_{\widehat{i}}),
    \end{equation}
    as desired.

    By \cite[Lemma~2]{johnston2022absolutely}, deleting one coordinate from a vector in $\Gamma_k^{(n)}$ produces a vector in $\Gamma_{k-1}^{(n-1)}$. It follows that every term on the right-hand side of~\eqref{eq:kek_final} (ignoring the leading negative sign) is nonnegative. It follows that $e_k(\x) \leq 0$. Since $\x \in \Gamma_k^{(n)}$, we must actually have $e_k(\x) = 0$, from which Equation~\eqref{eq:kek_final} implies
    \begin{equation}\label{eq:ek_vanish}
        e_{k-3}(\x_{\widehat{i}})=0 \quad \text{whenever} \quad x_i \neq 0.
    \end{equation}
    
    Now let $\y \in \R^m$ be the vector of nonzero entries of $\x$. Every zero entry of $\x$ must be in the same location as an entry of $\p$ that is equal to $1/r$. Since $\p$ has full support, $\x$ has at most $r-1$ zero entries, so $m \geq n-r+1 = k$. Equation~\eqref{eq:ek_vanish} therefore says that $e_{k-3}(\y_{\widehat{i}}) = 0$ for all $i \in \{1, 2, \ldots, m\}$.
    
    This is impossible. Indeed, it is not possible that $y_i \neq 0$ for all $i \in \{1, 2, \ldots, m\}$ and there exists $0 \leq \ell \leq m-1$ for which $e_\ell(\y_{\widehat{i}}) = 0$ for all $i \in \{1, 2, \ldots, m\}$. The $\ell = 0$ case is not possible simply because $e_0(\y_{\widehat{i}}) = 1$. For $\ell \geq 1$, recall again the identity
    \begin{equation}
        e_\ell(\y) = e_{\ell}(\y_{\widehat{i}}) + y_i e_{\ell-1}(\y_{\widehat{i}})
    \end{equation}
    for all $i \in \{1, 2, \ldots, m\}$. Since $e_\ell(\y_{\widehat{i}}) = 0$ by assumption, this becomes $e_\ell(\y) = y_i e_{\ell-1}(\y_{\widehat{i}})$. Summing over all $i$ gives $m e_\ell(\y) = \ell e_\ell(\y)$, so $e_\ell(\y) = 0$ and then $e_{\ell-1}(\y_{\widehat{i}})=0$ for every $i$. Descending in $\ell$ eventually gives $e_0(\y_{\widehat{i}}) = 0$, contradicting $e_0(\y_{\widehat{i}}) = 1$ and proving the lemma.
\end{proof}

\begin{lemma}\label{lem:main2}
    Suppose $n \geq 2$ is an integer and $k \in \{1, 2, \ldots, n-1\}$, and let $r := n-k+1$. Then $\Delta_r^{(n)} \subseteq (\Gamma_k^{(n)})^\circ$.
\end{lemma}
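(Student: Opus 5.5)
We need to show: for every $\p \in \Delta_r^{(n)}$ and every $\x \in \Gamma_k^{(n)}$, we have $\p\cdot\x \ge 0$. Since $\Gamma_k^{(n)}$ is a closed convex cone, an equivalent formulation is that $\min_{\p\in\Delta_r^{(n)}} \p\cdot \x \geq 0$ for each fixed $\x\in\Gamma_k^{(n)}$. Alternatively, we can fix $\p$ and minimize over $\x$.

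\textbf{Easy cases.} For $k=1$ we have $r=n$, and $\norm{\p}_2^2\le 1/n$ together with $\sum_j p_j = 1$ forces $\p=\one/n$. Then $\p\cdot\x=e_1(\x)/n\ge 0$.

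For $k=2$, I would use the known description of $\Gamma_2^{(n)}$: the conditions $e_1(\x)\ge0$ and $e_2(\x)\ge0$ say $e_1(\x)^2\ge \norm{\x}_2^2$ with $e_1(\x)\geq 0$, i.e.\ $\x\cdot\one \ge \norm{\x}_2$. This is a circular cone around $\one$ of half-angle $\theta$ with $\cos\theta = 1/\sqrt{n}$. Its dual is the circular cone with complementary half-angle $\pi/2-\theta$. We then check that every $\p\in\Delta_{n-1}^{(n)}$ satisfies $\p\cdot\one/\norm{\p}_2 \ge \sqrt{n-1}$, which is exactly $\norm{\p}_2^2\le 1/(n-1)$. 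So $k\le 2$ is done by direct computation.

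\textbf{Main case $k\ge3$.} I argue by contradiction. Suppose some $\p\in\Delta_r^{(n)}$ has $\p\cdot\x<0$ for some $\x\in\Gamma_k^{(n)}$. The set $\Delta_r^{(n)}\cap(\Gamma_k^{(n)})^\circ$ contains $\one/n$, since $\one\cdot\x=e_1(\x)\ge0$, and it is convex. Moving along the segment from $\one/n$ to $\p$ and perturbing slightly, I obtain a point $\p$ on the relative boundary of $(\Gamma_k^{(n)})^\circ$ inside $\Delta_r^{(n)}$. The plan is to reduce to an extremal configuration and recognize that configuration via Lemma~\ref{lem:boundary}.

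I optimize the problem
\begin{equation}
\min\{\p\cdot\x : \p\in\Delta_r^{(n)}\}
\end{equation}
for a fixed $\x$ with $\x\cdot\one = k-1$ on the boundary of $\Gamma_k^{(n)}$ (a normalization and scaling argument). If the minimum is negative, then by Lagrange/KKT conditions the minimizer satisfies the following. On its support $\p$ has the form $p_i = \alpha-\beta x_i$ with $\beta>0$, and the constraint $\norm{\p}_2^2=1/r$ is active. Conversely, choosing $\x$ to extremize against a fixed $\p$, the face structure suggests the extremal $\x$ is proportional to $\one - r\p$, up to the affine normalization. The key identity to aim for is that when the minimum value is $0$ with $\p$ of full support, the direction $\one - r\p$ must lie in $\Gamma_k^{(n)}$. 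Lemma~\ref{lem:boundary} rules this out. So the minimum can only be attained with zeros in $\p$, i.e.\ on a lower-dimensional face.

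Faces where $\p$ has zero coordinates are handled by induction on $n$. Deleting the coordinates where $p_i=0$ reduces $n$ while keeping $r$ fixed. By \cite[Lemma~2]{johnston2022absolutely}, deleting a coordinate maps $\Gamma_k^{(n)}$ into $\Gamma_{k-1}^{(n-1)}$, and $n-1-(k-1)+1=r$. So the restricted $\p$ lies in $\Delta_r^{(n-1)}$, and the inductive hypothesis gives $\p\cdot\x\ge0$. The base of the induction is the cases $k\le 2$ handled above, together with $k$ approaching $n$ (if $r=1$, then $\p$ is a vertex $\e_i$ and $x_i\geq 0$ follows from the deletion lemma iterated).

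\textbf{Main obstacle.} The hardest step is the KKT/contact argument. It must show that a boundary contact between $\Delta_r^{(n)}$ and the dual cone at a full-support $\p$ with $\norm{\p}_2^2 = 1/r$ forces $\one-r\p\in\Gamma_k^{(n)}$. This needs a careful identification of the normal cone of $\Delta_r^{(n)}$ at such $\p$, which is spanned by $\one$ and $\p$. From this, any $\x$ achieving $\p\cdot\x=0$ minimally must be a combination $a\p - b\one$ lying in $\Gamma_k^{(n)}$. After rescaling, that combination is $\one-r\p$, or reduces to it using $e_1\geq 0$ to fix the signs. I would try this first, handling separately the possibility that the combination is a nonnegative vector, which is trivial.
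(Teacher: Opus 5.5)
Your architecture matches the paper's proof. For fixed $\x\in\Gamma_k^{(n)}$ you minimize $\p\cdot\x$ over $\Delta_r^{(n)}$. Minimizers with zero coordinates are handled by deleting those coordinates and inducting; this lowers $n$ and $k$ together and keeps $r$, and Cauchy--Schwarz leaves at least $r$ coordinates, so the reduced $k$ is at least $1$. Full-support minimizers are ruled out via KKT and Lemma~\ref{lem:boundary}. Your $k=2$ base case, via self-duality of the circular cone $\{\x:\x\cdot\one\ge\norm{\x}_2\}$, is a correct self-contained substitute for the paper's citation of \cite[Theorem~6]{johnston2022absolutely}. The $r=1$ base case you mention never arises, since $k\le n-1$ forces $r\ge 2$ at every stage.

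The gap is in the full-support step, which you leave as a plan, and the rescaling claim there is wrong. With a negative minimum, KKT gives $\x=\tau\one-s\p$ with $s>0$ and $\norm{\p}_2^2=1/r$. Then $e_1(\x)=n\tau-s\ge 0$ gives $\tau>0$, and $\p\cdot\x=\tau-s/r<0$ gives $\beta:=s/\tau>r$. Rescaling therefore yields $\one-\beta\p\in\Gamma_k^{(n)}$ with $\beta>r$, \emph{not} $\one-r\p$. ``Using $e_1\ge0$ to fix the signs'' controls only the sign of $\tau$, not the value of $\beta$.

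Your key identity treats only the case where the minimum is exactly $0$, and neither of your reductions reaches that case. The segment from $\one/n$ to $\p$ yields a contact point with $\norm{\p}_2^2<1/r$, where Lemma~\ref{lem:boundary} says nothing. Normalizing $\x\cdot\one=k-1$ with $\x\in\partial\Gamma_k^{(n)}$ does not make the minimum vanish either.

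The missing step is the one-line convexity argument the paper uses: since $\one\in\Gamma_k^{(n)}$ and the cone is convex,
\[
\one-r\p=\Bigl(1-\tfrac{r}{\beta}\Bigr)\one+\tfrac{r}{\beta}\bigl(\one-\beta\p\bigr)\in\Gamma_k^{(n)}.
\]
Equivalently, replace $\x$ by $\x+t\one$ with $t=-\min_{\Delta_r^{(n)}}\p\cdot\x>0$. This stays in $\Gamma_k^{(n)}$ and leaves the minimizer unchanged, because $\p\cdot\one=1$ on $\Delta_r^{(n)}$. It makes the minimum exactly $0$, which is the case where your rescaling does give $\one-r\p$. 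With this added, Lemma~\ref{lem:boundary} yields the contradiction and your argument coincides with the paper's.
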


\begin{proof}
    We need to show that if $\d \in \Gamma_k^{(n)}$ and $\p \in \Delta_r^{(n)}$ then $\d \cdot \p \geq 0$. We use induction on $k$. For $k = 1$, the Cauchy--Schwarz inequality forces every $\p \in \Delta_n^{(n)}$ to equal $\one/n$, so $\d \cdot \p = e_1(\d)/n \geq 0$. For $k = 2$, the result is exactly the description of $(\Gamma_2^{(n)})^\circ$ from \cite[Theorem~6]{johnston2022absolutely}.
    
    Now let $k \geq 3$, fix $\d \in \Gamma_k^{(n)}$, and let $\p^\star$ minimize $\d \cdot \p$ over $\Delta_r^{(n)}$. Put $I := \supp(\p^\star)$ and $m := \abs{I}$. The Cauchy--Schwarz inequality gives $m\geq r$. We now split into two cases depending on the size of $m$:

    \noindent\underline{\textbf{Case 1:} $m < n$.} Let $\d_I$ and $\p_I^\star$ denote the restrictions of $\d$ and $\p^\star$, respectively, to the entries indexed by $I$. Define $k^\prime := m-r+1 = k-(n-m)$. Iterating \cite[Lemma~2]{johnston2022absolutely} gives $\d_I \in \Gamma_{k^\prime}^{(m)}$, while $\p_I^\star \in \Delta_r^{(m)}$ and $r = m - k^\prime + 1 \geq 2$, so $m > k^\prime$. Since $1 \leq k^\prime < k$, the inductive hypothesis yields
    \begin{equation}
        \d \cdot \p^\star = \d_I \cdot \p_I^\star \geq 0.
    \end{equation}
    
    \noindent\underline{\textbf{Case 2:} $m = n$.} Then $p_i^\star > 0$ for all $i$, so the nonnegativity constraints are inactive. Since $\one/n$ is strictly feasible, the Karush--Kuhn--Tucker conditions give numbers $\tau \in \R$ and $s \geq 0$ such that
    \begin{equation}
        \d = \tau\one - s\p^\star.
    \end{equation}
    If $s = 0$ then $\d = \tau\one$, so $n\tau = e_1(\d) \geq 0$ implies $\d \cdot \p^\star = \tau \geq 0$.
    
    On the other hand, if $s > 0$ then $n\tau - s = e_1(\d) \geq 0$ implies $\tau > 0$. Assume (for the sake of establishing a contradiction) that $\d \cdot \p^\star < 0$. Then the fact that $\norm{\p^\star}_2^2 = 1/r$ implies
    \begin{equation}
        0 > \d \cdot \p^\star = \tau - \frac{s}{r},
    \end{equation}
    so $\beta := s/\tau > r$. Since $\one - \beta \p^\star = \d/\tau \in \Gamma_k^{(n)}$ and $\one \in \Gamma_k^{(n)}$, convexity of $\Gamma_k^{(n)}$ gives
    \begin{equation}
        \one - r\p^\star = \left(1-\frac r\beta\right)\one + \frac r\beta\left(\one-\beta\p^\star\right) \in \Gamma_k^{(n)},
    \end{equation}
    contradicting Lemma~\ref{lem:boundary}. 
    It follows that $\d \cdot \p^\star \geq 0$, which completes the induction and the proof.
\end{proof}

\begin{proof}[Proof of Theorem~\ref{thm:main}]
    The theorem is trivial when $n \leq 2$ or $k \in \{1,n\}$, so we assume throughout the proof that $n \geq 3$ and $k \in \{2,3,\ldots,n-1\}$. Let $G$ be a Gram matrix with $\norm{G}_{\textup{F}} \leq n/\sqrt{r}$ and define the matrix $X := G/n$. Then $X$ is a density matrix (positive semidefinite with trace $1$) and
    \begin{equation}
        \norm{X}_{\textup{F}}^2 = \frac{\norm{G}_{\textup{F}}^2}{n^2} \leq \frac{1}{r}.
    \end{equation}
    It follows that $\p := \lambda(X) \in \Delta_r^{(n)}$. By Lemma~\ref{lem:main2}, this implies $\p \in \left(\Gamma_k^{(n)}\right)^\circ$. It follows from \cite[Theorem~3]{johnston2022absolutely} that $X$  is $k$-incoherent, so the same is true of $G$. By Proposition~\ref{prop:klearnable}, we conclude that every multiset of states with Gram matrix $G$ is $k$-learnable, which completes the proof.
\end{proof}

\subsection{Proof of Theorem~\ref{thm:m_excl_threshold}} 

\begin{proof} We prove the contrapositive: if $S$ is $k$-learnable then
\begin{equation}\label{eq:thm2_contra}
    \sum_{i,j = 1}^{n} \abs{\braket{\psi_i}{\psi_j}} \leq nk.
\end{equation}
To this end, let $G$ be the Gram matrix of $S$ and define
\begin{equation}
    Y = k\I - E,
\end{equation}
where, for all $1 \leq i,j \leq n$ (even if $i = j$), the $(i,j)$-entry of $E$ is the complex number with modulus $1$ and phase equal to that of $\braket{\psi_i}{\psi_j}$ (if $\braket{\psi_i}{\psi_j} = 0$ then set $E_{i,j} = 1$). Each $k \times k$ principal submatrix of $Y$ is diagonally dominant and thus positive semidefinite. It follows that, in the terminology of \cite{blekherman2022hyperbolic,johnston2022absolutely}, $Y$ is ``$k$-locally positive semidefinite'' and thus in the dual cone of the set of $k$-incoherent matrices. Since $G$ is $k$-incoherent by \cref{prop:klearnable}, this tells us that $\tr(YG) \geq 0$. Direct calculation then shows that
\begin{align}
    0 \leq \tr(YG) & = \tr\left( \left(k\I - E\right)G \right) = n(k-1) - \sum_{\substack{i,j=1 \\ i\neq j}}^{n}\abs{\braket{\psi_i}{\psi_j}},
\end{align}
which is equivalent to Inequality~\eqref{eq:thm2_contra}. This completes the proof.
\end{proof}

\subsection{Proof of Proposition~\ref{prop:stab_gram}}\label{app:stab_gram}

\begin{proof}
Fix $c \geq 1$ and define
\begin{equation}
    h_{d,m} := \norm{G^{(c)}_{d,m}}_{\textup{F}}^2 = \sum_{\psi,\phi\in\mathrm{Stab}_{d,m}}\abs{\braket{\psi}{\phi}}^{2c}.
\end{equation}
Let $\mathcal{F}_c(d,m)$ be the $c$-th \emph{frame potential} of the uniform distribution over $\mathrm{Stab}_{d,m}$, defined by
\begin{equation}\label{eq:Hm_framepotential}
    h_{d,m} = s_{d,m}^2\mathcal{F}_c(d,m).
\end{equation}
By~\cite[Theorem 2]{kueng2015qubit}, we have
\begin{equation}\label{eq:Fc_rec}
    \frac{\mathcal{F}_c(d,m+1)}{\mathcal{F}_c(d,m)} = \frac{d^{m-(c-2)}+1}{d(d^{m+1}+1)}.
\end{equation}
Note that
\begin{equation}\label{eq:Nm_rec}
    \frac{s_{d,m+1}}{s_{d,m}} = d(d^{m+1}+1).
\end{equation}
Using \eqref{eq:Hm_framepotential}, \eqref{eq:Fc_rec} and \eqref{eq:Nm_rec} we compute
\begin{equation}
    \begin{aligned}
    \frac{h_{d,m+1}}{h_{d,m}}
    &= \left(\frac{s_{d,m+1}}{s_{d,m}}\right)^2
    \frac{\mathcal{F}_c(d,m+1)}{\mathcal{F}_c(d,m)}\\
    &=\left(d(d^{m+1}+1)\right)^2
    \frac{d^{m-(c-2)}+1}{d(d^{m+1}+1)}
    =
    d(d^{m+1}+1)\left(d^{m-(c-2)}+1\right).
    \end{aligned}
\end{equation}
Define $t_{d,m} := h_{d,m} / s_{d,m}$ so that
\begin{equation}\label{eq:Km_rec}
    t_{d,m+1}
    =
    \frac{h_{d,m+1}}{s_{d,m+1}}
    =
    \frac{h_{d,m} d(d^{m+1}+1)\left(d^{m-(c-2)}+1\right)}
    {s_{d,m} d(d^{m+1}+1)}
    =
    t_{d,m}\left(d^{m-(c-2)}+1\right).
\end{equation}
    
For the base case $m=1$, \cite[Theorem 2]{kueng2015qubit} gives
\begin{equation}
    \mathcal{F}_c(d,1)=\frac{d^{2-c}+1}{d(d+1)}.
\end{equation}
Since $s_{d,1}=d(d+1)$, we obtain
\begin{equation}
    h_{d,1}
    =
    s_{d,1}^2 \mathcal{F}_c(d,1)
    =
    (d(d+1))^2 \frac{d^{2-c}+1}{d(d+1)}
    =
    d(d+1)\left(d^{2-c}+1\right),
\end{equation}
hence
\begin{equation}
    t_{d,1}=\frac{h_{d,1}}{s_{d,1}}=d^{2-c}+1.
\end{equation}

For $m \geq 2$, iterating \eqref{eq:Km_rec} yields
\begin{equation}
    t_{d,m} = \left(d^{2-c}+1\right)\prod_{j=1}^{m-1}\left(1+d^{j-(c-2)}\right) = \prod_{j=0}^{m-1}\left(1+d^{j-(c-2)}\right),
\end{equation}
which gives
\begin{equation}
    h_{d,m} = s_{d,m} t_{d,m} = s_{d,m} \prod_{j=0}^{m-1}\left(1+d^{j-(c-2)}\right),
\end{equation}
as claimed.
\end{proof}

\subsection{Proof of Corollary~\ref{cor:stab_qudit_exact}}\label{app:stab_qudit_exact}

\begin{proof}
By Theorem~\ref{thm:stab_exact_2_learning}, $\mathrm{Stab}_{d,m}$ is $(c,2)$-learnable if and only if
\begin{equation}
    \prod_{j=0}^{m-1}\left(1+d^{j+2-c/2}\right) \leq 2.
\end{equation}

First suppose $m = 1$. For $c=4$ the left-hand side is $1+d^{2-4/2}=2$, whereas for $c=3$ it is $1+d^{1/2} > 2$, so $c_2(\mathrm{Stab}_{d,1}) = 4$ for every prime $d$.

Now suppose $m \geq 2$. For $c=2m+2$,
\begin{equation}
    \prod_{j=0}^{m-1}\left(1+d^{j+2-c/2}\right) = \prod_{r=1}^{m}\left(1+d^{1-r}\right).
\end{equation}
The factor corresponding to $r=1$ equals $2$, while all remaining factors are strictly greater than $1$. It follows that
\begin{equation}
    \prod_{r=1}^{m}\left(1+d^{1-r}\right) > 2,
\end{equation}
and therefore $c_2(\mathrm{Stab}_{d,m}) \geq 2m+3$ for all $m \geq 2$.

We now split into cases to consider the different possible local dimensions $d$.

\noindent\underline{\textbf{Case 1:} $d\geq5$.}
Set $c=2m+3$. Then
\begin{equation}
\prod_{j=0}^{m-1}
\left(1+d^{j+2-c/2}\right)
=
\prod_{r=1}^{m}
\left(1+d^{-r+1/2}\right).
\end{equation}
Using the inequality $\ln(1+x) \leq x$, we see that
\begin{equation}
    \ln\left(\prod_{r=1}^{m}\left(1+d^{-r+1/2}\right)\right) \leq \sum_{r=1}^{\infty}d^{-r+1/2} = \frac{\sqrt d}{d-1} \leq \frac{\sqrt5}{4} < \ln(2).
\end{equation}
It follows that
\begin{equation}
\prod_{r=1}^{m}
\left(1+d^{-r+1/2}\right)<2,
\end{equation}
so $2m+3$ copies suffice. Since we already showed that $c_2(\mathrm{Stab}_{d,m}) \geq 2m+3$, we conclude that $c_2(\mathrm{Stab}_{d,m}) = 2m+3$ whenever $d \geq 5$ and $m \geq 2$.

\noindent\underline{\textbf{Case 2:} $d=3$.}
For $c=2m+3$, we have
\begin{equation}
    \prod_{j=0}^{m-1}\left(1+3^{j+2-c/2}\right) = \prod_{r=1}^{m}\left(1+3^{-r+1/2}\right).
\end{equation}
When $m=2$, this product becomes
\begin{equation}
    \left(1+\frac{1}{\sqrt3}\right)\left(1+\frac{1}{3\sqrt3}\right) < 2,
\end{equation}
so $7$ copies suffice. Since we already showed that $c_2(\mathrm{Stab}_{3,2}) \geq 2m+3 = 7$, we conclude that $c_2(\mathrm{Stab}_{3,2}) = 7$.

On the other hand, when $m \geq 3$, this product can be bounded by
\begin{equation}
    \prod_{r=1}^{m}\left(1+3^{-r+1/2}\right) \geq \left(1+\frac{1}{\sqrt3}\right)\left(1+\frac{1}{3\sqrt3}\right)\left(1+\frac{1}{9\sqrt3}\right) > 2,
\end{equation}
so $2m+3$ copies do not suffice. On the other hand, for $c=2m+4$,
\begin{equation}
    \prod_{j=0}^{m-1}\left(1+3^{j+2-c/2}\right) = \prod_{r=1}^{m}(1+3^{-r}),
\end{equation}
and
\begin{equation}
    \ln\left(\prod_{r=1}^{m}(1+3^{-r})\right) \leq \sum_{r=1}^{\infty}3^{-r} = \frac{1}{2} < \ln(2),
\end{equation}
so $2m+4$ copies suffice. It follows that $c_2(\mathrm{Stab}_{3,m}) = 2m+4$ whenever $m \geq 3$.

\noindent\underline{\textbf{Case 3:} $d=2$.}
For $m=2$, direct substitution into Theorem~\ref{thm:stab_exact_2_learning} gives $c_2(\mathrm{Stab}_{2,2}) = 8$. Now suppose $m \geq 3$. If $c=2m+4$, then
\begin{equation}
    \prod_{j=0}^{m-1}\left(1+2^{j+2-c/2}\right) = \prod_{r=1}^{m}(1+2^{-r}) \geq \left(1+\frac12\right)\left(1+\frac14\right)\left(1+\frac18\right) > 2.
\end{equation}
It follows that $2m+4$ copies do not suffice.

For $c = 2m+5$,
\begin{equation}
    \prod_{j=0}^{m-1}\left(1+2^{j+2-c/2}\right) = \prod_{r=1}^{m} \left(1+2^{-r-1/2}\right) < \prod_{r=1}^{\infty}\left(1+2^{-r-1/2}\right).
\end{equation}
Again using $\ln(1+x)\leq x$, we obtain
\begin{equation}
    \prod_{r=1}^{\infty}\left(1+2^{-r-1/2}\right) = \left(1+2^{-3/2}\right)\prod_{r=2}^{\infty}\left(1+2^{-r-1/2}\right) < 2,
\end{equation}
so $2m+5$ copies suffice. It follows that $c_2(\mathrm{Stab}_{2,m}) = 2m+5$ whenever $m \geq 3$, which completes the proof. 
\end{proof}
\end{document}